\DeclareMathOperator{\KS}{\textit{C}\hspace*{0.5pt}}
\let\leq=\leqslant
\let\le=\leqslant
\let\geq=\geqslant
\let\ge=\geqslant
\DeclareMathOperator{\bin}{\mathrm{bin}}
\DeclareMathOperator{\KP}{\mathit{K}}
\newtheorem{theorem}{Theorem}
\renewcommand*\l@section[2]{%
  \ifnum \c@tocdepth >\z@
    \addpenalty\@secpenalty
    \setlength\@tempdima{1.5em}%
    \begingroup
      \parindent \z@ \rightskip \@pnumwidth
      \parfillskip -\@pnumwidth
      \leavevmode 
      \advance\leftskip\@tempdima
      \hskip -\leftskip
      #1\nobreak\leaders\hbox to 0.5em{\hss.\hss}\hfil \nobreak\hb@xt@\@pnumwidth{\hss #2}\par
    \endgroup
  \fi}
\begin{document}
\title{Around Kolmogorov complexity:\\ basic notions and results}
\author{Alexander Shen\thanks{LIRMM (Montpellier), on leave from IITP RAS (Moscow), 
\texttt{alexander.shen@lirmm.fr}}}
\date{}

\maketitle

\begin{abstract}
Algorithmic information theory studies description complexity and randomness and is now a well known field of theoretical computer science and mathematical logic. There are several textbooks and monographs devoted to this theory~\cite{li-vitanyi,calude,nies,downey-hirschfeldt,vus} where one can find the detailed exposition of many difficult results as well as historical references. However, it seems that a short survey of its basic notions and main results relating these notions to each other, is missing. 
This report attempts to fill this gap and covers the basic notions of algorithmic information theory: Kolmogorov complexity (plain, conditional, prefix), Solomonoff universal a priori probability, notions of randomness (Martin-L\"of randomness, Mises--Church randomness), effective Hausdorff dimension. We prove their basic properties (symmetry of information, connection between a priori probability and prefix complexity, criterion of randomness in terms of complexity, complexity characterization for effective dimension) and show some applications (incompressibility method in computational complexity theory, incompleteness theorems). It is based on the lecture notes of a course at Uppsala University given by the author~\cite{uppsala-notes}.
\end{abstract}

\section{Compressing information}

Everybody is familiar with compressing/decompressing
programs such as \verb|zip|, \verb|gzip|, \verb|compress|,
\verb|arj|, etc. A compressing program can be applied to an arbitrary
file and produces a ``compressed version'' of that file. If we
are lucky, the compressed version is much shorter than the
original one. However, no information is lost: the decompression
program can be applied to the compressed version to get the
original file.\footnote{Imagine that a software company advertises a compressing program and claims that this program can compress \emph{every} sufficiently long file to at most $90\%$ of its original
size. Why wouldn't you buy this program?}

How is it possible? A compression program tries to find
some regularities in a file which allow it to give a
description of the file than is shorter than the file itself;
the decompression program  then reconstructs the file using this
description.

\section{Kolmogorov complexity}

The Kolmogorov complexity
may be roughly described as ``the
compressed size''. However, there are some differences. Instead of files (byte sequences) we consider bit strings (sequences of zeros and 
ones). The principal difference is that in the framework of
Kolmogorov complexity we have no \emph{compression} algorithm
and deal only with \emph{decompression} algorithms.

Here is the definition. Let $U$ be an algorithm whose inputs
and outputs are binary strings. Using $U$ as a decompression
algorithm, we define the complexity $\KS_U(x)$ of a binary string
$x$ with respect to $U$ as follows:
 $$
    \KS_U(x) = \min\{ |y| \colon U(y)=x\}
 $$
(here $|y|$ denotes the length of a binary string $y$). In other
words, the complexity of $x$ is defined as the length of the
shortest description of $x$ if each binary string $y$ is
considered as a description of $U(y)$

Let us stress that $U(y)$ may be defined not for all $y$, and
there are no restrictions on the time necessary to compute
$U(y)$. Let us mention also that for some $U$ and $x$ the set of descriptions in the definition of $\KS_U$ may be empty; we assume that
$\min(\emptyset)= +\infty$ in this case.

\section{Optimal decompression algorithm}

The definition of $\KS_U$ depends on $U$. For the trivial
decompression algorithm $U(y)=y$ we have $\KS_U(x)=|x|$.
One can try to find better decompression algorithms, where
``better'' means ``giving smaller complexities''. However,
the number of short descriptions is limited: There is less than
$2^n$ strings of length less than $n$. Therefore, for every fixed
decompression algorithm the number of strings whose complexity is
less than $n$ does not exceed $2^n-1$. One may conclude that
there is no ``optimal'' decompression algorithm because we can
assign short descriptions to some string only taking them away
from other strings. However, Kolmogorov made a simple but crucial
observation: there is \emph{asymptotically optimal} decompression
algorithm.

\textbf{Definition}
An algorithm $U$ is asymptotically not worse than an algorithm $V$
if $\KS_U(x) \le \KS_V(x) + C$ for come constant $C$ and for all $x$.

\begin{theorem}
There exists an decompression
algorithm $U$ which is asymptotically not worse
than any other algorithm $V$.
\end{theorem}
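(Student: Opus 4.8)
The plan is to construct a single universal decompression algorithm $U$ that simulates every other algorithm $V$, paying only a constant overhead that depends on $V$ but not on $x$. The key idea is that algorithms can be encoded as binary strings (programs), so $U$ can take a description $y$ that packages together both a code for some algorithm $V$ and an input $p$ meant for $V$; then $U$ runs $V$ on $p$ and outputs the result.

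First I would fix an effective encoding of algorithms by binary strings, writing $\hat{V}$ for a string that encodes the algorithm $V$. The subtle point is that when $U$ reads its input $y$, it must be able to recover \emph{where} the code $\hat V$ ends and the payload $p$ begins, since we are simply concatenating strings with no separator symbol available in a binary alphabet. So the second step is to prepend to $y$ a self-delimiting prefix that lets $U$ parse off the program. A convenient device is to double each bit of $\hat V$ and mark the end with the block $01$: that is, set $y = \bar{\hat V}\,01\,p$ where $\bar w$ denotes $w$ with every bit repeated. Reading two bits at a time, $U$ knows it has reached the end of the program exactly when it sees $01$, and everything after that is $p$. I would then define $U(y)$ to decode $\hat V$, recover $V$, and output $V(p)$.

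With $U$ so defined, the core estimate is immediate. For any algorithm $V$ and any string $x$, take a shortest $V$-description $p$ of $x$, so $|p| = \KS_V(x)$ and $V(p) = x$. Then $y = \bar{\hat V}\,01\,p$ satisfies $U(y) = V(p) = x$, hence
$$
\KS_U(x) \le |y| = 2|\hat V| + 2 + |p| = \KS_V(x) + C_V,
$$
where $C_V = 2|\hat V| + 2$ depends only on $V$. This is exactly the inequality $\KS_U(x) \le \KS_V(x) + C$ with constant $C = C_V$, so $U$ is asymptotically not worse than $V$, and since $V$ was arbitrary, $U$ is the desired optimal algorithm.

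I do not expect a genuinely hard step here; the content is entirely in getting the self-delimiting encoding right so that $U$ can unambiguously split its input into program and payload. The only thing to be careful about is that the overhead is additive and $x$-independent, which the prefix-doubling trick guarantees because the length of $\bar{\hat V}\,01$ depends solely on the fixed code of $V$. One should also note that $U(y)$ may be undefined (if the parsed program $V$ diverges on $p$, or if $y$ is not a well-formed encoding), which is harmless since the definition of $\KS_U$ already allows the decompressor to be partial.
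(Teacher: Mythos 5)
Your proposal is correct and takes essentially the same approach as the paper: a universal decompressor that parses a self-delimiting program code off the front of its input and simulates that program on the rest, giving $\KS_U(x)\le \KS_V(x)+C_V$ with a constant depending only on $V$'s code length. The only difference is cosmetic --- the paper assumes the programming language itself has self-delimiting syntax, whereas you enforce self-delimitation explicitly via the bit-doubling-plus-$01$ trick (a device the paper itself uses later for pair complexity and prefix complexity).
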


Such an algorithm is called \emph{asymptotically optimal}.
The complexity $\KS_U$ with respect to an asymptotically optimal
$U$ is called \emph{Kolmogorov complexity}. The Kolmogorov
complexity of a string $x$ is denoted by $\KS(x)$. (We assume that
some asymptotically optimal decompression algorithm is fixed.)
Of course, Kolmogorov complexity is defined only up to $O(1)$
additive term.

The complexity $\KS(x)$ can be interpreted as the amount of
information in $x$ or the ``compressed size'' of $x$.

\section[The construction of the optimal decompression algorithm]{The construction of optimal\\ decompression algorithm}

The idea of the construction is used in the so-called
``self-extracting archives''. Assume that we want to send a
compressed version of some file to our friend, but we are not
sure he has the decompression program. What to do? Of course, we
can send the program together with the compressed file. Or we
can append the compressed file to the end of the program
and get an executable file which will be applied to its own
contents during the execution (assuming that the operating system allows to append arbitrary data to the end of an executable file).

The same simple trick is used to construct an universal
decompression algorithm~$U$. Having an input string $x$, the
algorithm $U$ starts scanning $x$ from left to right until it
founds some program $p$ written in a fixed programming language
(say, Pascal) where programs are self-delimiting, so the end of
the program can be determined uniquely. Then the rest of $x$ is
used as an input for $p$, and $U(x)$ is defined as the output of
$p$.

Why $U$ is (asymptotically) optimal? Consider  another
decompression algorithm $V$. Let $v$ be a (Pascal) program which
implements $V$. Then
 $$
        \KS_U (x) \le \KS_V(x) + |v|
 $$
for arbitrary string $x$. Indeed, if $y$ is a $V$-compressed version of
$x$ (i.e., $V(y)=x$), then $vy$ is $U$-compressed version of $x$
(i.e., $U(vy)=x$) and is only $|v|$ bits longer.

\section{Basic properties of Kolmogorov complexity}

\begin{theorem}
\hspace*{ 1em}\par

\textbf{\textup{(a)}} $\KS(x) \le |x| + O(1)$.

\textbf{\textup{(b)}} The number of $x$ such that $\KS(x)\le n$ is equal to $2^n$ up to a bounded factor separated from zero.

\textbf{\textup{(c)}} For every computable function $f$ there exists a
   constant $c$ such that $$\KS(f(x))\le \KS(x)+c$$ \textup(for every $x$ such that
   $f(x)$ is defined\textup).

\textbf{\textup{(d)}} Assume that for each natural $n$ a finite set $V_n$ containing no more than $2^n$ elements is given. Assume that the relation $x\in V_n$ is enumerable, i.e., there is an algorithm which produces the (possibly infinite) list of all pairs $\langle x,n\rangle$ such that $x\in V_n$. Then there is a constant $c$ such that all elements of $V_n$ have complexity at most $n+c$ (for every $n$).

\textbf{\textup{(e)}} The ``typical'' binary string of length $n$ has complexity close to $n$: there exists a constant $c$ such that for every $n$ more than $99\%$ of all strings of length $n$ have complexity in-between $n-c$ and $n+c$.

\end{theorem}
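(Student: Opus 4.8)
The plan is to derive parts (a), (b), (c), (e) from the optimality of the fixed decompression algorithm $U$ together with elementary counting, reserving the one genuine construction for part (d). For (a) I would compare $U$ with the trivial algorithm $V(y)=y$, for which $\KS_V(x)=|x|$; optimality then gives $\KS(x)\le |x|+C$. Part (b) is counting: a string with $\KS(x)\le n$ has a description of length at most $n$, and there are fewer than $2^{n+1}$ such strings, giving the upper bound; the lower bound uses (a), since each of the $2^{n-C}$ strings of length $n-C$ has complexity at most $n$. Hence the count lies between $2^{-C}$ and $2$ times $2^n$, a bounded factor away from $2^n$ in both directions. For (c) I would form the computable decompression algorithm $W(y)=f(U(y))$: if $y$ is a shortest $U$-description of $x$, then $W(y)=f(x)$, so $\KS_W(f(x))\le\KS(x)$, and optimality of $U$ yields $\KS(f(x))\le\KS(x)+c$.

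Part (d) is where the real idea lies, and I expect it to be the main obstacle. The goal is to convert the bound $|V_n|\le 2^n$ into descriptions of length about $n$ while encoding $n$ itself for free. Running the given enumeration of pairs $\langle x,n\rangle$, I would, for each fixed $n$, list the elements of $V_n$ in order of appearance and assign to the $i$-th of them the $n$-bit binary string representing $i$ (possible since $i<|V_n|\le 2^n$). Define a decompression algorithm $W$ thus: on input $y$, set $n=|y|$, read $y$ as a number $i$, run the enumeration until the $i$-th pair with second component $n$ appears, and output its first component. Then every $x\in V_n$ has a $W$-description of length exactly $n$, so $\KS_W(x)\le n$, and optimality of $U$ gives $\KS(x)\le n+c$. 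The subtle point to get right is that the \emph{length} of the description carries the value of $n$, which is exactly what makes a single constant $c$ work uniformly in $n$.

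Finally, for (e) the upper bound is immediate from (a): all strings of length $n$, hence in particular more than $99\%$ of them, have complexity at most $n+C$. For the lower bound I would count once more: the strings of length $n$ with $\KS(x)<n-c$ number fewer than $2^{n-c}$, a fraction below $2^{-c}$ of all $2^n$ strings. Choosing $c$ with $2^{-c}<1/100$ and also $c\ge C$ forces more than $99\%$ of the strings of length $n$ to have complexity between $n-c$ and $n+c$.
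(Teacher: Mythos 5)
Your proposal is correct and follows essentially the same route as the paper's proof: (a)--(c) and (e) by comparison with the trivial/composed decompressor plus counting of short descriptions, and (d) by allocating the length-$n$ strings, in order of appearance during the enumeration, as descriptions of the elements of $V_n$, so that the length of the description itself encodes $n$. The only difference is cosmetic: you spell out the decoder (interpret the $n$-bit input as an index $i$ into the enumeration restricted to second component $n$), whereas the paper phrases the same construction as ``allocate the first unused string of length $n$''.
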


\begin{proof}
 (a) The asymptotically optimal decompression algorithm
$U$ is not worse that the trivial decompression algorithm $V(y)=y$.

(b) The number of such $x$ does not exceed the number of their
compressed versions, which is limited by the number of all binary
strings of length not exceeding $n$, which is bounded by $2^{n+1}$.
On the other hand, the number of $x$'s such that $K(x)\le n$ is
not less than $2^{n-c}$  (here $c$ is the constant from (a)),
because all strings of length $n-c$ have complexity not exceeding $n$.

(c) Let $U$ be the optimal decompression algorithm used in the
definition of $\KS$. Compare $U$ with decompression algorithm
$V\colon y\mapsto f(U(y))$:
 $$
        \KS_U(f(x)) \le \KS_V(f(x))+O(1) \le \KS_U(x) +O(1)
 $$
(each $U$-compressed version of $x$ is a $V$-compressed version
of $f(x)$).

(d) We allocate strings of length $n$ to be compressed
versions of strings in $V_n$ (when a new element of $V_n$
appears during the enumeration, the first unused string of
length $n$ is allocated). This procedure provides a decompression
algorithm $W$ such that $\KS_W(x)\le n$ for every $x\in V_n$.

(e) According to (a), all strings of length $n$
have complexity not exceeding $n+c$ for some $c$. It remains to
mention that the number of strings whose complexity is less than
$n-c$ does not exceed the number of all their descriptions,
i.e., strings of length less than $n-c$. Therefore, for $c=7$
the fraction of strings having complexity less than $n-c$ among
all the strings of length $n$ does not exceed $1\%$.

\end{proof}

\begin{problems}

\subsection*{Problems}

\leavevmode

1. A decompression algorithm $D$ is chosen in such a way that
   $\KS_D(x)$ is even for every string $x$. Could $D$ be optimal?

2. The same question if $\KS_D(x)$ is a power of $2$ for every $x$.

3. Let $D$ be the optimal decompression algorithm. Does it
   guarantee that $D(D(x))$ is also an optimal decompression
   algorithm?

4. Let $D_1,D_2,\dots$ be a computable sequence of decompression
   algorithms. Prove that
        $
   \KS(x) \leqslant \KS_{D_i}(x)+2\log i +O(1)
        $
   for all $i$ and $x$ (the constant in $O(1)$ does not depend
   on $x$ and $i$).

5.$^*$ Is it true that $\KS(xy)\leq \KS(x)+\KS(y)+O(1)$ for all $x$
   and $y$?
   
\end{problems}   

\section{Algorithmic properties of $\KS$}

\begin{theorem}
The complexity function $\KS$ is not computable; moreover,
every computable lower bound for $\KS$ is bounded from above.
\end{theorem}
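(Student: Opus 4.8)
The plan is to prove the stronger ``moreover'' assertion first and then deduce non-computability from it. Observe that if $\KS$ were computable, then $\KS$ would itself be a computable lower bound for $\KS$; but $\KS$ is unbounded (by part (b)/(e) of the previous theorem there are strings of every length $n$ with complexity about $n$), so we would have an unbounded computable lower bound, contradicting the second claim. Hence it suffices to show that every total computable $g$ with $g(x)\le\KS(x)$ for all $x$ is bounded.

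Suppose, for contradiction, that such a $g$ is unbounded. The idea is a formalization of the Berry paradox (``the least string not describable in few words''). For each natural $m$, run through all binary strings in length-lexicographic order, computing $g$ on each, and let $x_m$ be the first string with $g(x_m)\ge m$. Since $g$ is total and unbounded, this search always halts, so the map $m\mapsto x_m$ is a total computable function.

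By construction $\KS(x_m)\ge g(x_m)\ge m$. On the other hand, $x_m$ is obtained from $m$ by a computable function, so by part (c) of the previous theorem $\KS(x_m)\le\KS(\bin(m))+O(1)$, and by part (a) $\KS(\bin(m))\le|\bin(m)|+O(1)\le\log_2 m+O(1)$. Combining the two estimates gives $m\le\log_2 m+O(1)$, which fails for all sufficiently large $m$ --- the desired contradiction. Therefore $g$ is bounded, and the first statement follows as explained above.

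The main obstacle is conceptual rather than computational: one must be sure that $x_m$ is genuinely computable from $m$, which rests on $g$ being total (so that $g(x)$ is defined on every candidate $x$) and unbounded (so that the search terminates). The delicate feature is the self-referential twist --- a string certified to need at least $m$ bits of description is nonetheless pinned down by the $\log_2 m+O(1)$ bits needed merely to name $m$ --- and once this is set up correctly the length estimate $\KS(\bin(m))\le\log_2 m+O(1)$ is routine.
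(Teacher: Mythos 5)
Your proposal is the same Berry-paradox argument as the paper's proof, and for total computable lower bounds it is correct; the deduction of non-computability from the ``moreover'' clause (since $\KS$ is unbounded, a computable $\KS$ would be an unbounded computable lower bound for itself) also matches the paper's intent. There is, however, one difference of scope that matters. The paper explicitly allows the lower bound $g$ to be a \emph{partial} computable function ($g(x)\le \KS(x)$ wherever $g(x)$ is defined), and its proof accommodates this by computing $g(x)$ \emph{in parallel} on all strings (dovetailing) and taking the first string \emph{discovered} with $g(x)>m$. Your search instead walks through strings in length-lexicographic order and evaluates $g$ on each one sequentially; if $g$ is partial, this search can hang forever on a string where $g$ is undefined, even though an unbounded partial lower bound exists. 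You yourself flag that your construction ``rests on $g$ being total,'' so your proof establishes a weaker statement than the one the paper proves. The repair is small and purely technical: replace the sequential scan by a dovetailed computation of $g$ on all strings, outputting the first string (in order of discovery) whose value exceeds $m$; everything else in your argument, including the estimate $m\le\KS(x_m)\le \log_2 m+O(1)$, goes through unchanged.
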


\begin{proof}
Assume that some partial function $g$ is a computable lower bound for $\KS$, and $g$ is not bounded from above.
Then for every $m$ we can effectively find a string $x$ such that
$\KS(x)>m$ (indeed, we should compute in parallel $g(x)$ for all strings $x$ until we
find a string $x$ such that $g(x)>m$). Now consider the function
$$
   f(m) = \hbox{the first string $x$ such that $g(x)>m$}
$$
Here ``first'' means ``first discovered'' and $m$ is a natural
number written in binary notation; by our assumption, such $x$ always exists, so $f$ is a total computable function. By construction, $\KS(f(m))>m$; on the other hand, $\KS(f(m))\le \KS(m)+O(1)$. But $K(m)\le |m|+O(1)$, so we conclude that $m \le |m|+O(1)$ which is impossible (the left-hand side is a natural number, the right-hand side---the length of its binary representation).
\end{proof}

This proof is a formal version of the well-known Berry paradox about ``the smallest natural number which cannot be defined by twelve English words'' (the quoted sentence defines this number and contains exactly twelve words).

\medskip
The non-computability of $\KS$ implies that any optimal decompression algorithm $U$ is not everywhere defined (otherwise $\KS_U$ would be computable). It
sounds like a paradox: If $U(x)$ is undefined for some $x$ we
can extend $U$ on $x$ and let $U(x)=y$ for some $y$ of large complexity; after that $\KS_U(y)$ becomes smaller (and all other values of $\KS$ do not change). However, it can be done for one $x$ or for finite number of $x$'s but we cannot make $U$ defined everywhere and keep $U$ optimal at the same time.

\section{Complexity and incompleteness}

The argument used in the proof of the last theorem may be used
to obtain an interesting version of G\"odel first incompleteness
theorem. This application of complexity theory was invented and advertised by G.~Chaitin.

Consider a formal theory (like formal arithmetic or formal set
theory). It may be represented as a (non-terminating) algorithm
which generates statements of some fixed formal language;
generated statements are called {\it theorems}. Assume that the
language is rich enough to contain statements saying that ``complexity
of 010100010 is bigger than 765'' (for every bit string and every
natural number). The language of the formal arithmetic satisfies
this condition as well as the language of the formal set theory. Let
us assume also that all theorems of the considered theory are true.

\begin{theorem}
There exists a constant $c$ such that all the theorems
of type ``$\KS(x)>n$'' have $n<c$.
\end{theorem}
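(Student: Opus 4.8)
The plan is to argue by contradiction and to recycle almost verbatim the Berry-paradox construction from the non-computability proof above. Suppose the statement fails, so that for every constant $c$ there is a theorem of the form ``$\KS(x)>n$'' with $n\ge c$; equivalently, the set of thresholds $n$ appearing in provable statements of this shape is unbounded. Since all theorems are assumed true, every such provable inequality ``$\KS(x)>n$'' is in fact correct, so a provable lower bound on complexity is a genuine lower bound.

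The key move is to convert this unbounded supply of provable lower bounds into a computable function producing strings of guaranteed high complexity. The theory is given as a (non-terminating) algorithm enumerating its theorems, so I would run this enumeration and, for a given input $m$, wait until the first theorem of the syntactic form ``$\KS(x)>n$'' with $n>m$ is generated; I then set $f(m)$ to be the corresponding string $x$. Under the contradiction hypothesis such a theorem always eventually appears, so $f$ is a total computable function. Because the theorem is true we get $\KS(f(m))>n>m$, hence $\KS(f(m))>m$ for every $m$.

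On the other hand, $f$ is computable, so property~(c) of $\KS$ proved above gives $\KS(f(m))\le \KS(m)+O(1)$, and property~(a) applied to the binary representation of $m$ gives $\KS(m)\le|m|+O(1)$. Combining the two bounds yields $m<\KS(f(m))\le|m|+O(1)$, i.e.\ a natural number $m$ bounded by the length $|m|$ of its own binary notation up to an additive constant. This is impossible for all large $m$, and the contradiction establishes the existence of the bound $c$.

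The only delicate point—and the step I would watch most carefully—is the justification that $f$ is total and computable. Computability rests on the theory being effectively axiomatized (the enumeration of theorems is an algorithm) together with the fact that the predicate ``this theorem has the form $\KS(x)>n$'' is decidable, so parsing each generated theorem and reading off the string $x$ and the number $n$ is purely mechanical. Totality rests entirely on the contradiction hypothesis: without the assumption that the thresholds $n$ are unbounded, the search for a theorem with $n>m$ might never halt, and $f(m)$ would be undefined. Everything else is the same length-versus-magnitude collision that drove the previous theorem.
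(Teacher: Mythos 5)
Your proposal is correct and is essentially the paper's own proof: the same Berry-paradox construction (searching the enumeration of theorems for a provable bound ``$\KS(x)>n$'' with $n>m$, outputting that $x$), the same use of soundness to conclude $\KS(f(m))>m$, and the same collision $m<\KS(m)+O(1)\le|m|+O(1)$. Your explicit discussion of why $f$ is total and computable is a careful spelling-out of what the paper leaves implicit, not a different argument.
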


\begin{proof}
Indeed, assume that it is not true. Consider the following
algorithm $\alpha$: For a given integer $k$, generate all the
theorems and look for a theorem of type $\KS(x)>s$ for some $x$
and some $s$ greater than~$k$. When such a theorem is found, $x$
becomes the output $\alpha(s)$ of the algorithm. By our
assumption, $\alpha(s)$ is defined for all $s$.

All theorems are supposed to be true, therefore $\alpha(s)$ is a
bit string whose complexity is bigger than $s$. As we have seen,
this is impossible, since $K(\alpha(s))\le K(s)+O(1) \le
|s|+O(1)$ where $|s|$ is the length of the binary representation
of $s$. 
\end{proof}

(We may also use the statement of the preceding theorem instead
of repeating the proof.)

This result implies the classical G\"odel theorem (it says that there are true unprovable statements), since there exist strings of arbitrarily high complexity.

A constant $c$ (in the theorem) can be found explicitly if we fix a formal
theory and the optimal decompression algorithm and for most
natural choices does not exceed --- to give a rough estimate ---
$100,000$. It leads to a paradoxical situation: Toss a coin
$10^6$ times and write down the bit string of length
$1,000,000$. Then with overwhelming probability its complexity
will be bigger than $100,000$ but this claim will be unprovable
in formal arithmetic or set theory. 

\section{Algorithmic properties of $\KS$ (continued)}

\begin{theorem}
The function $\KS(x)$ is upper semicomputable, i.e., $\KS(x)$
can be represented as $\lim\limits_{n\to\infty}k(x,n)$ where
$k(x,n)$ is a total computable function with integer values
and
       $$
   k(x,0)\ge k(x,1) \ge k(x,2) \ge\ldots
       $$
\end{theorem}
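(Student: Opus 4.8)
The plan is to approximate $\KS(x)$ from above by running the optimal decompression algorithm $U$ on all inputs and recording the length of the shortest description of $x$ discovered so far. By definition $\KS(x) = \min\{|y| : U(y)=x\}$, and this minimum is attained by some finite halting computation; the only obstacle to computing $\KS$ exactly is that we can never be sure that all shorter candidate descriptions have failed to halt. Upper semicomputability captures precisely this one-sided knowledge: we may lower our current guess whenever a new short description turns up, but we cannot rule out that a still shorter one appears later.

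Concretely, I would fix a dovetailing enumeration of all pairs $(y,t)$, where $y$ ranges over binary strings and $t$ over natural numbers, and say that a description $y$ of $x$ is \emph{found within the first $n$ stages} if for some pair $(y,t)$ among the first $n$ enumerated pairs the simulation $U(y)$ halts within $t$ steps with output $x$. Let $c$ be the constant from part (a), so that $|x|+c$ is always a legitimate upper bound for $\KS(x)$. Define
$$
  k(x,n) = \min\bigl(\{|x|+c\} \cup \{\,|y| : U(y)=x \text{ is found within the first } n \text{ stages}\,\}\bigr).
$$
For each $x$ and $n$ this is the minimum of a finite, explicitly computed set of integers, so $k$ is a total computable integer-valued function; and since increasing $n$ can only enlarge the set over which we minimize, the inequalities $k(x,0) \ge k(x,1) \ge k(x,2) \ge \cdots$ hold automatically by construction.

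It remains to verify that $\lim_{n\to\infty} k(x,n) = \KS(x)$, and this is the only real point. Every value $k(x,n)$ is an upper bound for $\KS(x)$: it is either $|x|+c \ge \KS(x)$ by part (a), or the length $|y|$ of some string with $U(y)=x$, hence at least $\KS(x)$ by definition. For the matching lower bound, let $y^{*}$ be a shortest description of $x$, so that $|y^{*}| = \KS(x)$; such a $y^{*}$ exists and is finite because $\KS(x) \le |x|+c < \infty$. The computation $U(y^{*})=x$ halts after finitely many steps, so the pair $(y^{*},t)$ is eventually processed for a large enough time bound $t$ and stage index, and from that stage on the value $|y^{*}| = \KS(x)$ belongs to the minimized set. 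Therefore the nonincreasing sequence $k(x,n)$ eventually equals $\KS(x)$ and stabilizes there, so its limit is $\KS(x)$, as required.
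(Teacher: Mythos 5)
Your proof is correct and follows essentially the same route as the paper: the paper defines $k(x,n)$ as the time-bounded complexity (minimal $|y|$ with $U(y)=x$ computed in at most $n$ steps), patching the $+\infty$ case with the default value $|x|+c$, exactly as you do. Your dovetailing over pairs $(y,t)$ is a minor technical variant of the same idea, and incorporating the bound $|x|+c$ into the minimum from the start handles cleanly what the paper relegates to a parenthetical ``technical correction.''
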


Note that all values are integers, so for every $x$ there exists some $N$ such that $k(x,n)=\KS(x)$ for all $n>N$.

Sometimes upper semicomputable functions are called \emph{enumerable from above}.

\begin{proof}
Let $k(x,n)$ be the complexity of $x$ if we restrict by
$n$ the computation time used for decompression. In other words,
let $U$ be the optimal decompression algorithm used in the
definition of $\KS$. Then $k(x,n)$ is the minimal $|y|$ for all
$y$ such that $U(y)=x$ and the computation time for $U(y)$ does
not exceed $n$. 
\end{proof}

(Technical correction: it can happen (for small $n$) that our
definition gives $k(x,n)=\infty$. In this case we let
$k(x,n)=|x|+c$ where $c$ is chosen in such a way that $\KS(x)\le
|x|+c$ for all $x$.)

\section{An encodings-free definition of complexity}

The following theorem provides an ``encodings-free'' definition
of Kolmogorov complexity as a minimal function $K$ such that $K$
is upper semicomputable and $|\{x \mid K(x)< n\}|=O(2^n)$.

\begin{theorem}
Let $K(x)$ be an upper semicomputable function such that
$|\{x \mid K(x)< n\}|\le M\cdot2^n$ for some constant $M$ and for all
$n$. Then there exists a constant $c$ such that $\KS(x)\le
K(x)+c$ for all~$x$.
\end{theorem}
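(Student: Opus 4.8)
The plan is to reduce the statement to part~(d) of the basic-properties theorem, which already bounds the complexity of the members of a uniformly enumerable family of small sets. First I would unwind the hypothesis that $K$ is upper semicomputable into the combinatorial form that~(d) requires. Writing $K(x)=\lim_t k(x,t)$ with $k$ total computable and nonincreasing in $t$, we have $K(x)<N$ exactly when $k(x,t)<N$ for some $t$: if $K(x)<N$ then the decreasing approximations eventually drop below $N$, and conversely $k(x,t)<N$ forces $K(x)\le k(x,t)<N$. Hence, dovetailing over all $x$ and $t$, the set of pairs $\langle x,N\rangle$ with $K(x)<N$ is enumerable, uniformly in the threshold $N$.

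Next I would absorb the constant factor $M$ by a logarithmic shift of the index. Fix an integer $m$ with $2^m\ge M$ and put
$$
  V_n=\{x : K(x)<n-m\}.
$$
The counting hypothesis then gives $|V_n|\le M\cdot 2^{\,n-m}=M2^{-m}\cdot 2^n\le 2^n$, so each $V_n$ has at most $2^n$ elements; and by the previous paragraph the relation $x\in V_n$ is enumerable uniformly in $n$. This is precisely the setting of~(d), so there is a constant $c'$ with $\KS(x)\le n+c'$ for every $n$ and every $x\in V_n$.

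Finally I would turn this family bound into a pointwise one. Given an arbitrary $x$, let $n$ be the least integer with $n-m>K(x)$; then $x\in V_n$ while $n\le K(x)+m+1$, so
$$
  \KS(x)\le n+c'\le K(x)+(m+1+c').
$$
Taking $c=m+1+c'$ finishes the proof.

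The only step I expect to need real care is the first one: turning ``upper semicomputable'' into ``the $V_n$ form a uniformly enumerable family of sets,'' with the uniformity in $n$ that~(d) demands. Everything afterwards is bookkeeping --- the shift by $m$ to neutralize the factor $M$, and the choice of $n$ to convert the per-set bound into a per-string bound --- and feeds only into the size of the additive constant.
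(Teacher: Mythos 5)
Your proposal is correct and follows essentially the same route as the paper: the paper also defines $V_n=\{x\mid K(x)<n\}$, proves uniform enumerability of the relation $x\in V_n$ from upper semicomputability by exactly your dovetailing argument, and then invokes the allocation idea of part~(d), absorbing the factor $M$ by using descriptions of length $n+\lceil\log_2 M\rceil$ rather than by shifting the index of $V_n$ as you do --- a purely cosmetic difference. The final conversion from the per-set bound to $\KS(x)\le K(x)+c$ (choosing the least admissible $n$) is also the same in both arguments.
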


\begin{proof}
This theorem is a reformulation of one of the statements
above. Let $V_n$ be the set of all strings such that $K(x)<n$.
The binary relation $x\in V_n$ (between $x$ and~$n$) is
enumerable. Indeed, $K(x)=\lim k(x,m)$ where $k$ is a total
computable function that is decreasing as a function of~$m$.
Compute $k(x,m)$ for all $x$ and $m$ in parallel. If it happens
that $k(x,m)<n$ for some $x$ and $m$, add $x$ into the
enumeration of $V_n$. (The monotonicity of $k$ guarantees that
in this case $K(x)<n$.) Since $\lim k(x,m)=K(x)$, every
element of $V_n$ will ultimately appear.

By our assumption $|V_n|\le M\cdot 2^n$. Therefore we can allocate
strings of length $n+c$ (where $c=\lceil \log_2M\rceil$) as
descriptions of elements of $V_n$ and will not run out of
descriptions. In this way we get a decompression algorithm $D$ such that
$\KS_D(x)\le n+c$ for $x\in V_n$. Since $K(x)<n$ implies
$\KS_D(x)\le n+c$ for all $x$ and $n$, we have
$\KS_D(x)\le K(x)+1+c$ and $\KS(x)\le K(x)+c$ for some other $c$
and all $x$. 
\end{proof}

\section{Axioms of complexity}

It would be nice to have a list of ``axioms'' for Kolmogorov
complexity that determine it uniquely (up to a bounded additive term).
The following list shows one of the possibilities.

\begin{itemize}
\item{A1}  (Conservation of information)
  For every computable (partial) function $f$ there exists a
  constant $c$ such that $K(f(x))\le K(x)+c$
  for all $x$ such that $f(x)$ is defined.
\item{A2}  (Enumerability from above)
  Function $K$ is enumerable from above.
\item{A3}  (Calibration)
  There are constants $c$ and $C$ such that the cardinality of
  set $\{x\mid K(x) < n\}$ is between $c\cdot2^n$
  and $C\cdot2^n$.
\end{itemize}

\begin{theorem}
Every function $K$ that satisfies A1--A3 differs from $\KS$
only by $O(1)$ additive term.
\end{theorem}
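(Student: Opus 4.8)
The plan is to establish the two inequalities $\KS(x)\le K(x)+O(1)$ and $K(x)\le \KS(x)+O(1)$ separately, matching each to one part of the axioms. For the first inequality I would simply invoke the encodings-free theorem proved above: axiom A2 says precisely that $K$ is upper semicomputable, and the upper bound in A3, namely $|\{x\mid K(x)<n\}|\le C\cdot 2^n$, is exactly the counting hypothesis of that theorem. Hence $\KS(x)\le K(x)+c$ for some $c$, and this direction uses only A2 and the ``cheap half'' of A3.

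The second inequality is where A1 and the lower bound in A3 enter. First I would reduce it to the statement that $K$ never exceeds length, i.e. $K(z)\le |z|+O(1)$ for all $z$. Granting this, let $U$ be the optimal decompression algorithm fixed in the definition of $\KS$, and for a given $x$ let $p$ be a shortest program, so that $U(p)=x$ and $|p|=\KS(x)$. Since $U$ is a partial computable function, axiom A1 yields a constant with $K(x)=K(U(p))\le K(p)+O(1)$, and then $K(p)\le |p|+O(1)=\KS(x)+O(1)$ finishes the direction. Note that a length bound $K(z)\le|z|+O(1)$ alone is far too weak to control $x$ with small $\KS(x)$; it is the conservation axiom A1 applied to $U$ that transfers the length bound into a $\KS$ bound.

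It remains to prove $K(z)\le|z|+O(1)$, and this is the technical core. I would fix $d$ so large that the lower bound in A3 guarantees $|A_n|\ge 2^{n+1}$, where $A_n=\{z\mid K(z)<n+d\}$; by A2 each $A_n$ is enumerable, and the sets are nested. Then I build a partial computable function $f$ by a staged assignment: processing $n=0,1,2,\dots$ in turn, at stage $n$ I enumerate $A_n$ until $2^n$ of its elements not used at earlier stages have appeared, and let $f$ map these $2^n$ distinct strings bijectively onto the $2^n$ strings of length $n$. The counting closes because the strings used before stage $n$ all lie in $A_n$ and number only $2^n-1$, while $|A_n|\ge 2^{n+1}$, so enough fresh preimages always remain. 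Every string $z$ of length $n$ is then $f(w)$ for some $w$ with $K(w)<n+d$, and A1 gives $K(z)=K(f(w))\le K(w)+O(1)<|z|+d+O(1)$.

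The main obstacle is exactly this last construction: keeping $f$ single-valued and computable while guaranteeing that it is onto each length class. The delicate points are that a low-complexity string may be certified as lying in $A_n$ only after later stages would have wanted it, so the assignment must reserve each description for a single target, and that one must check the lower bound in A3 really leaves enough unused strings at every stage --- this is the only place the lower calibration bound is used. As a byproduct the argument shows that every $K(z)$ is finite, which the axioms do not state outright.
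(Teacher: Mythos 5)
Your proof is correct and takes essentially the same route as the paper: the easy direction via the encodings-free theorem (A2 plus the upper bound in A3), and the hard direction by first proving $K(z)\le |z|+O(1)$ from A1, A2 and the lower calibration bound, then applying A1 to the optimal decompressor $U$ at a shortest description $p$ of $x$. Your staged ``fresh elements'' construction of $f$ is just a more explicit rendering of the paper's construction of nested sets $S_n$ and their differences $T_i$, which are then mapped onto the strings of each length.
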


\begin{proof}
Axioms A2 and A3 guarantee that $\KS(x)\le K(x)+O(1)$. We
need to prove that $K(x)\le \KS(x)+O(1)$.

First, we prove that $K(x)\le |x|+O(1)$.

Since $K$ is enumerable from above, we can generate
strings $x$ such that $\mathsf{K}(x)<n$. Axiom A3 guarantees
that we have at least $2^{n-d}$ strings with this property for
some $d$ (which we assume to be an integer). Let us stop generating
them when we have already $2^{n-d}$ strings $x$ such that
$K(x)<n$; let $S_n$ be the set of strings generated in
this way. The list of all elements in $S_n$ can be obtained by
an algorithm that has $n$ as input; $|S_n|=2^{n-d}$ and
$K(x)<n$ for each $x\in S_n$.

We may assume that $S_1\subset S_2\subset S_3\subset\ldots$ (if
not, replace some elements of $S_i$ by elements of $S_{i-1}$
etc.). Let $T_i$ be equal to $S_{i+1}\setminus S_i$. Then $T_i$
has $2^{n-d}$ elements and all $T_i$ are disjoint.

Now consider a computable function $f$ that maps elements of
$T_n$ onto strings of length $n-d$. Axiom A1 guarantees then
that $K(x)\le n+O(1)$ for every string of length $n-d$.
Therefore, $K(x)\le |x|+O(1)$ for all $x$.

Let $D$ be the optimal decompression algorithm from the definition of $\KS$. We apply A1 to the function $D$. If $p$ is a shortest description for $x$, then $D(x)=p$, therefore $K(x)=K(D(p))\le K(p)+O(1) \le |p|+O(1)=\KS(x)+O(1).$

\end{proof}

\begin{problems}
\subsection*{Problems}

\leavevmode

1. If $f\colon\mathbb{N}\to\mathbb{N}$ is a computable bijection,
   then $\KS(f(x))=\KS(x)+O(1)$. Is it true if $f$ is a (computable)
   injection (i.e., $f(x)\ne f(y)$ for $x\ne y$)? Is it true if
   $f$ is a surjection (for every $y$ there is some $x$ such that
   $f(x)=y$)?

2. Prove that $\KS(x)$ is ``continuous'' in the following sense:
   $\KS(x0)=\KS(x)+O(1)$ and $\KS(x1)=\KS(x)+O(1)$.

3. Is it true that $\KS(x)$ changes at most by a constant if we
   change the first bit in $x$? last bit in $x$? some bit in $x$?

4. Prove that $\KS(\overline{x}01\bin(\KS(x)))$ (a string $x$ with
   doubled bits is concatenated with 01 and the binary
   representation of its complexity $\KS(x)$) equals $\KS(x)+O(1)$.
\end{problems}

\section{Complexity of pairs}

Let
        $$
x,y\mapsto[x,y]
        $$
be a computable function that maps pairs of strings into
strings and is an injection (i.e., $[x,y]\ne[ x',y']$ if $x\ne
x'$ or $y\ne y'$). We define complexity $\KS(x,y)$ of pair of
strings as $\KS([x,y])$.

Note that $\KS(x,y)$ changes only by $O(1)$-term if we consider
another computable ``pairing function'': If $[ x,y]_1$ and
$[x,y]_2$ are two pairing functions, then $[x,y]_1$ can be
obtained from $[x,y]_2$ by an algorithm, so $\KS([x,y]_1)\le
\KS([x,y]_2)+O(1)$.

Note that
        $$
\KS(x,y)\ge \KS(x) \quad\text{and}\quad \KS(x,y) \ge \KS(y)
        $$
(indeed, there are computable functions that produce $x$ and $y$
from $[x,y]$).

For similar reasons, $\KS(x,y)=\KS(y,x)$ and $\KS(x,x)=\KS(x)$.

We can define $\KS(x,y,z)$, $\KS(x,y,z,t)$ etc. in a similar way:
$\KS(x,y,z)=\KS([x,[y,z]])$ (or $\KS(x,y,z)=\KS([[x,y],z])$, the
difference is $O(1)$).

\begin{theorem}
        $$
\KS(x,y)\le \KS(x)+2\log \KS(x)+ \KS(y) + O(1).
        $$
\end{theorem}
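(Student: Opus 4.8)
The plan is to construct a single decompression algorithm $V$ whose compressed versions of the pair code $[x,y]$ have the required length, and then appeal to optimality of the reference algorithm $U$. First I would fix shortest descriptions: let $p$ be a shortest $U$-description of $x$ (so $U(p)=x$ and $|p|=\KS(x)$) and let $q$ be a shortest $U$-description of $y$ (so $U(q)=y$ and $|q|=\KS(y)$). The naive idea is to feed the concatenation $pq$ to an algorithm that splits it, runs $U$ on each piece, and outputs $[U(p),U(q)]=[x,y]$.

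The obstacle — and essentially the only point of the proof — is that from the bare string $pq$ there is no way to recover the boundary between $p$ and $q$. I would resolve this by prefixing a \emph{self-delimiting} encoding of the number $|p|=\KS(x)$. Writing this number in binary costs about $\log\KS(x)$ bits; to make it self-delimiting I double each bit and append a marker such as $01$, yielding a prefix of length $2\log\KS(x)+O(1)$ from which the algorithm can both read off the integer $|p|$ and detect where the prefix ends.

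Concretely, $V$ on input $z$ would scan the doubled-bit prefix until the marker $01$, decode it to obtain $n=|p|$, read the next $n$ bits as $p$, treat all remaining bits as $q$, and output $[U(p),U(q)]$. Feeding $V$ the string $(\text{prefix})\,p\,q$ then gives a $V$-description of $[x,y]$ of length $2\log\KS(x)+O(1)+\KS(x)+\KS(y)$, whence $\KS_V([x,y])\le \KS(x)+2\log\KS(x)+\KS(y)+O(1)$.

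Finally, since $U$ is asymptotically optimal we have $\KS([x,y])\le \KS_V([x,y])+O(1)$, and by definition $\KS(x,y)=\KS([x,y])$, which yields the stated inequality. I expect no subtlety beyond the self-delimiting trick: the $2\log$ factor is precisely the overhead of encoding one length in self-delimiting form, and it is spent on $|p|$ rather than $|q|$ because the second description can simply occupy the remainder of the input.
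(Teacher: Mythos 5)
Your proof is correct and follows essentially the same route as the paper's: the paper's decompressor $D_2$ acts on inputs of the form $\overline{\bin(|p|)}01pq$ and outputs $[D(p),D(q)]$, which is exactly your construction of prefixing a doubled-bit, marker-terminated encoding of $|p|$ so the boundary between the two descriptions can be recovered. The only cosmetic difference is notation ($D$ versus $U$ for the optimal decompressor); the key idea, the self-delimiting length encoding costing $2\log\KS(x)+O(1)$ bits, is the same.
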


\begin{proof}

By $\overline x$ we denote binary string $x$ with all
bits doubled. Let $D$ be the optimal decompression algorithm.
Consider the following decompression algorithm $D_2$:
        $$
\overline{\bin(|p|)}01pq \mapsto [D(p), D(q)].
        $$
Note that $D_2$ is well defined, because the input string
$\overline{\bin(|p|)}01pq$ can be disassembled into parts
uniquely: we know where $01$ is, so we can find $|p|$ and then
separate $p$ and $q$.

If $p$ is the shortest description for $x$ and $q$ is the
shortest description for $y$, then $D(p)=x$, $D(q)=y$ and
$D_2(\overline{\bin(p)}01pq)=[x,y]$. Therefore
        $$
\KS_{D_2} ([x,y]) \le |p|+2\log|p|+|q|+O(1);
        $$
here $|p|=\KS(x)$ and $|q|=\KS(y)$ by our assumption.

\end{proof}

Of course, $p$ and $q$ can be exchanged: we can replace
$\log \KS(p)$  by $\log \KS(q)$.

\section{Conditional complexity}

We now want to define conditional complexity of $x$ when $y$ is
known. Imagine that you want to send string $x$ to your friend
using as few bits as possible. If she already knows some string
$y$ which is similar to $x$, this can be used to make the message shorter.

Here is the definition. Let $\langle p,y\rangle\mapsto D(p,y)$
be a computable function of two arguments. We define the conditional
complexity $\KS_D(x|y)$ of $x$ when $y$ is known as
        $$
\KS_D(x|y)=\min\{\,|p|\,\mid D(p,y)=x\}.
        $$
As usual, $\min(\varnothing)=+\infty$. The function $D$ is
called ``conditional decompressor'' or ``conditional
description mode'': $p$ is the description (compressed version)
of $x$ when $y$ is known. (To get $x$ from $p$ the decompressing
algorithm $D$ needs $y$.)

\begin{theorem}
There exists an optimal conditional decompressing function $D$
such that for every other conditional decompressing function $D'$
there exists a constant $c$ such that
        $$
\KS_D(x|y) \le \KS_{D'}(x|y) +c
        $$
for all strings $x$ and $y$.
\end{theorem}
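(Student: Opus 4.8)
The plan is to mimic exactly the construction of the optimal (unconditional) decompression algorithm given earlier, treating the conditional argument $y$ as a passive ``side input'' that is simply forwarded unchanged. First I would fix a programming language with self-delimiting programs (Pascal, say, as before), and define the universal conditional decompressor $D$ as follows: on input $\langle p, y\rangle$, scan $p$ from left to right until a complete self-delimiting program $r$ is found; interpret $r$ as (the code of) a conditional decompressor $D'$; then run $D'$ on the remaining suffix $s$ of $p$ together with the same second argument $y$, and output the result. In symbols, if $p = rs$ with $r$ a self-delimiting program computing $D'$, then $D(rs, y) = D'(s, y)$.

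The second step is to verify optimality. Given any conditional decompressor $D'$, let $r$ be a program computing it, and set $c = |r|$, a constant depending only on $D'$ and not on $x$ or $y$. Suppose $s$ is a shortest description of $x$ given $y$ under $D'$, so that $D'(s,y) = x$ and $|s| = \KS_{D'}(x|y)$. Then by construction $D(rs, y) = x$, and $rs$ is a $D$-description of $x$ given $y$ of length $|r| + |s|$. Hence $\KS_D(x|y) \le |r| + \KS_{D'}(x|y) = \KS_{D'}(x|y) + c$ for all strings $x$ and $y$, which is exactly the claimed inequality.

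The routine checks are that $D$ is well defined---self-delimitation guarantees that the split $p = rs$ is unambiguous, so the parsing step is deterministic---and that $D$ is a legitimate partial computable function of two arguments, which it is, being an interpreter. I do not expect any genuine obstacle here. The only point worth flagging, in contrast with the unconditional theorem, is that the prefix $r$ is attached to the first argument $p$ while the conditional input $y$ passes through untouched; consequently the overhead is still a single additive constant $|r|$ that is independent of $y$ as well as of $x$. This is precisely what lets the same ``self-extracting archive'' idea go through verbatim in the conditional setting.
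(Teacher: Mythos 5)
Your construction is exactly the paper's: prefix a self-delimiting program $r$ for $D'$ to the first argument and pass the condition $y$ through unchanged, giving $D(rs,y)=D'(s,y)$ and hence $\KS_D(x|y)\le \KS_{D'}(x|y)+|r|$. The proposal is correct and matches the paper's proof essentially verbatim.
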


\begin{proof}

As for the non-conditional version, consider some programming language where programs allow two input strings and are self-delimiting. Then let
        $$
D(uv,y)= \text{the output of program $u$ applied to $v,y$.}
        $$
Algorithm $D$ finds a (self-delimiting) program $u$ as a prefix
of its first argument and then applies $u$ to the rest of the
first argument and the second argument.

Let $D'$ be some other conditional decompressing function. Being
computable, it has some program~$u$. Then
        $$
\KS_D(x|y) \le \KS_{D'}(x|y) + |u|.
        $$
Indeed, let $p$ be the shortest string such that $D'(p,y)=x$
(therefore, $|p|=\KS_{D'}(x|y)$). Then $D(up,y)=x$, therefore
$\KS_D(x|y)\le |up|=|p|+|u|=\KS_{D'}(x|y)+|u|$.
\end{proof}

We fix some optimal conditional decompressing function $D$ and
omit the index $D$ in $\KS_D(x|y)$. Beware that $\KS(x|y)$ is defined
only ``up to $O(1)$-term''.

\begin{theorem}
\leavevmode\par
\textup{\textbf{(a)}}~$\KS(x|y) \le \KS(x)+O(1).$

\textup{\textbf{(b)}}~For every $y$ there exists some constant $c$ such that
        $$
 |\KS(x) - \KS(x|y)| \le c.
        $$
\end{theorem}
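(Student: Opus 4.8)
The plan is to obtain each inequality by converting between conditional and unconditional decompressors, exploiting in each direction the optimality established in the two preceding theorems.

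For part (a), the idea is that the condition $y$ can only help, so a conditional decompressor is free to ignore it. Let $U$ be the optimal unconditional decompressor fixed in the definition of $\KS$, and define a conditional decompressing function $D'$ by $D'(p,y)=U(p)$, discarding the second argument. This $D'$ is computable, so by the optimality of the conditional decompressor $D$ we have $\KS(x|y)=\KS_D(x|y)\le \KS_{D'}(x|y)+O(1)$. But $\KS_{D'}(x|y)=\min\{|p|\colon U(p)=x\}=\KS_U(x)=\KS(x)$ for every $y$, which gives $\KS(x|y)\le \KS(x)+O(1)$ with a constant uniform in both $x$ and $y$.

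For part (b), one of the two inequalities, namely $\KS(x|y)\le \KS(x)+c$, is already part (a). It remains to prove $\KS(x)\le \KS(x|y)+c$ for a constant $c$ that may depend on $y$. Here I would run the construction in reverse: fix $y$ and hardwire it into an unconditional decompressor. Define $V(p)=D(p,y)$, where $D$ is the optimal conditional decompressor. Since $y$ is now a fixed string, $V$ is a computable \emph{unconditional} decompressing function, and $\KS_V(x)=\min\{|p|\colon D(p,y)=x\}=\KS_D(x|y)=\KS(x|y)$. By the optimality of $U$, $\KS(x)=\KS_U(x)\le \KS_V(x)+c=\KS(x|y)+c$, as required.

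The point to keep track of — and the reason part (b) is stated as ``for every $y$ there exists a constant $c$'' rather than with a single universal constant — is that the decompressor $V$ has the string $y$ built into its program, so its description, and hence the optimality constant $c$ relating $V$ to $U$, depends on $y$ (growing roughly like the length of $y$). The construction only works because $y$ is fixed in advance: if $y$ were allowed to vary, $V$ would not be a single algorithm. Thus the only real care needed is bookkeeping about which quantities each constant is permitted to depend on; there is no deeper obstacle, and in particular no nontrivial counting or semicomputability argument is required.
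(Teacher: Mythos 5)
Your proposal is correct and follows essentially the same route as the paper: in part (a) you build a conditional decompressor that ignores its condition, and in part (b) you hardwire the fixed $y$ into an unconditional decompressor $V(p)=D(p,y)$, with the constant depending on $y$ exactly as the paper's $D_y$ construction does. Your closing remark about which constants may depend on $y$ matches the paper's parenthetical note, so there is nothing to correct.
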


This theorem says that conditional complexity is smaller than the
unconditional one but for every fixed condition the difference is
bounded by a constant (depending on the condition).

\begin{proof}
(a)~If $D_0$ is an (unconditional) decompressing algorithm,
we can consider a conditional decompressing algorithm
        $$
D(p,y)=D_0(p)
        $$
that ignores conditions. Then $\KS_D(x|y)=\KS_{D_0}(x)$.

(b)~On the other hand, if $D$ is a conditional decompressing
algorithm, for every fixed $y$ we may consider an (unconditional)
decompressing algorithm $D_y$ defined as
        $$
D_y(p)=D(p,y).
        $$
Then $\KS_{D_y} (x)=\KS_D(x|y)$ for given $y$ and for all $x$. And
$\KS(x)\le \KS_{D_y}(x)+O(1)$ (where $O(1)$-constant depends on
$y$). 
\end{proof}

\section{Pair complexity and conditional complexity}

\begin{theorem}
        $$
\KS(x,y)=\KS(x|y)+\KS(y)+O(\log \KS(x)+\log \KS(y)).
        $$
\end{theorem}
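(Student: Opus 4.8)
The plan is to prove the claimed equality as two separate inequalities, since the $O(\log\KS(x)+\log\KS(y))$ error makes it an approximate identity. Writing $a=\KS(x,y)$, I would first establish the \emph{easy} inequality
$$
\KS(x,y)\le \KS(x|y)+\KS(y)+2\log\KS(y)+O(1),
$$
and then the \emph{hard} inequality
$$
\KS(x|y)+\KS(y)\le \KS(x,y)+O(\log a),
$$
finally converting $O(\log a)=O(\log\KS(x,y))$ into the stated form.

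For the easy direction I would reuse the self-delimiting gluing trick from the ``Complexity of pairs'' theorem. Let $p$ be a shortest unconditional description of $y$ (so $|p|=\KS(y)$) and $q$ a shortest conditional description of $x$ given $y$ (so $|q|=\KS(x|y)$). Consider the decompressor reading $\overline{\bin(|p|)}01pq$: it recovers $|p|$ from the doubled self-delimiting prefix, separates $p$ and $q$, computes $y$ from $p$ via the optimal unconditional decompressor, then $x$ from $(q,y)$ via the optimal conditional decompressor, and outputs $[x,y]$. The total length is $\KS(y)+2\log\KS(y)+\KS(x|y)+O(1)$, giving the inequality above.

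The hard direction is the counting argument and is the crux of the proof. Given $y$, consider the slice $A_y=\{x'\mid \KS(x',y)\le a\}$ and set $t=|A_y|$. Since $\KS$ is enumerable from above (the upper semicomputability theorem), the set of all pairs of complexity at most $a$ is enumerable once $a$ is given; hence from $a$ and $y$ we can enumerate $A_y$ and describe $x$ by its position, yielding $\KS(x|y)\le \log t+O(\log a)$, where the $O(\log a)$ term pays for supplying $a$. To bound $t$, choose $k$ with $2^k\le t<2^{k+1}$ and call $y'$ \emph{fat} if $|A_{y'}|\ge 2^k$. The pairing is injective, so the pairs of complexity at most $a$ number at most $2^{a+1}$; since each fat string contributes at least $2^k$ distinct such pairs, there are at most $2^{a+1-k}$ fat strings, and given $a$ and $k$ they can be enumerated. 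As $y$ itself is fat, describing it by its index costs $\KS(y)\le (a+1-k)+O(\log a)$, whence $k\le a-\KS(y)+O(\log a)$. Combining, $\KS(x|y)\le \log t+O(\log a)\le k+O(\log a)\le a-\KS(y)+O(\log a)$, which rearranges to the hard inequality.

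The main obstacle is the bookkeeping in this hard direction: making the two nested enumerations (of the slice $A_y$ and of the fat strings) genuinely algorithmic from the finite data $(a,k)$, and checking that every auxiliary integer --- the parameter $a$, the threshold exponent $k$, and the two indices --- is supplied within the claimed logarithmic budget, using that $k\le a+1$ so that $\log k=O(\log a)$. As a last step I would convert $O(\log a)=O(\log\KS(x,y))$ into $O(\log\KS(x)+\log\KS(y))$ via the earlier bound $\KS(x,y)\le\KS(x)+2\log\KS(x)+\KS(y)+O(1)$ together with $\log(u+v)\le \log u+\log v+O(1)$.
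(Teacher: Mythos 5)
Your proposal is correct and follows essentially the same route as the paper: the easy inequality via the self-delimiting gluing trick $\overline{\bin(|p|)}01pq$, and the hard inequality via the same counting argument --- your threshold $k$ and ``fat'' strings are exactly the paper's $b=\lfloor\log_2|A_y|\rfloor$ and the strings $y'$ with $|A_{y'}|\ge 2^b$, enumerated from $a$ and $k$ using upper semicomputability of $\KS$. The only differences are cosmetic (which description gets the doubled-length prefix, and your more explicit final conversion of $O(\log a)$ into $O(\log\KS(x)+\log\KS(y))$).
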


\begin{proof}

Let us prove first that
        $$
\KS(x,y)\leq \KS(x|y)+\KS(y)+O(\log \KS(x)+\log \KS(y)).
        $$
We do it as before: If $D$ is an optimal decompressing function
(for unconditional complexity) and $D_2$ is an optimal
conditional decompressing function, let
        $$
D'(\overline{\bin(p)}01pq)=[D_2(p,D(q)),D(q)].
        $$
In other terms, to get the description of pair $x,y$ we
concatenate the shortest description of $y$ (denoted by $q$) with
the shortest description of $x$ when $y$ is known (denoted by
$p$). (Special precautions are used to guarantee the unique
decomposition.) Indeed, in this case $D(q)=y$ and
$D_2(p,D(q))=D_2(p,y)=x$, therefore
        \begin{multline*}
\KS_{D'}([x,y])\le |p|+2\log|p|+|q|+O(1) \le \\
      \le  \KS(x|y)+\KS(y)+O(\log \KS(x)+\log \KS(y)).
        \end{multline*}

The reverse inequality is much more interesting. Let us explain
the idea of the proof. This inequality is a translation of a
simple combinatorial statement. Let $A$ be a finite set of pairs
of strings. By $|A|$ we denote the cardinality of $A$. For each
string $y$ we consider the set $A_y$ defined as
        $$
A_y=\{ x | \langle x,y\rangle\in A\}.
        $$
The cardinality $|A_y|$ depends on $y$ (and is equal to~$0$ for
all $y$ outside some finite set). Evidently,
        $$
\sum_y |A_y| = |A|.
        $$
Therefore, the number of $y$ such that $|A_y|$ is big, is limited:
        $$
|\{ y |\, |A_y|\ge c\} | \le |A|/c
        $$
for each $c$.

Now we return to complexities. Let $x$ and $y$ be two strings.
The inequality $\KS(x|y)+\KS(y)\le \KS(x,y)+O(\log \KS(x)+\log \KS(y))$
can be informally read as follows: if $\KS(x,y)<m+n$, then either
$\KS(x|y)<m$ or $\KS(y)<n$ up to logarithmic terms. Why is it the
case? Consider a set $A$ of all pairs $\langle x,y\rangle$ such
that $\KS(x,y)<m+n$. There are at most $2^{m+n}$ pairs in $A$. The
given pair $\langle x,y\rangle$ belongs to~$A$. Consider the set
$A_y$. It is either ``small'' (contains at most $2^m$ elements)
or ``big'' (=not small). If $A_y$ is small ($|A_y|\le 2^m$), then $x$ can be
described (when $y$ is known) by its ordinal number in $A_y$,
which requires $m$ bits, and $\KS(x|y)$ does not exceed $m$ (plus
some administrative overhead). If $A_y$ is big, then $y$ belongs
to a (rather small) set $Y$ of all strings $y$ such that $A_y$
is big. The number of strings $y$ such that $|A_y|>2^m$ does not
exceed $|A|/2^m=2^n$. Therefore, $y$ can be (unconditionally)
described by its ordinal number in $Y$ which requires $n$ bits
(plus overhead of logarithmic size).

Let us repeat this more formally. Let $\KS(x,y)=a$. Consider the
set $A$ of all pairs $\langle x,y\rangle$ that have complexity
at most $a$. Let $b=\lfloor\log_2 |A_y|\rfloor$. To describe $x$
when $y$ is known we need to specify $a,b$ and the ordinal
number of $x$ in $A_y$ (this set can be enumerated effectively
if $a$ and $b$ are known since $\KS$ is enumerable from above).
This ordinal number has $b+O(1)$ bits and, therefore, $\KS(x|y)\le
b + O(\log a+\log b)$.

On the other hand, the set of all $y'$ such that $|A_{y'}|\ge
2^b$ consists of at most $|A|/2^b=O(2^{a-b})$ elements and can
be enumerated when $a$ and $b$ are known. Our $y$ belongs
to this set, therefore, $y$ can be described by $a$, $b$ and
$y$'s ordinal number, and $\KS(y)\le a-b + O(\log a+\log b)$.
Therefore, $\KS(y)+\KS(x|y)\le a + O(\log a +\log b)$. 

\end{proof}

\begin{problems}
\subsection*{Problems}

\leavevmode

1. Define $\KS(x,y,z)$ as $\KS([[x,y],[x,z]])$. Is this  definition
equivalent to a standard one (up to $O(1)$-term)?

2. Prove that $\KS(x,y)\le \KS(x)+\log K(x)+2\log\log \KS(x)+\KS(y)+O(1)$.
(Hint: repeat the trick with encoded length.)

3. Let $f$ be a computable function of two arguments. Prove that
$\KS(f(x,y)|y) \le \KS(x|y)+O(1)$ where $O(1)$-constant depends on
$f$ but not on $x$ and~$y$.

4$^{*}$. Prove that $\KS(x|\KS(x))=\KS(x)+O(1)$.
\end{problems}

\section{Applications of conditional complexity}

\begin{theorem}
If $x,y,z$ are strings of length at most $n$, then
        $$
2\KS(x,y,z) \le \KS(x,y)+\KS(x,z)+\KS(y,z)+O(\log n)
        $$
\end{theorem}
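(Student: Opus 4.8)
The plan is to recognize this as the Kolmogorov-complexity translation of the entropy inequality $2H(X,Y,Z)\le H(X,Y)+H(X,Z)+H(Y,Z)$, a Shearer-type inequality in which the three pairs cover each variable exactly twice, and to transfer it using the chain rule just established, namely $\KS(u,v)=\KS(u\cnd v)+\KS(v)+O(\log\KS(u)+\log\KS(v))$. Since $x,y,z$ all have length at most $n$, every complexity appearing below is at most $2n+O(\log n)$, so all the logarithmic correction terms are $O(\log n)$; I will absorb them silently into a single $O(\log n)$ at the end.

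First I would decompose $\KS(x,y,z)$ in two different ways, grouping the triple as a pair in each case and applying the chain rule:
\begin{align*}
\KS(x,y,z) &= \KS(x,y) + \KS(z\cnd x,y) + O(\log n),\\
\KS(x,y,z) &= \KS(y,z) + \KS(x\cnd y,z) + O(\log n).
\end{align*}
Here the condition $(x,y)$ is treated as the single string $[x,y]$, and the various pairings of a triple agree up to $O(1)$, so this is legitimate. Adding the two identities gives
$$
2\KS(x,y,z) = \KS(x,y)+\KS(y,z)+\KS(z\cnd x,y)+\KS(x\cnd y,z)+O(\log n),
$$
so the whole theorem reduces to the single inequality
$$
\KS(z\cnd x,y)+\KS(x\cnd y,z)\le \KS(x,z)+O(\log n).
$$

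To finish I would bound the two conditional terms by the exact analog of ``conditioning reduces entropy'': supplying extra conditions can only lower complexity up to $O(1)$. From part (a) of the conditional-complexity theorem, $\KS(z\cnd x,y)\le \KS(z)+O(1)$; and by feeding a decompressor that recovers $z$ from $[y,z]$ and ignores $y$, one gets $\KS(x\cnd y,z)\le \KS(x\cnd z)+O(1)$. Summing these and applying the chain rule once more, $\KS(z)+\KS(x\cnd z)=\KS(x,z)+O(\log n)$, which is exactly the reduced inequality.

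The argument is short, and there is no deep obstacle; the delicate part is purely bookkeeping. I must check that each use of the chain rule really produces only an $O(\log n)$ error (guaranteed by the length bound), that the identifications $\KS([x,[y,z]])=\KS([[x,y],z])+O(1)$ and $\KS(z\cnd x,y)=\KS(z\cnd [x,y])+O(1)$ hold, and that the monotonicity step $\KS(x\cnd y,z)\le\KS(x\cnd z)+O(1)$ is justified by an explicit reduction between conditional decompressors rather than merely asserted. Getting the direction of each ``more conditions $\Rightarrow$ smaller complexity'' step right is the only place where an error could creep in.
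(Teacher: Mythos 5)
Your proposal is correct and is essentially the paper's own proof: the paper likewise uses the chain rule $\KS(u,v)=\KS(u|v)+\KS(v)+O(\log n)$ to rewrite $2\KS(x,y,z)$ minus two of the pair complexities as a sum of two conditional complexities, bounds those by dropping conditions (``the more we know, the smaller is the complexity''), and then recombines the remaining conditional and unconditional terms into the third pair complexity. The only difference is a permutation of the variables --- you reduce to $\KS(z|x,y)+\KS(x|y,z)\le \KS(x,z)+O(\log n)$ while the paper reduces to $\KS(z|x,y)+\KS(y|x,z)\le \KS(y,z)+O(\log n)$ --- together with your (welcome, but not essential) explicit decompressor justification of the monotonicity step.
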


\begin{proof}

The statement does not mention conditional complexity;
however, the proof uses it. Recall that (up to $O(\log n)$-terms)
we have
        $$
\KS(x,y,z)-\KS(x,y)=\KS(z|x,y)
        $$
and
        $$
\KS(x,y,z)-\KS(x,z)=\KS(y|x,z)
        $$
Therefore, our inequality can be rewritten as
        $$
\KS(z|x,y)+\KS(y|x,z)\le \KS(y,z),
        $$
and the right-hand side is (up to $O(\log n)$) equal to
        $
\KS(z|y)+\KS(y).
        $
It remains to note that $\KS(z|x,y)\le \KS(z|y)$ (the more we know,
the smaller is the complexity) and $\KS(y|x,z)\le \KS(y)$.

\end{proof}

\section{Incompressible strings}

A string $x$ of length $n$ is called \emph{incompressible} if
$\KS(x|n)\ge n$. A more liberal definition: $x$ is $c$-incompressible,
if $\KS(x|n)\ge n-c$.

Note that this definition depends on the  choice of the optimal decompressor (but the difference can be covered by an $O(1)$-change in $c$).

\begin{theorem}
For each $n$ there exist incompressible strings of length $n$.
For each $n$ and each $c$ the fraction of $c$-incompressible
strings among all strings of length $n$ is greater than
$1-2^{-c}$.
\end{theorem}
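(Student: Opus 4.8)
The plan is to run a pure counting argument on descriptions, exactly the one used for part (e) of the basic-properties theorem, but now with the condition held fixed at $n$. Fix the optimal conditional decompressor $D$ supplied by the conditional-complexity theorem, and for the condition $y=n$ consider the one-argument map $p\mapsto D(p,n)$. A string $x$ of length $n$ fails to be $c$-incompressible precisely when $\KS(x|n)<n-c$, that is, when $x$ has some description $p$ with $D(p,n)=x$ and $|p|\le n-c-1$.

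First I would bound the number of such ``bad'' strings. Every bad $x$ is the image $D(p,n)$ of at least one string $p$ of length at most $n-c-1$, and the number of binary strings of length at most $n-c-1$ is $2^{n-c}-1<2^{n-c}$. Since distinct bad strings require distinct descriptions, the number of $x$ of length $n$ with $\KS(x|n)<n-c$ is strictly less than $2^{n-c}$.

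Dividing by the total number $2^n$ of strings of length $n$ shows that the fraction of bad strings is less than $2^{n-c}/2^n=2^{-c}$, so the fraction of $c$-incompressible strings exceeds $1-2^{-c}$, which is the second assertion. For the first assertion I would simply specialize to $c=0$: then the number of bad strings is less than $2^n$, hence at most $2^n-1$, so at least one string of length $n$ satisfies $\KS(x|n)\ge n$ and is therefore incompressible.

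There is no genuinely hard step here — the whole content is the elementary fact that there are fewer than $2^{k}$ binary strings of length less than $k$. The only point that needs care is that the condition stays fixed at $n$ throughout, so that $p\mapsto D(p,n)$ is a single function and the counting of short descriptions is unaffected by passing to the conditional setting; once this is observed, the bound is identical to the unconditional case.
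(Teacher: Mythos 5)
Your proposal is correct and uses essentially the same argument as the paper: counting the fewer-than-$2^{n-c}$ possible descriptions of length less than $n-c$ and dividing by the $2^n$ strings of length $n$. Your additional care in fixing the condition $y=n$ and deriving the first assertion as the case $c=0$ only makes explicit what the paper leaves implicit.
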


\begin{proof}

The number of descriptions of length less than $n-c$ is
$1+2+4+\ldots+2^{n-c-1}<2^{n-c}$. Therefore, the fraction of
$c$-compressible strings is less than $2^{n-c}/2^n=2^{-c}$.

\end{proof}

\section{Computability and complexity of initial segments}

\begin{theorem}
        \label{computability-complexity}
An infinite sequence $x=x_1x_2x_3\dots$ of zeros and ones is
computable if and only if $\KS(x_1\ldots x_n|n)=O(1)$.
\end{theorem}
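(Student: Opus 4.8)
The plan is to prove the two implications separately, with the forward direction being routine and the reverse direction carrying all the difficulty.

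For the easy direction (computability implies $\KS(x_1\ldots x_n\mid n)=O(1)$), I would fix a single program that, given $n$, computes $x_1\ldots x_n$; such a program exists precisely because $x$ is computable. Feeding this program to the optimal conditional decompressor as a fixed ``instruction'' and using $n$ as the condition, one recovers $x_1\ldots x_n$ from a description whose length does not depend on $n$. Hence $\KS(x_1\ldots x_n\mid n)$ is bounded by a constant.

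For the hard direction, assume $\KS(x_1\ldots x_n\mid n)\le c$ for all $n$. The first step is a counting observation: for each $n$ there are fewer than $2^{c+1}$ strings $z$ of length $n$ with $\KS(z\mid n)\le c$, since each such $z$ has a description of length at most $c$ and there are fewer than $2^{c+1}$ such descriptions. Write $S_n$ for this set; it contains $x_1\ldots x_n$, it has at most $m:=2^{c+1}$ elements, and the relation ``$z\in S_n$'' is enumerable uniformly in $n$ (run the conditional decompressor on all short inputs in parallel). The second step is to organise these sets into a tree: let $\hat T$ consist of all strings $w$ such that every length-$k$ prefix of $w$ belongs to $S_k$. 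Then $\hat T$ is prefix-closed, enumerable, has at most $m$ vertices on each level, and contains every prefix of $x$; in particular it is infinite. By K\"onig's lemma it has an infinite branch, and a short separation argument bounds the number of branches by $m$: if there were $m+1$ distinct branches, then at a level where their prefixes are already pairwise distinct they would exhibit more than $m$ vertices of $\hat T$, contradicting the width bound.

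The main obstacle is the final step: passing from the non-constructive existence of a branch (K\"onig's lemma) to an actual algorithm computing $x$. The difficulty is that $\hat T$ is only enumerable, so we cannot directly decide which enumerated vertices lie on an infinite branch and which are dead ends of finite subtrees. My plan is to exploit the two structural facts already established --- that $\hat T$ has width at most $m$ and only finitely many infinite branches --- to prune the dead ends: beyond some finite level $L$ the surviving vertices split into at most $m$ pairwise disjoint branches, and along each isolated branch the set of live continuations is eventually a singleton that can be located by enumeration. This forces each of the finitely many branches, and in particular $x$, to be computable. Making the pruning effective --- equivalently, recognising that a width-bounded enumerable tree with finitely many branches has only computable branches --- is the delicate point, and it is where I would spend most of the effort.
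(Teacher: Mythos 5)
Your setup coincides with the paper's: the easy direction is handled the same way, your sets $S_n$ are the paper's ``simple'' strings, your tree $\hat T$ is the paper's tree of ``good'' strings, and you establish the same width bound and the same conclusion that $\hat T$ has only finitely many infinite branches. But your proof stops exactly where the real work begins, and the plan you sketch for the last step would not work as stated. You propose to prune dead ends and ``locate by enumeration'' the eventually-unique live continuation along the branch of $x$. The trouble is that the two sets you are implicitly juggling have opposite defects: the set of continuations of a vertex \emph{inside $\hat T$} is enumerable but need never become a singleton (dead-end vertices can keep appearing at arbitrarily high levels, each rooting a finite subtree), while the set of \emph{live} continuations (those lying on an infinite branch) is indeed eventually a singleton but is not enumerable --- lying on an infinite branch of a c.e.\ tree is a $\Pi^0_2$-type property. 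Enumeration alone can never certify, at any finite stage, that no further sibling will appear or that a given vertex is a dead end; that certification is exactly what decidability would provide, and $\hat T$ is not decidable. So ``recognising that a width-bounded enumerable tree with finitely many branches has only computable branches'' is not a delicate detail to be worked out; it is the entire content of the theorem, and your proposal does not contain the idea needed for it.

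The paper fills this gap with a specific extra device (its Lemma 2). Let $g(n)$ be the number of good strings of length $n$ and $g=\limsup g(n)$; fix $N$ such that $g(n)\le g$ for all $n\ge N$, and call a level $n\ge N$ \emph{complete} if $g(n)=g$. The numbers $g$ and $N$ are non-uniform, compiled-in constants. At a complete level enumeration \emph{does} certify completeness: once $g$ good strings of that length have appeared, you know you have them all. Since there are infinitely many complete levels and they can be recognized by enumeration, one extracts a computable increasing sequence of complete levels together with the full lists of good strings at those levels; these strings and their prefixes form a \emph{decidable} subtree $G'\subset\hat T$ containing every infinite branch of $\hat T$. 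Only then does your pruning argument (the paper's Lemma 1) go through: in a decidable tree, a wrong turn above the level where the finitely many branches separate roots a finite subtree, and finiteness of a decidable subtree is eventually discoverable by K\"onig's lemma, so an algorithm can wait until one of the two directions is closed and take the other. Without the limsup/complete-level trick --- or some equivalent source of the knowledge ``the enumeration at this level is finished'' --- the effectivization you defer cannot be carried out.
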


Proof. If $x$ is computable, then the initial segment $x_1\ldots x_n$
is a computable function of $n$, and $\KS(f(n)|n)=O(1)$ for every
computable function $f$.

The other direction is more complicated. We provide this
proof since it uses some methods that are typical for the general theory of computation (recursion theory).

Assume that $\KS(x_1\dots x_n|n)<c$ for some $c$ and all $n$.
We have to prove that the sequence $x_1x_2\ldots$ is computable.
Let us say that a string of length $n$ is ``simple'' if $\KS(x|n)<c$. There
are at most $2^c$ simple strings of each length. The set of
all simple strings is enumerable (we can generate them trying
all short descriptions in parallel for all $n$).

We call a string ``good'' if all its prefixes (including the
string itself) are simple. The set of all good strings is also
enumerable. (Enumerating simple strings, we can select strings
whose prefixes are found to be simple.)

Good strings form a subtree in full binary tree. (Full binary
tree is a set of all binary strings. A subset $T$ of full binary
tree is a subtree if all prefixes of every string $t\in T$ are
elements of $T$.)

The sequence $x_1x_2\ldots$ is an infinite branch of the subtree
of good strings. Note that this subtree has at most $2^c$ infinite
branches because each level has at most $2^c$ vertices.

Imagine for a while that subtree of good strings is decidable.
(In fact, it is not the case, and we will need additional
construction.) Then we can apply the following statement:

\textbf{Lemma 1}. \emph{If a decidable subtree has only finite number of infinite branches, all these branches are computable}.

\emph{Proof}. If two branches in a tree are different then they diverge
at some point and never meet again. Consider a level $N$ where
all infinite branches diverge. It is enough to show that for
each branch there is an algorithm that chooses the direction of
branch (left or right, i.e., $0$ or $1$) above level $N$. Since
we are above level $N$, the direction is determined uniquely: if
we choose a wrong direction, no infinite branches are possible.
By compactness (or K\"onig lemma), we know that in this case a subtree rooted in
the ``wrong'' vertex will be finite. This fact can be discovered
at some point (recall that subtree is assumed to be decidable).
Therefore, at each level we can wait until one of two possible
directions is closed, and choose another one. This algorithm works
only above level $N$, but the initial segment can be a
compiled-in constant. Lemma 1 is proven.

Application of Lemma 1 is made possible by the following statement:

\textbf{Lemma 2}. \emph{Let $G$ be a subtree of good strings. Then there exists a decidable subtree $G'\subset G$ that contains all infinite
branches of $G$}.

\emph{Proof}. For each $n$ let $g(n)$ be the number of good strings of
length $n$. Consider an integer $g=\lim\sup g(n)$. In other
words, there exist infinitely many $n$ such that $g(n)=g$
but only finitely many $n$ such that $g(n)>g$. We choose
some $N$ such that $g(n)\le g$ for all $n\ge N$ and
consider only levels $N,N+1,\dots$

A level $n\ge N$ is called \emph{complete} if $g(n)=g$. By our
assumption there are infinitely many complete levels. On the
other hand, the set of all complete levels is enumerable.
Therefore, we can construct a computable increasing sequence
$n_1<n_2<\ldots$ of complete levels. (To find $n_{i+1}$, we
enumerate complete levels until we find $n_{i+1}>n_{i}$.)

There is an algorithm that for each $i$ finds the list of all
good strings of length $n_i$. (It waits until $g$ goods strings
of length $n_i$ appear.) Let us call all those strings (for all
$i$) ``selected''. The set of all selected strings is decidable.
If a string of length $n_j$ is selected, then its prefix of
length $n_i$ (for $i<j$) is selected. It is easy to see now that
selected strings and their prefixes form a decidable subtree
$G'$ that includes all infinite branches of $G$.

Lemma 2 (and Theorem~\ref{computability-complexity})
are proven.

For a computable sequence $x_1x_2\dots$ we have
$\KS(x_1\ldots x_n|n)=O(1)$ and therefore $\KS(x_1\ldots x_n)\le\log
n+O(1)$. One can prove that this last (seemingly weaker) inequality also implies
computability of the sequence. However, the inequality $\KS(x_1\dots x_n)=O(\log n)$ does not imply computability of $x_1x_2\dots$, as the following result shows.

\begin{theorem}
Let $A$ be an enumerable set of natural numbers. Then for its
characteristic sequence $a_0a_1a_2\ldots$ \textup($a_i=1$ if $i\in A$
and $a_i=0$ otherwise\textup) we have
       $$
\KS(a_0a_1\ldots a_n)=O(\log n).
       $$
\end{theorem}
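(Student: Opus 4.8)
The plan is to describe the initial segment $a_0 \ldots a_n$ not bit by bit---that would cost about $n$ bits---but through a short certificate from which a decompressor can reconstruct all $n+1$ bits by itself running an enumeration of $A$. First I would fix once and for all an algorithm enumerating $A$. The only data I would then supply, besides $n$, is the number $k$ of elements of $A$ not exceeding $n$, that is $k=|A\cap\{0,1,\ldots,n\}|$, which is just the number of ones among $a_0,\ldots,a_n$.

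The heart of the argument is that the pair $(n,k)$ already determines the segment effectively. I would build a decompressor that, on input $(n,k)$, runs the fixed enumeration of $A$ while keeping a running count of the enumerated elements that are $\le n$; as soon as this count reaches $k$ it halts, knowing it has already seen every element of $A$ in $\{0,\ldots,n\}$, and it outputs the string whose $i$-th bit is $1$ exactly for the elements $i\le n$ found so far. By construction this string is $a_0\ldots a_n$. I expect this stopping criterion to be the one genuinely delicate point: since $A$ is merely enumerable and not decidable, the decompressor can never rule out on its own that some further element $\le n$ will appear later in the enumeration, and it is precisely the supplied count $k$ that tells it when to stop.

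It then remains to bound the length of the description. Both $n$ and $k$ lie in $\{0,\ldots,n+1\}$, so each has a binary code of length $O(\log n)$; I would glue them together in a self-delimiting way, for instance as $\overline{\bin(n)}\,01\,\bin(k)$ in the style of the earlier proofs, obtaining a single string of length $O(\log n)$ from which $n$ and $k$ are uniquely recoverable. Passing this string to the decompressor above yields a description of $a_0\ldots a_n$ of length $O(\log n)$, and by optimality of the reference decompressor we conclude $\KS(a_0\ldots a_n)=O(\log n)$. Apart from the stopping criterion, every step here is the routine encoding of two numbers bounded by $n+1$, so I anticipate no further obstacle.
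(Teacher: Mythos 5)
Your proof is correct and is essentially the paper's own argument: describe the segment by $n$ together with the count $k=|A\cap[0,n]|$, and let the decompressor run the fixed enumeration of $A$ until exactly $k$ elements $\le n$ have appeared. The self-delimiting pairing of $n$ and $k$ handles the only encoding detail, just as in the paper.
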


\begin{proof}

To specify $a_0\ldots a_n$ it is enough to specify two
numbers. The first is $n$ and the second is the number of $1$'s
in $a_0\ldots a_n$, i.e., the cardinality of the set $A\cap
[0,n]$. Indeed, for a given~$n$, we can enumerate this set, and
since we know its cardinality, we know when to stop the
enumeration. Both of them use $O(\log n)$ bits.
\end{proof}

This theorem shows that initial segments of characteristic
sequences of enumerable sets are far from being incompressible.

As we know that for each $n$ there exists an incompressible
sequence of length $n$, it is natural to ask whether there is an
infinite sequence $x_1x_2\dots$ such that its initial segment of
arbitrary length $n$ is incompressible (or at least $c$-incompressible
for some $c$ that does not depend on $n$). The following theorem
shows that it is not the case.

\begin{theorem}
There exists $c$ such that for every sequence $x_1x_2x_2\dots$
there are infinitely many $n$ such that
        $$
\KS(x_1x_2\ldots x_n)\le n-\log n + c
        $$
\end{theorem}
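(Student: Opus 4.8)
The plan is to beat the trivial bound $\KS(x_1\ldots x_n)\le n+O(1)$ by about $\log n$ for a carefully chosen infinite family of lengths $n$. The idea is self-referential: I would pick each good length $n$ so that the first $\approx\log n$ bits of the prefix $x_1\ldots x_n$ are \emph{forced} to coincide with the binary notation of $n$ itself. Those bits then carry no information beyond $n$, so they can be omitted from the description and reconstructed by the decompressor from the length of the input alone.

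Concretely, for each $\ell\ge 0$ I would let $n=n_\ell$ be the integer whose binary representation is $1x_1x_2\ldots x_\ell$ (prepending a $1$ turns the first sequence bits, whatever they are, into the low-order bits of a genuine $(\ell+1)$-bit number). Then $n_\ell\in[2^\ell,2^{\ell+1})$, so $\ell=\lfloor\log_2 n_\ell\rfloor$, the $n_\ell$ are strictly increasing, and by construction $\bin(n_\ell)=1x_1\ldots x_\ell$. In particular the first $\ell$ bits of the prefix $x_1\ldots x_{n_\ell}$ are exactly $\bin(n_\ell)$ with its leading $1$ deleted, hence a computable function of $n_\ell$.

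Next I would exhibit a decompressor $D$ realizing this. Given an input $p$, $D$ reads $L=|p|$ and recovers $n$ from the relation $L=n-\lfloor\log_2 n\rfloor$ (the map $n\mapsto n-\lfloor\log_2 n\rfloor$ is nondecreasing with steps $0$ or $1$, repeating a value only at each power of two, so $n$ is determined by $L$ up to an $O(1)$ ambiguity settled by an extra constant in the description). Then $D$ sets $\ell=\lfloor\log_2 n\rfloor$, forms $w=x_1\ldots x_\ell$ as $\bin(n)$ with its leading $1$ removed, and outputs $w\,p$. Feeding $D$ the suffix $p=x_{\ell+1}\ldots x_{n}$, of length $n-\ell$, yields $w\,p=x_1\ldots x_n$, so $\KS(x_1\ldots x_n)\le|p|+O(1)=n-\ell+O(1)\le n-\log n+c$. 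Since this works for $n=n_\ell$ with $\ell=0,1,2,\dots$, infinitely many $n$ are produced for \emph{every} sequence, which is the claim.

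I expect the main difficulty to be conceptual rather than computational: realizing that the target length $n$ must be made to depend on the sequence, so that the coincidence ``first bits $=\bin(n)$'' holds by construction rather than by luck. A fixed, sequence-independent rule for which bits to drop and regenerate is easily defeated by an adversarial sequence that simply makes the regenerated bits wrong; the self-referential choice of $n$ is what evades this. Once it is in place, the remaining points are routine: inverting $n\mapsto n-\lfloor\log_2 n\rfloor$ (clean away from powers of two, and otherwise absorbed into $O(1)$) and the estimate $\ell=\lfloor\log_2 n\rfloor\ge\log n-O(1)$ that converts the saved $\ell$ bits into the stated $\log n$ gain.
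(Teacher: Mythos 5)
Your proposal is correct, and it proves the theorem by a genuinely different argument than the paper's. The paper's proof is a covering argument driven by the divergence of $\sum 1/n$: it constructs (somewhat sketchily) decidable sets $A_i$ of $i$-bit strings with $|A_i|\le 2^i/i$ such that every infinite sequence has a prefix in $A_i$ for infinitely many $i$, and then describes an element of $A_i$ by its ordinal number in the lexicographic enumeration of $A_1,A_2,\dots$, which takes $i-\log i+O(1)$ bits because $|A_1|+\dots+|A_i|=O(2^i/i)$. You instead use a self-referential length-encoding trick: choose $n=n_\ell$ so that $\bin(n_\ell)=1x_1\dots x_\ell$, describe the prefix by its suffix $x_{\ell+1}\dots x_{n_\ell}$ alone, and let the decompressor recover $n$ from the length of its input by inverting $n\mapsto n-\lfloor\log_2 n\rfloor$ (the two-fold ambiguity of this inverse at powers of two is, as you note, absorbed by one extra description bit) and regenerate the first $\ell$ bits from $\bin(n)$. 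In effect, your construction exhibits an explicit instance of the paper's covering: the sets $A_n$ of $n$-bit strings whose first $\lfloor\log_2 n\rfloor$ bits spell $\bin(n)$ without its leading $1$ have size at most $2\cdot 2^n/n$, form a decidable union, and catch every sequence infinitely often by construction rather than by a greedy exhaustion of $\Omega$. Your version buys explicitness — a concrete decompressor, a concrete constant, no appeal to the harmonic series; the paper's version buys generality — it isolates the actual mechanism (layered computable coverings whose densities $1/i$ have divergent sum), which explains why $\log n$ is the right threshold here while $2\log n$ is not (compare Problem 3 of that section, where $\sum 2^{-2\log n}<\infty$).
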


\begin{proof}

The main reason why it is the case is that the
series $\sum (1/n)$ diverges. It makes possible to
select the sets $A_1,A_2,\dots$ with following properties:

(1) each $A_i$ consists of strings of length $i$;

(2) $|A_i| \le 2^i/i$;

(3) for every infinite sequence $x_1x_2\ldots$ there are
   infinitely many $i$ such that\\ $x_1\ldots x_i\in A_i$.

(4) the set $A=\cup_i A_i$ is decidable.

Indeed, starting with some $A_i$, we cover about $(1/i)$-fraction
of the entire space $\Omega$ of all infinite sequences. Then we
can choose $A_{i+1}$ to cover other part of $\Omega$, and so on
until we cover all $\Omega$ (it happens because
$1/i+1/(i+1)+\ldots+1/j$ goes to infinity). Then we can start
again, providing a second layer of covering, etc.

It is easy to see that $|A_1|+|A_2|+\ldots+|A_i|=O(2^i/i)$:
Each term is almost twice as big as the preceding one, therefore,
the sum is $O(\text{last term})$.
Therefore, if we write down in lexicographic ordering
all the elements of $A_1,A_2,\ldots$, every element $x$ of $A_i$
will have ordinal number $O(2^i/i)$. This number determines $x$
uniquely and therefore for every $x\in A_i$ we have
        $$
\KS(x)\le \log (O(2^i)/i)=i-\log i + O(1).
        $$.
\end{proof}

\begin{problems}
\subsection*{Problems}

\leavevmode

1. True or false: for every computable function $f$ there exists a
constant $c$ such that $\KS(x|y)\le \KS(x|f(y))+c$ for all $x,y$
such that $f(y)$ is defined.

2. Prove that $\KS(x_1\ldots x_n|n)\le \log n + O(1)$ for every
characteristic sequence of an enumerable set.

3$^{*}$. Prove that there exists a sequence $x_1x_2\ldots$ such
that $\KS(x_1\ldots x_n)\ge n-2\log n - c$ for some $c$ and for
all $n$.

4$^{*}$. Prove that if $\KS(x_1\ldots x_n)\le\log n +c$ for some
$c$ and all $n$, then the sequence $x_1x_2\dots$ is computable.
\end{problems}

\section{Incompressibility and lower bounds}

In this section we show how to apply 
Kolmogorov complexity to obtain a lower bound for the
following problem. Let $M$ be a Turing machine (with one tape)
that duplicates its input: for every string $x$ on the tape (with
blanks on the right of $x$) it produces $xx$. We prove that $M$
requires time $\Omega(n^2)$ if $x$ is an incompressible string
of length $n$. The idea is simple: the head of TM can carry
finite number of bits with limited speed, therefore the speed of
information transfer (measured in bit$\times$cell$/$step) is
bounded and to move $n$ bits by $n$ cells we need $\Omega(n^2)$
steps.

\begin{theorem}
Let $M$ be a Turing machine. Then there exists some constant
$c$ with the following property: for every $k$, every $l\geq k$ and
every $t$, if cells $c_i$ with $i>k$ are initially empty, then the
complexity of the string $c_{l+1}c_{l+2}\ldots$ after $t$ steps
is bounded by $ct/(l-k)+O(\log l+\log t)$.
\end{theorem}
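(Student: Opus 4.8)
The plan is to run the classical \emph{crossing sequence} argument. Fix the machine $M$ once and for all; its transition table and its finite state set $Q$ are constants, so every quantity that depends on $M$ alone costs only $O(1)$ bits and can be built into the decompressor. For a border position $m$, I call the \emph{crossing sequence} at $m$ the chronological list of states in which the head of $M$ finds itself at the instants when it steps across the border between cells $c_m$ and $c_{m+1}$, during the first $t$ steps. Each entry of this list is one of the $|Q|$ states of $M$, hence costs $O(1)$ bits; this is the point on which the whole estimate rests, and I return to it at the end.

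The first thing I would prove is a \emph{reconstruction lemma}: if $m\ge k$, then the contents of all cells $c_i$ with $i>m$ after $t$ steps are completely determined by $M$ together with the crossing sequence at $m$. The reason is that cells with index $>m\ge k$ start out empty and the head begins to the left of them, so the region $\{c_i : i>m\}$ is affected by the rest of the tape only through the head entering it, and every such entry is recorded, with its state, in the crossing sequence. Concretely, one replays the computation inside this region sojourn by sojourn: each time the recorded list says the head enters in state $q$, I simulate $M$ restricted to the region (which is deterministic and uses no information from the left) until the head crosses back out, then wait for the next recorded entry; between sojourns the region is frozen. In particular the target string $c_{l+1}c_{l+2}\ldots$ is recovered.

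Next I would count crossings. In $t$ steps the head moves at most $t$ times, and each move crosses exactly one border; summing over the $l-k+1$ borders with $k\le m\le l$, the total number of crossings is at most $t$. By pigeonhole there is a border $m^{*}$ in this range whose crossing sequence has length at most $t/(l-k+1)\le t/(l-k)$, and since $m^{*}\ge k$ the reconstruction lemma applies to it. To finish, I describe $c_{l+1}c_{l+2}\ldots$ by a tuple consisting of the numbers $l,t,m^{*}$ (costing $O(\log l+\log t)$ bits, as $k\le m^{*}\le l$) together with the crossing sequence at $m^{*}$ (costing $O(1)$ per entry, hence $O(t/(l-k))$ bits). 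A fixed decompressor knowing $M$ then runs the replay and reads off the cells from index $l+1$ on, yielding $\KS(c_{l+1}c_{l+2}\ldots)\le c\,t/(l-k)+O(\log l+\log t)$ for a constant $c$ depending only on $M$.

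The step I expect to be the main obstacle is the \emph{termination} of the replay: at time $t$ the head may sit in the middle of a sojourn inside the region, so naively completing every sojourn would reproduce the wrong snapshot. The honest fix is to also record whether the head is to the right of $m^{*}$ at time $t$ and, if so, how many steps of the final sojourn to simulate; this is one further number bounded by $t$, so it adds only $O(\log t)$ bits and leaves the estimate intact. The second point to be careful about is that the per-crossing cost is genuinely $O(1)$ rather than $O(\log t)$: this is exactly why I encode only the \emph{states} at the crossings and never their time stamps, the timing being regenerated for free by the deterministic replay.
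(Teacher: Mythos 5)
Your proposal is correct and follows essentially the same route as the paper's proof: the crossing-sequence (trace) argument with $O(1)$ bits per recorded state, reconstruction of the region to the right of the chosen border by replaying the sojourns, and an extra count of at most $t$ steps to stop the final sojourn at time $t$ (which the paper records as ``the number of steps after the last crossing''). The only cosmetic difference is that you select one best border by pigeonhole, whereas the paper sums the inequalities $\KS(c_{l+1}c_{l+2}\ldots)\le cN_u+O(\log l+\log t)$ over all borders $u$ between $k$ and $l$ and divides by $l-k$ --- an equivalent averaging step.
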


Roughly speaking, if we have to move information at least by
$l-k$ cells, then we can bring at most $ct/(l-k)$ bits into the
area where there was no information at the beginning.

One technical detail: string $c_{l+1}c_{l+2}\ldots$ denotes the
visited part of the tape (and is finite).

This theorem can be used to get a lower bound for duplication.
Let $x$ be an incompressible string of length $n$. We apply
duplicating machine to the string $0^nx$ (with $n$ zeros before
$x$). After the machine terminates in $t$ steps, the tape is
$0^nx0^nx$. Let $k=2n$ and $l=3n$. We can apply our theorem and
get $n\leqslant \KS(x)\leqslant ct/n+O(\log n+\log t)$. Therefore,
$t=\Omega(n^2)$ (note that $\log t<2\log n$ unless $t>n^2$).

\begin{proof}

Let $u$ be an arbitrary point on the tape between $k$ and $l$. A custom
officer records what TM carries is its head while crossing point
$u$ from left to right (but not the time of crossing). The
recorded sequence $T_u$ of TM-states is called \emph{trace} (at
point $u$). Each state occupies $O(1)$ bits since the set of
states is finite. This trace together with $u$, $k$, $l$ and the
number of steps after the last crossing (at most $t$) is enough
to reconstruct the contents of $c_{l+1}c_{l+2}\ldots$ at the
moment~$t$. (Indeed, we can simulate the behavior of $M$ on the
right of $u$.) Therefore, $\KS(c_{l+1}c_{l+2}\dots) \le cN_u
+O(\log l)+O(\log t)$ where $N_u$ is the length of $T_u$, i.e.,
the number of crossings at $u$.

Now we add these inequalities for all $u=k,k+1,\dots,l$. The
sum of $N_u$ is bounded by $t$ (since only one crossing is
possible at a given time). So
        $$
(l-k)K(c_{l+1}c_{l+2}\dots) \le t + (l-k)[O(\log l)+O(\log t)]
        $$
and our theorem is proven.

\end{proof}

The original result (one of the first lower bounds for time
complexity) was not for duplication but for palindrome
recognition: every TM that checks whether its input is a
palindrome (like \texttt{abadaba}) uses $\Omega(n^2)$ steps for
some inputs of length $n$.  This statement can also be proven by the incompressibility method.

Proof sketch: Consider a palindrome $xx^R$ of length $2n$. Let
$u$ be an arbitrary position in the first half of $xx^R$: $x=yz$ and
length of $y$ is $u$. Then the trace $T_u$ determines $y$
uniquely if we record states of TM while crossing checkpoint $u$
in both directions. Indeed, if strings with different $y$ have
the same trace, we can mix the left part of one computation with
the right part of another one and get a contradiction. Taking
all $u$ between $|x|/4$ and $|x|/2$, we get the required bound.

\section{Incompressibility and prime numbers}

Let us prove that there are infinitely many prime numbers.
Imagine that there are only $n$ prime numbers $p_1,\dots,p_n$.
Then each integer $N$ can be factored as
        $$
N=p_1^{k_1}p_2^{k_2}\ldots p_n^{k_n}.
        $$
where all $k_i$ do not exceed $\log N$. Therefore, each $N$ can
be described by $n$ integers $k_1,\dots,k_n$, and $k_i\leqslant
\log N$ for every $i$, so the total number of bits needed to
describe $N$ is $O(n\log\log N)$. But $N$ corresponds to
a string of length $\log N$, so we get a contradiction if this
string is incompressible.

\section{Incompressible matrices}

Consider an incompressible Boolean matrix of size $n\times n$.
Let us prove that its rank (over the field $\mathbb{F}_2=\{0,1\}$)
is greater than $n/2$.

Indeed, imagine that its rank is at most $n/2$. Then we can
select $n/2$ columns of the matrix such that all other columns are
linear combinations of the selected ones. Let $k_1,\ldots,k_{n/2}$
be the numbers of these columns.

Then, instead of specifying all bits of the matrix we can specify:

(1) the numbers $k_1,\dots,k_n$ ($O(n\log n)$ bits)

(2) bits in the selected columns ($n^2/2$ bits)

(3) $n^2/4$ bits that are coefficients in linear combinations of
    selected columns needed to get non-selected columns,
    ($n/2$ bits for each of $n/2$ non-selected columns).

Therefore, we get $0.75n^2+O(n\log n)$ bits instead of $n^2$
needed for incompressible matrix.

Of course, it is trivial to find a $n\times n$ Boolean matrix of
full rank, but this construction is interesting as an illustration of the incompressibility technique.

\section{Incompressible graphs}

An undirected graph with $n$ vertices can be represented by a bit string of length $n(n-1)/2$ (its adjacency matrix is symmetric). We call a graph \emph{incompressible} if this string is incompressible.

Let us show that an incompressible graph is necessarily connected. Indeed,
imagine that it can be divided into two connected components, and
one of them (the smaller one) has $k$ vertices ($k<n/2$). Then the graph
can be described by

(1) the list of numbers of $k$ vertices in this component ($k\log n$ bits), and

(2) $k(k-1)/2$ and $(n-k)(n-k-1)/2$ bits needed to describe
  both components.

In (2) (compared to the full description of the graph) we save
$k(n-k)$ bits for edges that go from one component to another
one, and $k(n-k)>O(k\log n)$ for big enough $n$ (recall that
$k<n/2$).

\section{Incompressible tournaments}

Let $M$ be a tournament, i.e., a complete directed graph with
$n$ vertices (for every two different vertices $i$ and $j$ there
exists either edge $i\to j$ or $j\to i$ but not both).

A tournament is \emph{transitive} if its vertices are linearly
ordered by the relation $i\to j$.
\medskip

\emph{Lemma}. \emph{Each tournament of size $2^k-1$ has a transitive
sub-tournament of size~$k$.}

\begin{proof}
 (Induction by $n$.) Let $x$ be a vertex. Then $2^k-2$
remaining vertices are divided into two groups: ``smaller'' than
$x$ and ``greater'' than $x$. At least one of the groups has
$2^{k-1}-1$ elements and contains a transitive sub-tournament of
size $k-1$. Adding $x$ to it, we get a transitive sub-tournament
of size $k$.
\end{proof}

This lemma gives a lower bound on the size of graph that does
not include transitive $k$-tournament.

The incompressibility method provides an upper bound: an
incompressible tournament with $n$ vertices may have transitive
sub-tournaments of $O(\log n)$ size only.

A tournament with $n$ vertices is represented by $n(n-1)/2$
bits. If a tournament $R$ with $n$ vertices has a transitive
sub-tournament $R'$  of size $k$, then $R$ can be described~by:

(1) the numbers of vertices in $R'$ listed according to linear
    $R'$-ordering ($k\log n$ bits), and

(2) remaining bits in the description of $R$ (except for bits
    that describe relations inside $R'$)

In (2) we save $k(k-1)/2$ bits, and in (1) we use $k\log n$
additional bits. Since we have to lose more than we win,
$k=O(\log n)$.

\section{Discussion}

All these results can be considered as direct reformulation of
counting (or probabilistic arguments). Moreover, counting gives
us better bounds without $O()$-notation.

But complexity arguments provide an important heuristics: We
want to prove that random object $x$ has some property and note
that if $x$ does not have it, then $x$ has some regularities
that can be used to give a short description for $x$.

\begin{problems}
\subsection*{Problems}

\leavevmode

1. Let $x$ be an incompressible string of length $n$ and let $y$
   be a longest substring of $x$ that contains only zeros. Prove
   that $|y|=O(\log n)$

2$^{*}$. Prove that $|y|=\Omega(\log n)$.

3. Let $w(n)$ be the largest integer such that for
   each tournament $T$ on $N=\{1,\dots,n\}$ there exist disjoint
   sets $A$ and $B$, each of cardinality $w(n)$, such that
   $A\times B\subseteq T$. Prove that $w(n)\le 2\lceil\log
   n\rceil$. (Hint: add $2w(n)\lceil\log n\rceil$ bit to
   describe nodes, and save $w(n)^2$ bits on edges. See~\cite{li-vitanyi} and~\cite{erdos-spencer}.)
\end{problems}

\section{$k$- and $k+1$-head automata}

A $k$-head finite automaton has $k$ (numbered)
heads that scan from left to right the input string (which is the same for all heads). Automaton has a finite number of states. Transition table specifies an action for each state and each $k$-tuple of input symbols. Action is a pair:
the new state, and the subset of heads to be moved. (We may assume that
at least one head should be moved; otherwise we can precompute
the next transition. We assume also that the input string is followed by blank symbols, so the automaton knows which heads have seen the entire input string.)

One of the states is called an \emph{initial} state. Some states
are \emph{accepting} states. An automaton $A$ accepts string $x$
if $A$ comes to an accepting state after reading $x$, starting
from the initial state and all heads placed at the left-most character. Reading $x$ is finished when all heads leave $x$. We require that this happens for arbitrary string $x$.

For $k=1$ we get the standard notion of finite automaton.

\emph{Example}: A $2$-head automaton can recognize strings of form
$x\#x$ (where $x$ is a binary string). The first head moves to
$\#$-symbol and then both heads move and check whether they see
the same symbols.

It is well known that this language cannot be recognized by
$1$-head finite automaton, so $2$-head automata are more
powerful that $1$-head ones.

Our goal is to prove the same separation between $k$-heads automata and $(k+1)$-heads automata for arbitrary~$k$.

\begin{theorem}
For every $k\ge 1$ there exists a language that can be recognized by
a $(k+1)$-head automaton but not by a $k$-head one.
\end{theorem}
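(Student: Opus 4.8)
The plan is to exhibit an explicit \emph{witness language} and to separate the two models on it: a simple simulation gives the upper bound, and the incompressibility method gives the lower bound. For the witness I would take the ``block palindrome''
$$
B_k=\{\,w_1\#w_2\#\cdots\#w_k\#w_k\#\cdots\#w_2\#w_1 : w_i\in\{0,1\}^*\,\},
$$
which for $k=1$ is exactly the language $w\#w$ of the example above. The upper bound is a direct generalization of the two-head trick: use one ``marching'' head that scans the left copies $w_1,w_2,\dots,w_k$ in order, together with $k$ further heads, the $t$-th of which travels to the start of the right copy of $w_t$, waits there (all other heads staying while the marcher moves), and then advances in lockstep with the marcher to compare $w_t$ with its right copy bit by bit. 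This uses $k+1$ heads and recognizes $B_k$.

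For the lower bound I would argue by contradiction, feeding a hypothetical $k$-head automaton $A$ an input of $B_k$ in which every block $w_i$ is an incompressible string of some common length $n$ (and the whole tuple is incompressible given $n$). The argument splits into two lemmas. The \emph{combinatorial} lemma states that a one-way $k$-head automaton can perform at most $k-1$ \emph{synchronized comparisons}, where a synchronized comparison of a pair is a time interval during which two heads scan the two copies of that pair in lockstep. The reason is that the right copies occur in the \emph{reverse} order of the left copies, so, as the ``left'' head of a comparison advances to later pairs, the matching ``right'' head would have to move to earlier positions; since heads are one-way this is impossible, and no head can serve as the right scanner of two different pairs. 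Hence at most $k-1$ of the $k$ pairs can be synchronously compared, and some pair $i^*$ is never synchronously compared. The \emph{incompressibility} lemma states that, without a synchronized comparison, $A$ cannot reliably test equality of the two incompressible copies of $w_{i^*}$.

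To see the second lemma, place a cut just to the right of the left copy of $w_{i^*}$. Since the heads move only left to right, each of them crosses this cut at most once, so the total information about $w_{i^*}$ that can reach the part of the computation lying to the right of the cut is carried by at most $k$ head-states, i.e.\ $O(1)$ bits. If this sufficed to decide whether the right copy of $w_{i^*}$ equals the left one, then $w_{i^*}$ could be reconstructed from these $O(1)$ bits together with the (fixed, hence cheaply describable) remaining blocks and a description of $A$, contradicting $\KS(w_{i^*})\ge n$; equivalently, two distinct candidates for $w_{i^*}$ would leave the same trace across the cut, so $A$ would accept some input outside $B_k$.

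The hard part will be making the incompressibility lemma fully rigorous, because a one-way multihead automaton has a \emph{single} finite control that couples heads on both sides of the cut at \emph{every} step, not only when a head physically crosses it: while a head scans the left copy of $w_{i^*}$, another head may be reading elsewhere and the shared state reflects both. The naive ``$k$ crossings, hence $O(1)$ bits'' bound therefore over-counts the channel and must be justified by exploiting that, in the absence of a synchronized comparison, the heads outside the block of $w_{i^*}$ are reading \emph{fixed} content, so their behavior depends on $w_{i^*}$ only through the bounded state that has already crossed the cut. Choosing the cut position and the timing carefully so as to isolate the contribution of $w_{i^*}$, and likewise pinning down the combinatorial lemma against arbitrary clever schedules (several marchers, partially overlapping comparisons), are the two points that require genuine care.
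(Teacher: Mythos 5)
Your proposal has a fatal gap: the witness language $B_k$ does not separate $(k{+}1)$-head from $k$-head automata once $k\ge 3$, because a $k$-head automaton \emph{can} recognize $B_k$. The point missed is that comparisons need not share a common ``marcher'': any \emph{pair} of heads can check one block, and with $k$ heads there are $k(k-1)/2$ pairs, which is $\ge k$ when $k\ge 3$. Concretely, for $k=3$ and input $w_1\#w_2\#w_3\#w_3\#w_2\#w_1$, park head 2 at the left copy of $w_1$ and head 3 at the left copy of $w_2$, and let head 1 run right: when head 1 enters the right copy of $w_2$, head 3 advances in lockstep and checks $w_2$; when head 1 enters the right copy of $w_1$, head 2 checks $w_1$; after that, heads 2 and 3 are still to the left of the two middle blocks, so they move there and check $w_3$ against each other. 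This performs three synchronized comparisons with three heads and thus also refutes your combinatorial lemma: head 1 serves as the ``right scanner'' of two different comparisons (pairs $(1,3)$ and $(1,2)$), which is possible precisely because the right copies appear in reverse order, so a single rightward head meets them one after another while different parked heads wait at the matching left copies. The correct combinatorial statement --- the one the paper uses --- is that each \emph{unordered pair} of heads can ``cover'' at most one block (once one head of a pair is inside the left copy of some $w_m$ while the other is inside the right copy, that pair can never bracket another block), giving the bound $k(k-1)/2$, which is tight.

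Consequently the witness language must have more than $k(k-1)/2$ blocks; the paper takes $L_N$ with $N=k(k+1)/2$ blocks, which is recognizable with $k+1$ heads by applying the pairing strategy above recursively, but not with $k$ heads, since then some block is covered by no pair. Your second (incompressibility) lemma must then be proved for such an uncovered block, and the difficulty you flag --- that the shared finite control couples both copies at every step, so a single ``cut'' undercounts the information flow --- is genuine. The paper resolves it not with a cut but with \emph{critical sections}: the time intervals during which some head is inside either copy of the uncovered block. There are at most $2k$ such sections, each is purely L-critical or purely R-critical (this is exactly where uncoveredness is used), and recording the state and all head positions at the endpoints of every section costs only $O(\log l)$ bits. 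If two different blocks $x_m\neq x_m'$ produced the same record, one could splice the L-critical sections of one accepting run with the R-critical sections of the other and obtain an accepting run on a string not in the language. Even after fixing the language, you would need to replace your cut argument by something of this kind.
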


\begin{proof}
The language is similar to the language considered above. For
example, for $k=2$ we consider a language consisting of strings
        $$
x\# y \# z \# z\# y \#x
        $$
Using three heads, we can easily recognize this language.
Indeed, the first head moves from left to right and ignores the
left part of the input string, while the second and the third one are moved to the left copies of $x$ and $y$. These copies are checked when the first head crosses the right copies of $y$ and $x$. Then only one unchecked string $z$ remains, and there are two heads at the left of it, so this can be done.  

The same approach shows that an automaton
with $k$ heads can recognize language $L_N$ that
consists of strings
        $$
x_1 \# x_2\# \dots\# x_N \# x_N \#\dots\# x_2\#x_1
        $$
for $N=(k-1)+(k-2)+\ldots+1= k(k-1)/2$ (and for all smaller $N$).

Let us prove now that $k$-head automaton $A$ cannot recognize
$L_N$ if $N$ is bigger than $k(k-1)/2$. (In particular, no
automaton with $2$ heads can recognize $L_3$ and even $L_2$.)

Let us fix a string
        $$
x=x_1 \# x_2\# \dots\# x_N \# x_N \#\dots\# x_2\#x_1
        $$
where all $x_i$ have the same length $l$ and the string
$x_1x_2\ldots x_N$ is an incompressible string (of length $Nl$).
String $x$ is accepted by $A$. In our argument the following
notion is crucial: We say that an (unordered) pair of heads
``covers'' $x_m$ if at some point one head is inside the left
copy of $x_m$ while the other head (from this pair) is inside the right
copy.

After that the right head can visit only strings
$x_{m-1},\dots,x_1$ and left head cannot visit the left counterparts
of those strings (they are on the left of it). Therefore, only
one $x_m$ can be covered by a given pair of heads.

In our example we had three heads (and, therefore, three pairs of
heads) and each string $x_1,x_2,x_3$ was covered by one pair.

The number of pairs is $k(k-1)/2$ for $k$ heads. Therefore (since $N>k(k-1)/2$) there exists some $x_m$ that was not covered at all during the
computation. We show that conditional complexity of $x_m$ when
all other $x_i$ are known does not exceed $O(\log l)$. (The
constant here depends on $N$ and $A$, but not on $l$.) This
contradicts to the incompressibility of $x_1\ldots x_N$ (we can
replace $x_m$ by self-delimiting description of $x_m$ when other
$x_i$ are known and get a shorter description of an incompressible
string).

The bound for the conditional complexity of $x_m$ can be
obtained in the following way. During the accepting computation
we take special care of the periods when one of the heads is
inside $x_m$ (on the left or on the right). We call these periods ``critical
sections''. Note that each critical section is either L-critical
(some heads are inside the left copy of $x_m$) or R-critical but not
both (no pair of heads covers $x_m$). Critical section starts
when one of the heads moves inside $x_m$ (other heads can also
move in during the section) and ends when all heads leave $x_m$.
Therefore, the number of critical sections during the computation
is at most $2k$.

Let us record the positions of all heads and the state of
automaton at the beginning and at the end of each critical
section. This requires $O(\log l)$ bits (note that we do not
record time).

We claim that this information (called \emph{trace} in the
sequel) determines $x_m$ if all other $x_i$ are known. To see
why, let us consider two computations with different $x_m$ and
$x_m'$ but the same $x_i$ for $i\ne m$ and the same traces.

Equal traces allow us to ``cut and paste'' these two
computations on the boundaries of critical sections. (Outside
the critical sections computations are the same, because the
strings are identical except for $x_m$, and state and positions
after each critical section are included in a trace.) Now we
take L-critical sections from one computation and R-critical
sections from another one. We get a mixed computation that is
an accepting run of $A$ on a string that has $x_m$ on the
left and $x_m'$ on the right. Therefore, $A$ accepts a string that
it should not accept. 
\end{proof}

\section{Heap sort: time analysis}

Let us assume that we sort numbers $1,2,\dots,N$. We have $N!$
possible permutations. Therefore, to specify a permutation we
need about $\log (N!)$ bits. Stirling's formula says that $N!\approx
(N/e)^N$, therefore the number of bits needed to specify one
permutation is $N\log N + O(N)$. As usual, most of the
permutations are incompressible in the sense that they have
complexity at least $N\log N-O(N)$. We estimate the number of
operations for heap sort in the case of an incompressible permutation.

Heap sort (we assume in this section that the reader knows what it is) consists of two phases. First phase creates a heap out
of the input array. (The indexes in array $a[1..N]$ form a tree where $2i$
and $2i+1$ are sons of $i$. The heap property says that ancestor has
bigger value that its descendants.)

Transforming the array into a heap goes as follows: for each
$i=N,N-1,\dots,1$ we make the heap out of subtree rooted at $i$ assuming that $j$-subtrees for $j>i$ are heaps.
Doing this for the node $i$, we need $O(k)$ steps where $k$ is the
distance between node $i$ and the leaves of the tree. Here
$k=0$ for about half of the nodes, $k=1$ for about $1/4$ of the nodes
etc., and the average number of steps per node is $O(\sum
k2^{-k})=O(1)$; the total number of operations is $O(N)$.

Important observation: after the heap is created, the complexity
of array $a[1..N]$ is still $N\log N+O(N)$, if the initial
permutation was incompressible. Indeed, ``heapifying'' means
composing the initial permutation with some other permutation
(which is determined by results of comparisons between array
elements). Since the total time for heapifying is $O(N)$, there are
at most $O(N)$ comparisons and their results form a bit string
of length $O(N)$ that determines the heapifying permutation. The
initial (incompressible) permutation is a composition of the
heap and $O(N)$-permutation, therefore heap has complexity at
least $N\log N-O(N)$.

The second phase transforms the heap into a sorted array. At every stage the
array is divided into two parts: $a[1..n]$ is still a heap, but
$a[n+1..N]$ is the end of the sorted array. One step of
transformation (it decreases $n$ by $1$) goes as follows: the
maximal heap element $a[1]$ is taken out of the heap and
exchanged with $a[n]$. Therefore, $a[n..N]$ is now sorted, and the
heap property is almost true: ascendant has bigger value
that descendant unless ascendant is $a[n]$ (that is now in root
position). To restore heap property, we move $a[n]$ down the
heap. The question is how many steps do we need. If the final
position is $d_n$ levels above the leaves level, we need $\log
N-d_n$ exchanges, and the total number of exchanges is $N\log
N-\sum d_n$.

We claim that $\sum d_n=O(N)$ for incompressible permutations,
and, therefore, the total number of exchanges is $N\log N+O(N)$.

So why $\sum d_n$ is $O(N)$? Let us record the direction of
movements while elements fall down through the heap (using
$0$ and $1$ for left and right). We don't use delimiters to
separate strings that correspond to different $n$ and use $N\log
N-\sum d_i$ bits altogether. Separately we write down all $d_n$
in self-delimiting way. This requires $\sum (2\log d_i+O(1))$ bits.
All this information allows us to reconstruct the exchanges during
the second phase, and therefore to reconstruct the initial state
of the heap before the second phase. Therefore, the complexity
of heap before the second phase (which is $N\log N-O(N)$) does not
exceed $N\log N -\sum d_n  + \sum(2\log d_n) + O(N)$, therefore,
$\sum (d_n-2\log d_n)=O(N)$. Since $2\log d_n < 0.5d_n$ for
$d_n>16$ (and all smaller $d_n$ have sum $O(N)$ anyway),
we conclude that $\sum d_n=O(N)$.

\begin{problems}
\subsection*{Problems}

\leavevmode

1$^{*}$. Prove that for most pairs of binary strings $x,y$ of
length $n$ every common subsequence of $x$ and $y$ has length at
most $0.99n$ (for large enough $n$).
\end{problems}

\section{Infinite random sequences}

There is some intuitive feeling saying that
a fair coin tossing cannot produce sequence
        $$
00000000000000000000000\dots
        $$
or
        $$
01010101010101010101010\dots,
        $$
so infinite sequences of zeros and ones can be divided in
two categories. \emph{Random} sequences are sequences that are plausible outcomes of  coin tossing; \emph{non-random} sequences
(including the two sequences above) are not plausible. It is more difficult
to provide an example of a random sequence (it somehow becomes
non-random after the example is provided), so our intuition is
not very reliable here.

\section{Classical probability theory}

Let $\Omega$ be the set of all infinite sequences of zeros and
ones. We define the \emph{uniform Bernoulli measure} on $\Omega$
as follows. For each binary string $x$ let $\Omega_x$ be the set
of all sequences that have prefix $x$ (a subtree rooted at $x$).

Consider a measure $P$ such that $P(\Omega_x)=2^{-|x|}$.
Measure theory allows us to extend this measure to all Borel
sets (and even further).

A set $X\subset \Omega$ is called a \emph{null} set if $P(X)$
is defined and $P(X)=0$. Let us give a direct equivalent
definition that is useful for constructive version:

A set $X\subset \Omega$ is a null set if for every $\varepsilon>0$
there exists a sequence of binary strings $x_0,x_1,\dots$ such that

(1) $X \subset \Omega_{x_0}\cup\Omega_{x_1}\cup\ldots$;

(2) $\sum\limits_i 2^{-|x_i|} < \varepsilon$.

Note that $2^{-|x_i|}$ is $P(\Omega_{x_i})$ according to our
definition. In words: $X$ is a null set if it can be covered by
a sequence of intervals $\Omega_{x_i}$ of arbitrarily small total measure.

\emph{Examples}: Each singleton is a null set. A countable union of
null sets is a null set. A subset of a null set is a null set.
The set $\Omega$ is not a null set (by compactness). The set of all
sequences that have zeros at positions with even numbers is a
null set.

\section{Strong Law of Large Numbers}

Informally, the strong law of large numbers (SLLN) says that random sequences $x_0x_1\ldots$
have limit frequency $1/2$, i.e.,
        $$
\lim_{n\to\infty} \frac{x_0+x_1+\ldots+x_{n-1}}{n}=\frac{1}{2}.
        $$
However, the word ``random'' here is used only as a shortcut:
the full meaning is that the set of all sequences that do not
satisfy SLLN (do not have limit
frequency or have it different from $1/2$) is a null set.

In general, when people say that``$P(\omega)$ is true for random $\omega\in\Omega$'', it usually means that the set
        $$
\{\omega\mid \text{$P(\omega)$ is false}\}
        $$
is a null set.

\emph{Proof sketch for SLLN}: it is enough to show that for every
$\delta>0$ the set $N_{\delta}$ of sequences that have
frequency greater than $1/2+\delta$ for infinitely many
prefixes, has measure $0$. (After that we use that a countable
union of null sets is a null set.) For each $n$ consider the
probability $p(n,\delta)$ of the event ``random string of
length $n$ has more than $(1/2+\delta)n$ ones''. The
crucial observation is that
        $$
\sum_{n} p(n,\delta) < \infty
        $$
for each $\delta>0$. (Actually, $p(n,\delta)$ is
exponentially decreasing as $n\to\infty$; proof uses
Stirling's approximation for factorials.) If the series
above has a finite sum, for every $\varepsilon >0$ one can find
an integer $N$ such that
        $$
\sum_{n>N} p(n,\delta) < \epsilon.
        $$
Consider all strings $z$ of length greater than $N$ that have
frequency of ones greater than $1/2+\delta$. The sum of
$P(\Omega_z)$ is equal to $\sum_{n>N} p(n,\delta) < \epsilon,$
and $N_{\varepsilon}$ is covered by family $\Omega_z$. 

\section{Effectively null sets}

The following notion was introduced by Per Martin-L\"of. A set
$X\subset \Omega$ is an \emph{effectively null} set if there is
an algorithm that gets a rational number $\varepsilon>0$ as
input and enumerates a set of strings
$\{x_0,x_1,x_2,\dots\}$ such that

(1) $X \subset \Omega_{x_0}\cup\Omega_{x_1}\cup\Omega_{x_2}\cup\ldots$;

(2) $\sum\limits_i 2^{-|x_i|} < \varepsilon$.

The notion of effectively null set remains the same if we allow
only $\varepsilon$ of form $1/2^k$, or if we replace ``$<$'' by
``$\le$'' in (2).

Every subset of an effectively null set is also an effectively
null set (evident observation).

For a computable infinite sequence $\omega$ of zeros and ones the singleton $\{\omega\}$ is a null set. (The same happens for all non-random $\omega$, see below.) 

An union of two effectively null sets is an effectively null set.
(Indeed, we can find enumerable coverings of size
$\varepsilon/2$ for both and combine them.)

More general statement requires preliminary definition. By
``covering algorithm'' for an effectively null set we mean an
algorithm mentioned in the definition (that gets $\varepsilon$
and generates a covering sequence of strings with sum of
measures less than~$\varepsilon$).

\textbf{Lemma}. \emph{Let $X_0,X_1,X_2,\dots$ be a sequence of effectively null
sets such that there exists an algorithm that given an integer $i$
produces \textup(some\textup) covering algorithm for $X_i$. Then $\cup X_i$ is
an effectively null set}.

\begin{proof}
To get an $\varepsilon$-covering for $\cup X_i$, we put
together $(\varepsilon/2)$-covering for $X_0$,
$(\varepsilon/4)$-covering for $X_1$, etc. To generate this
combined covering, we use the algorithm that produces covering for
$X_i$ from $i$. 
\end{proof}

\section{Maximal effectively null set}

Up to now the theory of effectively null sets just repeats the
classical theory of null sets. The crucial difference is in the
following theorem (proved by Martin-L\"of):

\begin{theorem}
        \label{maximal-null}
There exists a maximal effectively null set, i.e., an
effectively null set $N$ such that $X\subset N$ for every
effectively null set $X$.
\end{theorem}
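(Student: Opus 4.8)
The plan is to collect all effectively null sets into a single one by enumerating all candidate algorithms and repairing each so that it becomes a genuine covering algorithm. First I would fix a computable enumeration $A_0, A_1, A_2, \dots$ of all algorithms that take a rational $\varepsilon>0$ as input and enumerate a set of strings. The obstacle is that most of these $A_i$ are not covering algorithms at all: there is no way to decide whether the strings $x_0,x_1,\dots$ enumerated by $A_i(\varepsilon)$ satisfy $\sum_j 2^{-|x_j|}<\varepsilon$. So I cannot simply take the union of ``the sets covered by the $A_i$'', since most $A_i$ do not define an effectively null set.

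The key trick is to truncate each $A_i$ on the fly so that it can never violate the measure bound. Define $A_i'(\varepsilon)$ to run $A_i(\varepsilon)$ while maintaining the running total $\sum_j 2^{-|x_j|}$ of the measures of the strings produced so far; it outputs each new string only if doing so keeps this total strictly below $\varepsilon$, and it stops as soon as the next string would push the total to $\varepsilon$ or above. By construction $A_i'$ is a legitimate covering algorithm, so it defines an effectively null set $X_i$ (the set of sequences lying in every $\varepsilon$-cover it produces), and the map $i\mapsto A_i'$ is itself computable. The point of the truncation is that it is invisible on genuine covering algorithms: if $A_i$ already was a covering algorithm for some effectively null set $Y$, then its partial sums always stay below $\varepsilon$, the cutoff never fires, $A_i'$ agrees with $A_i$ on every input, and hence $X_i=Y$.

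With this in hand I would set $N=\bigcup_i X_i$ and verify the two required properties. Maximality is immediate: any effectively null set $Y$ has some covering algorithm, which appears as some $A_i$ in the enumeration, so $Y=X_i\subseteq N$. That $N$ is itself an effectively null set follows directly from the Lemma on unions: the family $X_0,X_1,\dots$ consists of effectively null sets for which a covering algorithm (namely $A_i'$) can be produced uniformly from $i$, so $\bigcup_i X_i$ is effectively null, with an $\varepsilon$-cover assembled from an $(\varepsilon/2^{i+1})$-cover of each $X_i$.

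The main obstacle is exactly the undecidability noted above --- that we cannot sift the true covering algorithms out of the enumeration --- and the whole argument hinges on the truncation device, which sidesteps this by forcing every algorithm to respect the measure bound while leaving the honest ones untouched.
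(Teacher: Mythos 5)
Your proof is correct and follows essentially the same route as the paper: enumerate all candidate algorithms, truncate each one on the fly so the measure bound can never be violated (a modification that leaves genuine covering algorithms untouched), and then apply the union lemma to the resulting uniform family. The only cosmetic differences are that the paper drops offending strings and continues rather than halting, and it explicitly notes (and dismisses) the technicality that the truncated algorithm's infinite sum may equal $\varepsilon$ exactly --- a point your write-up glosses over but which is harmless, since the definition of effectively null sets is unchanged if ``$<$'' is replaced by ``$\le$''.
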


(Trivial) reformulation: the union of all effectively null sets
is an effectively null set.

\begin{proof}
We cannot prove this theorem by applying the above lemma to all
effectively null sets (there are uncountably many of them, since every subset of an effectively null set is an effectively null
set).

But we don't need to consider all effectively null sets; it is
enough to consider all covering algorithms. For a given
algorithm (that gets positive rational number as input and
generates binary strings) we cannot say (effectively) whether it
is a covering algorithm or not. But we may artificially enforce
some restrictions: if algorithm (for given $\varepsilon>0$)
generates strings $x_0,x_1,\ldots$, we can check whether
$2^{-|x_0|}+\ldots+2^{-|x_k|}<\varepsilon$ or not; if not, we
delete $x_k$ from the generated sequence. Let us denote by $A'$
the modified algorithm (if $A$ was an original one). It is easy
to see that

(1) if $A$ was a covering algorithm for some effectively null
set, then $A'$ is equivalent to $A$ (the condition that we
enforce is never violated).

(2) For every $A$ the algorithm $A'$ is (almost) a covering algorithm
for some null set; the only difference is that the infinite sum
$\sum 2^{-|x_i|}$ can be equal to $\varepsilon$ even if all
finite sums are strictly less than $\varepsilon$.

But this is not important: we can apply the same arguments (that
were used to prove Lemma) to all algorithms $A'_0,A'_1,\ldots$
where $A_0,A_1,\ldots$ is a sequence of all algorithms (that get
positive rational numbers as inputs and enumerate sets of binary
strings). 

\end{proof}

\textbf{Definition}.
A sequence $\omega$ of zeros and ones is
called (Martin-L\"of) \emph{random} with respect to the uniform
Bernoulli measure if $\omega$ does not belong to the maximal
effectively null set.

(Reformulation: ``\dots if $\omega$ does not belong to any
effectively null set.'' )
\medskip

Therefore, to prove that some sequence is non-random we need to show that it belongs to some effectively null set.

Note also that a set $X$ is an effectively null set if and only
if all elements of $X$ are non-random.

This sounds like a paradox for people familiar with classical
measure theory. Indeed, we know that measure somehow reflects
the ``size'' of set. Each point is a null set, but if we have
too many points, we get a non-null set. Here (in Martin-L\"of
theory) the situation is different: if each element of some set forms an effectively null singleton (i.e., is non-random), then the entire set is an
effectively null one.

\begin{problems}
\subsection*{Problems}

\leavevmode

1. Prove that if sequence $x_0x_1x_2\dots$ of zeros and ones is
(Martin-L\"of) random with respect to uniform Bernoulli
measure, then the sequence $000x_1x_2\dots$ is also random.
Moreover, adding arbitrary finite prefix to a random sequence, we get a
random sequence, and adding arbitrary finite prefix to a non-random
sequence, we get a non-random sequence.

2. Prove that every (finite) binary string appears infinitely many
times in every random sequence.

3. Prove that every computable sequence is non-random. Give an
example of a non-computable non-random sequence.

4. Prove that the set of all computable infinite sequences of
zeros and ones is an effectively null set.

5$^*$. Prove that if a sequence $x_0x_1\dots$ is not random, then 
$n-\KS(x_0\dots x_{n-1}|n)$ tends to infinity as $n\to\infty$.
\end{problems}

\section{Gambling and selection rules}

Richard von Mises suggested (around 1910) the following notion
of a random sequence (he uses German word \emph{Kollektiv}) as a
basis for probability theory. A sequence $x_0x_1x_2\dots$
is called (Mises) random, if

(1)~it satisfies the strong law of large numbers, i.e., the limit frequency of $1$'s  in it is $1/2$:
        $$
\lim_{n\to\infty} \frac{x_0+x_1+\dots+x_{n-1}}{n} = \frac{1}{2};
        $$

(2) the same is true for every infinite subsequence selected by an
``admissible selection rule''.

Examples of admissible selection rules: (a)~select terms with even
indices; (b)~select terms that follow zeros. The first
rule gives $0100\dots$ when applied to
$\underline{0}0\underline{1}0\underline{0}1\underline{0}0\dots$
(selected terms are underlined). The second rule gives
$0110\dots$ when applied to
$0\underline{0}\underline{1}0\underline{1}10\underline{0}\dots$

Mises gave no exact definition of admissible selection rule
(at that time the theory of algorithms did not exist yet).
Later Church suggested the following formal definition
of admissible selection rule.

An admissible selection rule is a total computable function $S$
defined on finite strings that has values $1$ (``select'') and
$0$ (``do not select''). To apply $S$ to a sequence
$x_0x_1x_2\dots$ we select all $x_n$ such that
$S(x_0x_1\dots x_{n-1})=1$. Selected terms form a subsequence
(finite or infinite). Therefore, each selection rule $S$ determines
a mapping $\sigma_S:\Omega\to\Sigma$, where $\Sigma$ is the
set of all finite and infinite sequences of zeros and ones.

For example, if $S(x)=1$ for every string $x$, then $\sigma_S$ is
an identity mapping. Therefore, the first requirement in Mises
approach follows from the second one, and we come to the
following definition:

A sequence $x=x_0x_1x_2\dots$ is \emph{Mises--Church random}, if
for every admissible selection rule $S$ the sequence $\sigma_S(x)$
is either finite or has limit frequency~$1/2$.

Church's definition of admissible selection rules has the
following motivation. Imagine you come to a casino and watch the
outcomes of coin tossing. Then you decide whether to participate
in the next game or not, applying $S$ to the sequence of
observed outcomes.

\section{Selection rules and Martin-L\"of randomness}

\begin{theorem}
        \label{mises-ml}
Applying an admissible selection rule \textup(according to Church
definition\textup) to a Martin-L\"of random sequence, we get either a
finite sequence or a Martin-L\"of random sequence.
\end{theorem}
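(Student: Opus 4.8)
The plan is to argue by contraposition: I will show that if the selected subsequence $\sigma_S(x)$ is infinite but \emph{not} Martin-L\"of random, then the original sequence $x$ is not random either. So suppose $\sigma_S(x)$ is infinite and lies in some effectively null set. By Theorem~\ref{maximal-null} it suffices to suppose $\sigma_S(x)$ lies in the maximal effectively null set $N$; fix a covering algorithm for $N$. The goal is to transport an $\varepsilon$-covering of $N$ back along $\sigma_S$ to produce an $\varepsilon$-covering of an effectively null set containing $x$, which forces $x$ to be non-random.

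The key construction is the pullback of intervals. For a finite string $w=w_0\dots w_{k-1}$, I want to describe the set of all sequences $\omega\in\Omega$ whose selected subsequence begins with $w$. First I would define, for a finite string $u$, the notion that $u$ is ``$S$-consistent with selecting exactly $w$'': running $S$ along the prefixes of $u$, the bits at the selected positions spell out exactly $w$, and $u$ is the shortest such prefix (i.e.\ the last bit of $u$ is the $k$-th selected bit). Let $P_w$ be the set of such minimal $u$. These $u$ are pairwise incomparable (none is a prefix of another), so the intervals $\Omega_u$ for $u\in P_w$ are disjoint. The crucial measure computation is that $\sum_{u\in P_w}2^{-|u|}\le 2^{-|w|}$: each additional selected bit doubles the number of branches but halves their measure, while the non-selected bits in between only split measure further, so the total measure of the pullback of $\Omega_w$ is at most $2^{-|w|}=P(\Omega_w)$. (In fact the selection rule, being a genuine gambling strategy, cannot increase measure.) This inequality is the heart of the argument and I expect it to be the main obstacle — both getting the combinatorics of $P_w$ exactly right and verifying the measure bound cleanly, since one must be careful that a given $\omega$ may select only finitely many bits and thus contribute to no $P_w$ for large $k$.

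Given this, the rest is bookkeeping. Given $\varepsilon>0$, run the covering algorithm for $N$ to enumerate strings $w_0,w_1,\dots$ with $\sum_i 2^{-|w_i|}<\varepsilon$ and $N\subset\bigcup_i\Omega_{w_i}$. For each $w_i$, enumerate $P_{w_i}$ (this is possible because $S$ is total computable, so membership in $P_{w_i}$ is decidable and the set can be listed). The combined list $\bigcup_i P_{w_i}$ is an enumerable family of strings, and by the measure bound its total measure is at most $\sum_i 2^{-|w_i|}<\varepsilon$. Moreover this pulled-back family covers every $\omega$ whose selected subsequence is infinite and lies in $\bigcup_i\Omega_{w_i}$, in particular it covers $x$: since $\sigma_S(x)\in N\subset\bigcup_i\Omega_{w_i}$, some prefix of $\sigma_S(x)$ equals some $w_i$, and the corresponding minimal prefix of $x$ lies in $P_{w_i}$.

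Thus the set $X=\{\omega:\sigma_S(\omega)\text{ infinite and }\sigma_S(\omega)\in N\}$ is an effectively null set containing $x$, so $x$ is non-random, completing the contrapositive. I would note at the end that the uniformity clause (an algorithm producing covering algorithms) from the Lemma is automatically satisfied here because the whole pullback construction is uniform in the covering algorithm for $N$ and in $S$.
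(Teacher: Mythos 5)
Your proposal is correct and follows essentially the same route as the paper's proof: your sets $P_w$ of minimal prefixes give exactly a disjoint-interval decomposition of the paper's sets $A_w=\sigma_S^{-1}(\Sigma_w)$, your measure bound $\sum_{u\in P_w}2^{-|u|}\le 2^{-|w|}$ is precisely the paper's Lemma, and the effective pullback of a covering of the null set containing $\sigma_S(x)$ is the same bookkeeping. The step you flag as the main obstacle is settled in the paper by a symmetry argument --- the pullbacks of $\Omega_{w0}$ and $\Omega_{w1}$ are disjoint subsets of the pullback of $\Omega_w$ with \emph{equal} measure (pair the strings $uv0\leftrightarrow uv1$ at the betting positions), so each has at most half its measure, and induction on $|w|$ gives the bound --- which is exactly the rigorous form of your ``a gambling strategy cannot increase measure'' intuition.
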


\begin{proof}
Let $S$ be a function that determines selection rule
$\sigma_S$.

Let $\Sigma_x$ be the set of all finite of infinite sequences
that have prefix $x$ (here $x$ is a finite binary string).

Consider the set $A_x=\sigma_S^{-1}(\Sigma_x)$ of all (infinite)
sequences $\omega$ such that selected subsequence starts with
$x$. If $x=\Lambda$ (empty string), then $A_x=\Omega$.

\textbf{Lemma}. \emph{The set $A_x$ has measure at most $2^{-|x|}$}.

\begin{proof}
What is $A_0$? In other terms, what is the set of all sequences
$\omega$ such that the selected subsequence (according to
selection rule $\sigma_S$) starts with $0$? Consider the set $B$
of all strings $z$ such that $S(z)=1$ but $S(z')=0$ for each 
prefix $z'$ of $z$. These strings mark the places where the first
bet is made. Therefore,
        $$
A_0= \cup\{\Omega_{z0} \mid z\in B\}
        $$
and
        $$
A_1= \cup\{\Omega_{z1} \mid z\in B\}.
        $$
In particular, the sets $A_0$ and $A_1$ have the same measure
and are disjoint, therefore
        $$
P(A_0)= P(A_1) \le \frac{1}{2}.
        $$
From the probability theory viewpoint, $P(A_0)$ [resp., $P(A_1)$] is
the probability of the event ``the first selected term will be
$0$ [resp. $1$]'', and both events have the same probability
(that does not exceed $1/2$) for evident reasons.

We can prove in the same way that $A_{00}$ and $A_{01}$ have the
same measure. (See below the details.) Since they are disjoint
subsets of $A_0$, both of them have measure at most $1/4$. The sets
$A_{10}$ and $A_{11}$ also have equal measure and
are subsets of $A_1$, therefore both have measure at most $1/4$,
etc.

If this does not sound convincing, let us give an explicit description of $A_{00}$. Let $B_0$ be the set of all strings $z$ such that

(1) $S(z)=1$;

(2) there exists exactly one proper prefix $z'$ of $z$ such that
     $S(z')=1$;

(3) $z'0$ is a prefix of $z$.

In other terms, $B_0$ corresponds to the positions where we are
making our second bet while our first bet produced $0$. Then
        $$
A_{00}=\cup\{\Omega_{z0} \mid z\in B_0\}
        $$
and
        $$
A_{01}=\cup\{\Omega_{z1} \mid z\in B_0\}.
        $$
Therefore $A_{00}$ and $A_{01}$ indeed have equal measures.

Lemma is proven.

\end{proof}

It is also clear that $A_x$ is the union of intervals $\Sigma_y$
that can be effectively generated if $x$ is known. (Here we use
the computability of $S$.)

Proving Theorem~\ref{mises-ml}, assume that $\sigma_S(\omega)$ is an infinite non-random
sequence. Then $\{\omega\}$ is effectively null singleton. Therefore,
for each $\varepsilon$ one can effectively generate intervals
$\Omega_{x_1},\Omega_{x_2},\dots$ whose union covers
$\sigma_S(\omega)$. The preimages $$\sigma_S^{-1}(\Sigma_{x_1}),
\sigma_S^{-1}(\Sigma_{x_2}),\dots$$ cover $\omega$. Each of these preimages
is an enumerable union of intervals, and if we combine all these
intervals we get a covering for $\omega$ that has measure less
than $\varepsilon$. Thus, $\omega$ is non-random, so Theorem~\ref{mises-ml} is proven.
\end{proof}

\begin{theorem}
Every Martin-L\"of random sequence has limit frequency $1/2$.
\end{theorem}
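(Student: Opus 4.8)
The plan is to upgrade the proof of the strong law of large numbers from an ordinary null-set statement into an effective one, and then quote the definition of Martin-L\"of randomness. First I would observe that the complement of the set of sequences with limit frequency $1/2$ is a countable union over rational $\delta>0$: a sequence fails to have limit frequency $1/2$ exactly when the $\limsup$ of its prefix frequencies exceeds $1/2$ or the $\liminf$ falls below it, so the bad set equals $\bigcup_\delta (N_\delta \cup N_\delta')$, where $N_\delta$ (as in the SLLN sketch) is the set of sequences whose prefix frequency of ones exceeds $1/2+\delta$ infinitely often, and $N_\delta'$ is the symmetric set for frequency below $1/2-\delta$. By the Lemma on countable unions of effectively null sets, it then suffices to show that each $N_\delta$ and $N_\delta'$ is an effectively null set with a covering algorithm that can be produced uniformly from $\delta$.

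Next I would fix a rational $\delta>0$ and treat $N_\delta$ (the set $N_\delta'$ being handled identically). Following the SLLN argument, let $p(n,\delta)$ be the probability that a random string of length $n$ has more than $(1/2+\delta)n$ ones; this is a rational number computable from $n$ and $\delta$, being a finite sum of binomial coefficients divided by $2^n$. Given a rational $\varepsilon>0$, I would compute an $N$ with $\sum_{n>N}p(n,\delta)<\varepsilon$ and then enumerate all strings $z$ with $|z|>N$ whose frequency of ones exceeds $1/2+\delta$. If $\omega\in N_\delta$ then some prefix of length $>N$ has frequency above $1/2+\delta$, so $\omega\in\Omega_z$ for that $z$; and the total measure of these intervals is at most $\sum_{n>N}p(n,\delta)<\varepsilon$. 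Since the set of such $z$ is decidable once $N$ and $\delta$ are known, this is a genuine covering algorithm, and it is manifestly produced uniformly from $\delta$ and $\varepsilon$.

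The one step needing care---and the main obstacle---is the effective choice of $N$: to guarantee $\sum_{n>N}p(n,\delta)<\varepsilon$ I need a \emph{computable} bound on the tail of the series. This is exactly what the exponential decay of $p(n,\delta)$ provides. The Stirling estimate used in the SLLN sketch yields an explicit bound of the form $p(n,\delta)\le e^{-c(\delta)\,n}$ with $c(\delta)$ computable from $\delta$, so the tail is dominated by a computable geometric expression and a suitable $N$ can be found by direct calculation. Everything else is routine uniformization. Once each $N_\delta$ and $N_\delta'$ is seen to be effectively null uniformly in $\delta$, the Lemma shows that the whole bad set $\bigcup_\delta(N_\delta\cup N_\delta')$ is effectively null; it is therefore contained in the maximal effectively null set of Theorem~\ref{maximal-null}, so no Martin-L\"of random sequence belongs to it, and every Martin-L\"of random sequence has limit frequency $1/2$.
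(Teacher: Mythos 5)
Your proposal is correct and follows essentially the same route as the paper: the paper's proof is precisely the remark that the classical covering argument for SLLN (via the explicit Stirling/binomial bound on $p(n,\delta)$) can be effectivized, and you have simply carried out that effectivization in detail. Your identification of the one non-routine point---that an effective tail bound $\sum_{n>N}p(n,\delta)<\varepsilon$ requires the \emph{explicit} exponential estimate rather than mere convergence---is exactly the content of the paper's phrase ``this upper bound is explicit.''
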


\begin{proof}
By definition this means that the set $\lnot SLLN$ of all
sequences that do not satisfy SLLN is an
effectively null set. As we have mentioned, this is a null set
and the proof relies on an upper bound for binomial
coefficients. This upper bound is explicit, and the argument
showing that the set $\lnot SLLN$ is a null set can be extended
to show that $\lnot SLLN$ is an effectively null set. 
\end{proof}

Combining these two results, we get the following

\begin{theorem}
Every Martin-L\"of random sequence is also Mises--Church random.
\end{theorem}

\begin{problems}
\subsection*{Problems}

\leavevmode

1. The following selection rule is \emph{not} admissible
according to Mises definition: choose all terms $x_{2n}$ such
that $x_{2n+1}=0$. Show that (nevertheless) it gives
(Martin-L\"of) random sequence if applied to a Martin-L\"of
random sequence.

2. Let $x_0x_1x_2\dots$ be a Mises--Church random sequence. Let
$a_N=|\{ n < N \mid x_n=0,\ x_{n+1}=1\}|$. Prove that
$a_N/N\to1/4$ as $N\to\infty$.
\end{problems}

\section{Probabilistic machines}

Consider a Turing machine that has access to a source of random
bits. Imagine, for example, that it has some special states $a,b,c$ with the following
properties: when the machine reaches state $a$, it jumps at the next step to one of the
states $b$ and $c$ with probability $1/2$ for each.

Another approach: consider a program in some language that allows assignments
        $$
a:=\texttt{random};
        $$
where $\texttt{random}$ is a keyword and $a$ is a Boolean variable that
gets value $0$ or $1$ when this statement is executed (with probability $1/2$; each new random bit is independent of the previous ones).

For a deterministic machine output is a function of its input. Now
it is not the case: for a given input machine can produce
different outputs, and each output has some probability. So for each input the output is a random variable. What can be said about this variable? We will consider machines without inputs; each machine of this type determines a random variable (its output).

Let $M$ be a machine without input. (For example, $M$ can be
a Turing machine that is put to work on an empty tape, or a Pascal
program that does not have \texttt{read} statements.) Now consider
probability of the event ``$M$ terminates''. What can be said about
this number?

More formally, for each sequence $\omega\in\Omega$ we consider
the behavior of $M$ if random bits are taken from $\omega$. For
a given $\omega$ the machine either terminates or not. Then $p$
is the measure of the set $T$ of all $\omega$ such that $M$
terminates using $\omega$. It is easy to see that $T$ is
measurable. Indeed, $T$ is a union of $T_n$, where $T_n$ is the
set of all $\omega$ such that $M$ stops after at most $n$ steps
using $\omega$. Each $T_n$ is a union of intervals $\Omega_t$
for some strings $t$ of length at most $n$ (machine can use at
most $n$ random bits if it runs in time $n$) and therefore is
measurable; the union of all $T_n$ is an open (and therefore measurable) set.

A real number $p$ is called \emph{enumerable from below} or
\emph{lower semicomputable} if $p$ is a limit of increasing
computable sequence of rational numbers: $p=\lim p_i$, where
$p_0 \le p_1\le p_2\le\dots$ and there is an algorithm that
computes $p_i$ given $i$.
\smallskip

\textbf{Lemma}. \emph{A real number $p$ is lower semicomputable if and only if the set $X_p=\{r\in \mathbb{Q} \mid r < p\}$ is \textup(computably\textup) enumerable}.

\begin{proof}
(1)~Let $p$ be the limit of a computable increasing sequence $p_i$.
For every rational number $r$ we have
        $$
r < p \Leftrightarrow \exists i\, [r < p_i].
        $$
Let $r_0,r_1,\dots$ be a computable sequence of rational numbers
such that every rational number appears infinitely often in this
sequence. The following algorithm enumerates $X_p$: at $i$th
step, compare $r_i$ and $p_i$; if $r_i<p_i$, output $r_i$.

(2)~If $X_p$ is computably enumerable, let $r_0,r_1,r_2,\dots$ be its
enumeration. Then $p_n=\max(r_0,r_1,\dots,r_n)$ is a non-decreasing
computable sequence of rational numbers that converges to $p$.
\end{proof}

\begin{theorem}
\textbf{\textup{(a)}}~Let $M$ be a probabilistic machine without input. Then $M$'s
probability of termination is lower semicomputable.

\textbf{\textup{(b)}}~Let $p$ be a lower semicomputable number in $[0,1]$.
Then there exists a probabilistic machine that terminates with
probability $p$.
\end{theorem}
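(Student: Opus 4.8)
The plan is to use the approximating sets $T_n$ already described in the text—the set of sequences $\omega$ on which $M$ halts within $n$ steps. Each $T_n$ is a finite union of intervals $\Omega_t$ over strings $t$ of length at most $n$, so its measure $p_n$ is a rational number I can compute: I simulate $M$ for $n$ steps along every branch of the random-bit tree (there are at most $2^n$ such branches), count the halting branches, and read off $p_n$ as a dyadic rational. Since $T_0\subseteq T_1\subseteq T_2\subseteq\cdots$, the sequence $p_0\le p_1\le p_2\le\cdots$ is non-decreasing and computable, and its limit is $p=P(T)$. By definition this exhibits $p$ as lower semicomputable, which is exactly what part (a) asserts.

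**Part (b): Every lower semicomputable $p\in[0,1]$ is realized.**

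For the converse I would exploit the Lemma just proved: since $p$ is lower semicomputable, I can fix a computable non-decreasing sequence of rationals $p_0\le p_1\le p_2\le\cdots$ with limit $p$ (taking $p_0=0$ without loss of generality). The idea is that the machine reads random bits one at a time, interpreting the infinite bit sequence $\omega$ as a real number $r_\omega\in[0,1]$ via its binary expansion, and halts precisely when it can certify that $r_\omega<p$. Concretely, after reading a finite prefix the machine has pinned $r_\omega$ to a dyadic interval; I let it compute successive $p_i$ and halt as soon as the current dyadic interval for $r_\omega$ lies entirely below some $p_i$. The set of $\omega$ on which this eventually happens is exactly $\{\omega : r_\omega < p\}$ up to a measure-zero set of dyadic-rational boundary points, so the halting probability equals the Lebesgue measure of $[0,p)$, namely $p$.

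The step I expect to require the most care is verifying that the halting set has measure exactly $p$ rather than something slightly off. The delicate points are: a given $\omega$ should trigger halting if and only if $r_\omega<p$, which requires the machine to keep refining both its estimate of $r_\omega$ (by reading more bits) and its lower approximation $p_i$ to $p$ in a dovetailed fashion, so that no $\omega$ with $r_\omega<p$ is missed; and the dyadic rationals, where the binary expansion is ambiguous, form a measure-zero set and so do not affect the probability. I would also note that if $p$ happens to itself be a dyadic rational the construction still works, since the boundary contributes nothing to the measure. Once these are checked, the measure of the halting set is $|[0,p)|=p$, completing the proof.
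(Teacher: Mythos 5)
Your proof is correct and takes essentially the same route as the paper: in part (a) you compute the dyadic rationals $p_n=P(T_n)$ by simulating $M$ on all $2^n$ bit strings for $n$ steps, and in part (b) you interpret the random bits as a binary real $\beta$ and halt once the current dyadic upper bound for $\beta$ falls below some lower approximation $p_i$ of $p$, exactly as in the paper's construction. The only (cosmetic) difference is your measure-zero hedge about dyadic boundary points: the paper notes that halting occurs precisely when $\beta<p$, and since the pushforward of the uniform measure under $\omega\mapsto\beta$ is Lebesgue measure, the halting set has measure exactly $p$ with no correction needed.
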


\begin{proof}
(a)~Let $M$ be a probabilistic machine. Let $p_n$ be the
probability that $M$ terminates after at most $n$ steps. The
number $p_n$ is a rational number with denominator $2^n$ that
can be effectively computed for a given $n$. (Indeed, the machine
$M$ can use at most $n$ random bits during $n$ steps. For each
of $2^n$ binary strings we simulate behavior of $M$ and see for
how many of them $M$ terminates.) The sequence
$p_0,p_1,p_2\dots$ is an increasing computable sequence of
rational numbers that converges to~$p$.

(b)~Let $p$ be a real number in $[0,1]$ that is lower semicomputable.
Let $p_0\le p_1\le p_2\le\ldots$ be an increasing computable
sequence that converges to $p$. Consider the following
probabilistic machine. It treats random bits $b_0,b_1,b_2\dots$
as binary digits of a real number
        $$
\beta=0.b_0b_1b_2\ldots
        $$
When $i$ random bits are generated, we have lower and upper
bounds for $\beta$ that differ by $2^{-i}$. If the upper bound
$\beta_i$ turns out to be less than $p_i$, machine terminates.
It is easy to see that machine terminates for given
$\beta=0.b_0b_1\dots$ if and only if $\beta<p$. Indeed, if an upper
bound for $\beta$ is less than a lower bound for $p$, then $\beta
<p$. On the other hand, if $\beta <p$, then $\beta_i <p_i$ for
some $i$ (since $\beta_i\to\beta$ and $p_i\to p$ as
$i\to\infty$). 

\end{proof}

Now we consider probabilities of different outputs. Here we need
the following definition: A sequence $p_0,p_1,p_2\dots$ of real numbers
is \emph{lower semicomputable}, if there is a computable
total function~$p$ of two variables (that range over natural
numbers) with rational values (with special value
$-\infty$ added) such that
        $$
p(i,0)\le p(i,1)\le p(i,2)\le \ldots
        $$
and
        $$
p(i,0), p(i,1), p(i,2),\ldots \to p_i
        $$
for every~$i$.
\smallskip

\textbf{Lemma}. \emph{A sequence $p_0,p_1,p_2,\dots$ of reals is lower semicomputable if and only if the set of pairs
        $$
\{\langle i,r\rangle \mid r < p_i\}
        $$
is enumerable}.

\begin{proof}

Let $p_0,p_1,\dots$ be lower semicomputable and
$p_i=\lim_n p(i,n)$. Then
        $$
r < p_i \Leftrightarrow \exists n\, [r < p(i,n)]
        $$
and we can check $r<p(i,n)$ for all pairs $\langle i,r\rangle$
and for all $n$. If $r<p(i,n)$, pair $\langle i,r\rangle$ is
included in the enumeration.

On the other hand, if the set of pairs is enumerable, for each
$n$ we let $p(i,n)$ be the maximum value of $r$ for all pairs
$\langle i,r\rangle$ (with given $i$) that appear during $n$
steps of the enumeration process. (If there are no pairs,
$p(i,n)=-\infty$.) The lemma is proven.

\end{proof}

\begin{theorem}
\textup{(}a\textup{)}~Let $M$ be a probabilistic machine without
input that can produce natural numbers as outputs. Let $p_i$ be
the probability of the event ``$M$ terminates with output $i$''.
Then sequence $p_0,p_1,\dots$ is lower semicomputable and
$\sum_i p_i\le 1$.

\textup{(}b\textup{)}~Let $p_0,p_1,p_2\dots$ be a sequence of
non-negative real numbers that is lower semicomputable, and
$\sum_i p_i \le 1$. Then there exists a probabilistic machine
$M$ that outputs $i$ with probability \textup{(}exactly\textup{)}
$p_i$.
\end{theorem}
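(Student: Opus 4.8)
The plan is to mimic the proof of the earlier termination theorem, but now tracking the output value. For each $i$ and each step count $n$, let $p(i,n)$ be the probability that $M$ terminates within $n$ steps and outputs exactly $i$. As before, a run of length at most $n$ consumes at most $n$ random bits, so I can enumerate all $2^n$ bit strings of length $n$, simulate $M$ on each for $n$ steps, and count the fraction on which $M$ halts with output $i$; this gives a rational $p(i,n)$ with denominator $2^n$, computable from $i$ and $n$. Increasing $n$ only adds more halting computations, so $p(i,0)\le p(i,1)\le\cdots$ and $p(i,n)\to p_i$, witnessing lower semicomputability via the lemma just proved. The bound $\sum_i p_i\le 1$ is immediate: the events ``$M$ halts with output $i$'' are disjoint subsets of $\Omega$, so their measures sum to at most the measure of the whole termination set, which is at most $1$.

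\textbf{Part (b): From lower semicomputable sequence to machine.}
The idea is to carve out of the unit interval $[0,1]$ a disjoint family of subintervals, one ``block'' of total length $p_i$ reserved for output $i$, and have the machine read random bits $b_0b_1b_2\dots$ as a real $\beta=0.b_0b_1\dots$, outputting $i$ once $\beta$ is pinned down inside block $i$. Because the $p_i$ are only lower semicomputable, I cannot lay out the blocks in advance; instead I approve small rational pieces incrementally. Using the two-variable witness $p(i,n)$ with $p(i,n)\nearrow p_i$, I run a dovetailed process that, whenever some $p(i,n)$ increases, allocates fresh half-open rational subintervals of $[0,1]$ of total new length equal to that increase and labels them with output $i$. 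I maintain the invariant that at every stage the allocated intervals are pairwise disjoint; this is possible precisely because $\sum_i p_i\le 1$ guarantees the total allocated length never exceeds $1$, so there is always room. The machine, reading $\beta$ bit by bit, narrows the interval $[\,0.b_0\dots b_{k-1},\,0.b_0\dots b_{k-1}+2^{-k}\,]$ and halts with output $i$ as soon as this shrinking interval is contained in one of the already-allocated rational intervals labelled $i$.

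\textbf{Correctness.}
I would then argue that the probability of output $i$ is exactly $p_i$. The set of $\beta$ eventually captured by an $i$-interval is, up to a measure-zero set of dyadic endpoints, the union of all rational intervals ever labelled $i$; since these have total length $p_i$ by construction (they exhaust the allocation driven by $p(i,n)\to p_i$) and the machine halts on $\beta$ iff $\beta$ lands strictly inside some allocated interval, the halting-with-$i$ set has measure $p_i$. The main obstacle I expect is the bookkeeping in (b): ensuring the incrementally allocated rational intervals stay disjoint across all outputs while their per-output lengths converge to the right $p_i$, and confirming that a real $\beta$ interior to an allocated interval is always detected after finitely many bits (which holds because an open interval contains a dyadic subinterval). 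Handling the boundary/endpoint cases — the countably many dyadic $\beta$ on which the shrinking interval never sits strictly inside a block — only affects a measure-zero set and so does not change the probabilities.
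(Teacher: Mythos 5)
Your proposal is correct and follows essentially the same route as the paper: in (a) the same monotone rational approximations $p(i,n)$ obtained by simulating $M$ for $n$ steps on all $2^n$ bit strings, and in (b) the same ``space allocation'' construction, where increments $p(i,n)-p(i,n-1)$ trigger allocation of fresh disjoint subintervals of $[0,1]$ labelled $i$ and the machine outputs $i$ once the dyadic interval pinning down $\beta$ lies inside the $i$-labelled region. Your explicit treatment of the boundary points (a countable, hence null, set of undetected $\beta$) is in fact slightly more careful than the paper's statement that termination occurs for every $\beta$ in the interior of the $i$th part.
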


\begin{proof}
Part~(a) is similar to the previous argument: let
$p(i,n)$ be the probability that $M$ terminates with output $i$
after at most $n$ steps. Than $p(i,0),p(i,1),\dots$ is a
computable sequence of increasing rational numbers that
converges to $p_i$.

(b)~is more complicated. Recall the proof of the previous
theorem. There we had a ``random real'' $\beta$ and
``termination region'' $[0,p)$ where $p$ was the desired
termination probability. (If $\beta$ is in termination region,
machine terminates.)

Now termination region is divided into parts. For each
output value~$i$ there is a part of termination region that
corresponds to $i$ and has measure $p_i$. Machines terminates
with output~$i$ if and only if $\beta$ is inside $i$th part.

Let us consider first a special case when sequence $p_i$ is a
computable sequence of rational numbers, Then $i$th part is a segment
of length $p_i$. These segments are allocated from left to right
according to ``requests'' $p_i$. One can say that each number
$i$ comes with request $p_i$ for space allocation, and this
request is granted. Since we can compute the endpoints of all
segments, and have lower and upper bound for $\beta$, we are
able to detect the moment when $\beta$ is guaranteed to be inside
$i$-th part. 

In the general case the construction should be modified. Now each $i$
comes to space allocator many times with increasing requests
$p(i,0), p(i,1), p(i,2),\ldots$; each time the request is granted
by allocating additional interval of length $p(i,n)-p(i,n-1)$.
Note that now $i$th part is not contiguous: it consists of
infinitely many segments separated by other parts. But this is not important. Machine terminates with output $i$ when
current lower and upper bounds for $\beta$ guarantee that $\beta$
is inside $i$th part. The interior of $i$th part is a
countable union of intervals, and if $\beta$ is inside this open
set, machine will terminate with output $i$. Therefore, the
termination probability is the measure of this set, i.e., equals
$\lim_n p(i,n)$.
\end{proof}

\begin{problems}
\subsection*{Problems}

\leavevmode

1. A probabilistic machine without input terminates for all
possible coin tosses (there is no sequence of coin tosses that
leads to infinite computation). Prove that the computation time
is bounded by some constant (and machine can produce only finite
number of outputs).

2. Let $p_i$ be the probability of termination with output $i$
for some probabilistic machine and $\sum p_i=1$. Prove that all
$p_i$ are computable, i.e., for every given $i$ and for every
rational $\varepsilon>0$ we can find (algorithmically) an
approximation to $p_i$ with absolute error at most
$\varepsilon$.
\end{problems}

\section{A priori probability}

A sequence of real numbers $p_0,p_1,p_2,\dots$ is called an
\emph{lower semicomputable semimeasure} if there exists a
probabilistic machine (without input) that produces $i$ with
probability $p_i$. (As we know, $p_0,p_1,\dots$ is a lower semicomputable semimeasure if and only if $p_i$ is lower semicomputable and $\sum p_i \le 1$.)

\begin{theorem}
There exists a maximal lower semicomputable semimeasure $m$
\textup{(}maximality means that for every lower semicomputable semimeasure $m'$ there exists a constant $c$ such that $m'(i)\le cm(i)$ for all
$i$\textup{)}.
\end{theorem}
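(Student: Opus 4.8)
The plan is to build the maximal semimeasure as a \emph{mixture} of all lower semicomputable semimeasures, weighted by a summable sequence of coefficients. First I would fix an effective enumeration $M_0, M_1, M_2, \dots$ of all probabilistic machines without input; these are just programs in our fixed language, interpreted as having access to a source of random bits, so they can be listed algorithmically. Each $M_k$ determines, by part (a) of the previous theorem, a lower semicomputable semimeasure $m_k$, and conversely, by part (b), \emph{every} lower semicomputable semimeasure arises as some $m_k$. So it suffices to dominate all the $m_k$ simultaneously.

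The construction is a single probabilistic machine $M$ that works as follows. First it tosses coins to choose an index $k$, selecting $k$ with probability $2^{-k-1}$ (for instance, flip coins until the first $0$ appears and let $k$ be the number of preceding $1$s). Having chosen $k$, the machine simulates $M_k$ using fresh random bits and produces whatever $M_k$ produces. Since $\sum_{k\ge 0} 2^{-k-1}=1$, this is a legitimate probabilistic machine, and its output distribution is
$$
m(i) = \sum_{k=0}^{\infty} 2^{-k-1}\, m_k(i).
$$
By part (a) of the previous theorem, $m$ is a lower semicomputable semimeasure, since it is simply the output distribution of the machine $M$, for which $\sum_i m(i)\le 1$ holds automatically.

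Maximality is then immediate. Given any lower semicomputable semimeasure $m'$, part (b) supplies an index $j$ with $m'=m_j$. Dropping all terms of the sum except the $j$th gives
$$
m(i) \ge 2^{-j-1}\, m_j(i) = 2^{-j-1}\, m'(i),
$$
so $m'(i)\le 2^{j+1}\, m(i)$ for all $i$, which is exactly the required domination with $c=2^{j+1}$.

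The one point that needs care --- and the only place the argument could go wrong --- is the passage from the abstract mixture $m=\sum_k 2^{-k-1}m_k$ to an actual single machine realizing it. I would not try to verify directly that this infinite sum is lower semicomputable with the right normalization; instead the whole force of the proof rests on already having the machine $M$ in hand and then invoking part (a), which guarantees that whatever $M$ outputs is a lower semicomputable semimeasure. Thus the ``hard'' step is really just the observation that mixing machines with summable weights is itself a probabilistic process, so that the class of output distributions is closed under such countable mixtures; everything else is bookkeeping.
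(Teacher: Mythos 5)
Your proof is correct and follows essentially the same route as the paper: both construct a single universal probabilistic machine that first randomly selects an index $k$ (with positive probability, e.g.\ $2^{-k-1}$) and then simulates the $k$-th machine, so that the output distribution dominates each $m_k$ up to the constant $2^{k+1}$. Your explicit appeal to parts (a) and (b) of the previous theorem is a harmless elaboration, since the paper defines lower semicomputable semimeasures directly as output distributions of probabilistic machines.
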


\begin{proof}
Let $M_0,M_1,\dots$ be a sequence of all probabilistic
machines without input. Let $M$ be a machine that starts by 
choosing a natural number $i$ at random (so that each outcome has
positive probability) and then emulates $M_i$. If $p_i$ is the
probability that $i$ is chosen, $m$ is the distribution on the
outputs of $M$ and $m'$ is the distribution on the outputs of
$M_i$, then $m(x) \ge p_i m'(x)$ for all $x$.
\end{proof}

The maximal lower semicomputable semimeasure is called \emph{a
priori probability}. This name can be explained as follows.
Imagine that we have a black box that can be turned on and
prints a natural number. We have no information about what is
inside. Nevertheless we have an ``a priori'' upper bound for
probability of the event ``$i$ appears'' (up to a constant
factor that depends on the box but not on $i$).

The same definition can be used for real-valued functions on
strings instead of natural numbers
(probabilistic machines produce strings; the sum $\sum
p(x)$ is taken over all strings $x$, etc.) --- in this way we may define \emph{discrete a priori probability} on binary strings. (There is another notion of a priori probability for strings, called \emph{continuous a priori probability}, but we do not consider it is this survey.)

\section{Prefix decompression}

The a priori probability is related to a special complexity measure
called \emph{prefix complexity}. The idea is that description is
self-delimited; the decompression program had to decide for itself
where to stop reading input. There are different versions of
machines with self-delimiting input; we choose one that is
technically convenient though may be not the most natural one.

A computable function whose inputs are binary strings is called
a \emph{prefix} function, if for every string $x$ and its prefix
$y$ at least one of the values $f(x)$ and $f(y)$ is undefined.
(So a prefix function cannot be defined both on a string and
its prefix or continuation.)

\begin{theorem}
There exists a prefix decompressor $D$ that is optimal among
prefix decompressors: for each computable prefix function $D'$
there exists some constant $c$ such that
        $$
\KS_{D}(x)\le \KS_{D'}(x)+c
        $$
for all $x$.
\end{theorem}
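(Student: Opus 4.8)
The plan is to mimic the construction of the ordinary optimal decompressor, but with two modifications: the index of the simulated program must be encoded in a prefix-free way, and each simulated program must be \emph{forced} to be prefix, so that the resulting $D$ is genuinely a prefix function even though the property ``being prefix'' is undecidable.

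First I would fix a computable prefix-free encoding $i\mapsto\hat\imath$ of natural numbers (for instance $\hat\imath=\overline{\bin(i)}01$, as used earlier), and enumerate all programs $D_0,D_1,\dots$ computing partial functions from strings to strings. For each $i$ I would build a ``forced'' version $\tilde D_i$ by dovetailing the computation of $D_i$ on all inputs while maintaining the set $P_i$ of inputs already accepted: when a new $y$ with $D_i(y)$ defined is discovered, accept it (set $\tilde D_i(y):=D_i(y)$ and add $y$ to $P_i$) only if $y$ is incomparable --- neither a prefix nor an extension --- to every element currently in $P_i$, and discard it otherwise. By construction the domain of $\tilde D_i$ is an antichain under the prefix order, so each $\tilde D_i$ is a computable prefix function; and if $D_i$ happens to be a prefix function already, its domain is an antichain from the start, no input is ever discarded, and $\tilde D_i=D_i$.

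Next I would set $D(\hat\imath\,y)=\tilde D_i(y)$ and check that $D$ is a well-defined computable prefix function. Computability is clear because the codes $\hat\imath$ are prefix-free, so from an input one can strip off a unique initial $\hat\imath$ (if any) and then run the forcing process for $D_i$ on the remainder. For the prefix property, suppose two inputs $\hat\imath\,y$ and $\hat\jmath\,z$ lie in the domain of $D$ and one is a prefix of the other. Since the codes form a prefix-free set, unique decodability forces $\hat\imath=\hat\jmath$, hence $i=j$; then $y$ and $z$ are comparable inputs both lying in the domain of the prefix function $\tilde D_i$, which is impossible unless $y=z$.

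Finally, for optimality let $D'$ be any computable prefix decompressor and pick an index $i$ with $D_i=D'$. Since $D'$ is already prefix, the forcing step is vacuous and $\tilde D_i=D'$. Hence for every $x$, taking a shortest $y$ with $D'(y)=x$ gives $D(\hat\imath\,y)=x$ with $|\hat\imath\,y|=|\hat\imath|+\KS_{D'}(x)$, so $\KS_D(x)\le\KS_{D'}(x)+c$ with $c=|\hat\imath|$. The main obstacle is the forcing construction: one must verify \emph{simultaneously} that it always yields a prefix function (so that $D$ is a legitimate prefix decompressor) and that it acts as a no-op on genuine prefix functions (so that no descriptions are lost and optimality is preserved).
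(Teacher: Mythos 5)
Your proof is correct and follows essentially the same route as the paper: a self-delimiting encoding of the program followed by a ``forced prefix'' version of it, obtained by enumerating the graph of the simulated function and discarding entries whose inputs are comparable to previously kept ones, which leaves genuine prefix functions untouched. The only cosmetic differences are that you encode an index $i$ in a fixed enumeration rather than the program text itself, and you compare new inputs against the currently accepted antichain rather than against all previously discovered inputs; neither affects the argument.
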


\begin{proof}
To prove a similar result for plain Kolmogorov complexity
we used
        $$
D(\overline {p}01 y)=p(y)
        $$
where $\overline{p}$ is a program $p$ with doubled bits and
$p(y)$ stands for the output of program $p$ with input $y$. This
$D$ is a prefix function if and only if all programs compute
prefix functions. We cannot algorithmically distinguish between
prefix and non-prefix programs (this is an undecidable problem).
However, we may convert each program into a prefix one in such a
way that prefix programs remain unchanged. Let us explain how this can be done.

Let
        $$
D(\overline {p}01 y)=[p](y)
        $$
where $[p](y)$ is computed as follows. We apply in parallel
$p$ to all inputs and get a sequence of pairs $\langle
y_i,z_i\rangle$ such that $p(y_i)=z_i$. Select a ``prefix''
subsequence by deleting all $\langle y_i,z_i\rangle$ such that
$y_i$ is a prefix of $y_j$ or $y_j$ is a prefix of $y_i$ for
some $j<i$. This process does not depend on $y$. To compute
$[p](y)$, wait until $y$ appears in the selected subsequence,
i.e. $y=y_i$ for a selected pair $\langle y_i,z_i\rangle$,
and then output~$z_i$.

The function $y\mapsto [p](y)$ is a prefix function for every $p$,
and if program $p$ computes a prefix function, then
$[p](y)=p(y)$.

Therefore, $D$ is an optimal prefix decompression algorithm.
\end{proof}

Complexity with respect to an optimal prefix decompression
algorithm is called \emph{prefix complexity} and denoted by
$\KP(x)$.

\section{Prefix complexity and length}

As we know, $\KS(x) \le |x|+O(1)$ (consider identity mapping as
decompression algorithm). But identity mapping is not a prefix one,
so we cannot use this argument to show that $\KP(x)\le |x|+O(1)$,
and in fact this is not true, as the following theorem shows.

\begin{theorem}
        $$
\sum_x 2^{-\KP(x)} \le 1.
        $$
\end{theorem}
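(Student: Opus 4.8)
The plan is to reduce the inequality to the Kraft inequality via the prefix-free structure of the optimal decompressor's domain. Recall that $\KP(x)=\KS_D(x)$ for a fixed optimal prefix decompressor $D$, and that $D$ is by construction a prefix function: its domain $\mathrm{dom}(D)$ is a prefix-free set of strings, since whenever $y$ is a prefix of $x$ at most one of $D(y)$, $D(x)$ is defined, so $\mathrm{dom}(D)$ cannot contain a string together with a proper prefix of it.

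First I would restrict attention to those $x$ with $\KP(x)<\infty$; the remaining $x$ contribute $2^{-\KP(x)}=0$ and may be ignored. For each such $x$ I choose a shortest description $y_x$, so that $D(y_x)=x$ and $|y_x|=\KP(x)$. Two observations drive the argument: distinct strings $x$ yield distinct $y_x$ (a single input $y$ produces a single output $D(y)$), and every $y_x$ lies in $\mathrm{dom}(D)$. Hence $S=\{\,y_x\,\}$ is a prefix-free set, and
$$
\sum_x 2^{-\KP(x)} = \sum_{y\in S} 2^{-|y|}.
$$

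It remains to bound the right-hand side by $1$, which is exactly the Kraft inequality for the prefix-free set $S$. Here I would invoke the measure-theoretic picture already set up for the uniform Bernoulli measure: to each $y\in S$ associate the cylinder $\Omega_y$ with $P(\Omega_y)=2^{-|y|}$. Because $S$ is prefix-free, the cylinders $\Omega_y$ for $y\in S$ are pairwise disjoint (if neither of two strings is a prefix of the other, the corresponding cylinders do not intersect). By countable additivity of $P$ we then get $\sum_{y\in S}2^{-|y|}=\sum_{y\in S}P(\Omega_y)=P\bigl(\bigcup_{y\in S}\Omega_y\bigr)\le P(\Omega)=1$, which is the claim.

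The argument is short, and the only point that needs care is the first step: recognizing that the shortest descriptions form a prefix-free subset of $\mathrm{dom}(D)$. Once that is in place, everything follows from the disjointness of the associated cylinders. The main (mild) obstacle is therefore conceptual rather than computational — one must use exactly the defining property of $D$, namely that its domain is prefix-free, and not any finer feature of prefix complexity.
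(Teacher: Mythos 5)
Your proof is correct and follows essentially the same route as the paper: both arguments pick a shortest description for each $x$, observe that these descriptions are pairwise incompatible (yours via prefix-freeness of $\mathrm{dom}(D)$, the paper's directly from the definition of a prefix function), and conclude by disjointness of the cylinders $\Omega_{y}$ that the sum of their measures $2^{-\KP(x)}$ is at most $1$. The extra care you take with infinite values of $\KP$ and with injectivity of $x\mapsto y_x$ is fine but does not change the substance.
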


\begin{proof}
For every $x$ let $p_x$ be the shortest description for $x$
(with respect to given prefix decompression algorithm). Then
$|p_x|=\KP(x)$ and all strings $p_x$ are incompatible. (We say
that $p$ and $q$ are compatible if $p$ is a prefix of $q$ or
vice versa.) Therefore, the intervals $\Omega_{p_x}$ are disjoint;
they have measure $2^{-|p_x|}=2^{-\KP(x)}$, so the sum does not
exceed $1$. 
\end{proof}

If $\KP(x)\le |x|+O(1)$ were true, then $\sum_x 2^{-|x|}$ would
be finite, but it is not the case (for each natural number $n$
the sum over strings of length $n$ equals $1$).

However, we can prove weaker lower bounds:

\begin{theorem}
        \begin{align*}
\KP(x) &\le 2|x|+O(1);\\
\KP(x) &\le |x|+2\log |x|+O(1);\\
\KP(x) &\le |x|+\log |x|+2\log\log|x|+O(1)\\
       &\ldots
        \end{align*}
\end{theorem}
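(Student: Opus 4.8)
The plan is to prove each line by exhibiting a computable prefix decompressor under which $x$ has a description of the stated length; optimality of the prefix decompressor defining $\KP$ (the preceding theorem) then gives $\KP(x)\le(\text{that length})+O(1)$. Throughout I use that a set of descriptions yields a legitimate prefix function exactly when it is prefix-free, i.e. no description is a prefix of another.

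For the first line I would set $D(\overline{x}\,01)=x$. A doubled string consists only of the blocks $00$ and $11$, so the block $01$ is an unambiguous terminator and the domain $\{\overline x\,01\}$ is prefix-free; hence $D$ is a prefix function, $\KP_D(x)=2|x|+2$, and $\KP(x)\le 2|x|+O(1)$.

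The wasteful point above is that we double all of $x$. For the second line I would instead double only a self-delimiting copy of the length $n=|x|$ and then write $x$ verbatim: define $D(\overline{\bin(n)}\,01\,w)=w$ whenever $|w|=n$. To decode, read the doubled-and-terminated prefix to recover $n$, then read exactly the next $n$ bits; this is prefix-free because the length prefixes are mutually prefix-free and a fixed $n$ forces a fixed remaining length. The description has length $2|\bin(n)|+2+n=|x|+2\log|x|+O(1)$.

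Now the key observation: the second construction replaced ``double $x$'' by ``prefix-free encode $n$, then write $x$ plainly,'' and the very same move applies to $n$ itself. Iterating it $k$ times produces a prefix decompressor whose description of $x$ is a plain copy of $x$, preceded by a plain copy of $\bin(n)$, preceded by a plain copy of $\bin(|\bin(n)|)$, and so on for $k$ levels, with only the topmost (and shortest) block doubled and terminated. Writing $\log^{(j)}$ for the $j$-fold iterated logarithm, the total length is
$$
|x|+\log|x|+\log\log|x|+\cdots+\log^{(k-1)}|x|+2\log^{(k)}|x|+O(1),
$$
which is exactly the displayed family (the cases $k=0,1,2$ give the three written lines, reading the empty sum and $\log^{(0)}|x|=|x|$ for $k=0$). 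The only thing to verify at each level is that the rule ``read a prefix-free codeword yielding a number $m$, then read the next $m$ bits'' preserves prefix-freeness, which holds for the same reason as in the second line. I expect no real obstacle here: the content lives entirely in the first two constructions, and the general case is their mechanical iteration, so the only care needed is bookkeeping the iterated-logarithm lengths.
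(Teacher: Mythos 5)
Your proposal is correct and is essentially the paper's own proof: the same three decompressors ($\overline{x}01\mapsto x$, then $\overline{\bin(|x|)}01x\mapsto x$, then the iterated version with only the topmost length-block doubled), with optimality of the universal prefix decompressor finishing each bound. You merely spell out the prefix-freeness checks that the paper dismisses as ``easy to check,'' which is fine.
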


\begin{proof}
The first bound is obtained if we use $D(\overline{x}01)=x$.
(It is easy to check that $D$ is prefix function.) The second one
uses
        $$
D(\overline{\bin(|x|)}01x)=x
        $$
where $\bin(|x|)$ is the binary representation of the length of
string $x$. Iterating this trick, we let
        $$
D(\overline{\bin(|\bin(|x|)|)}01\bin(|x|)x)=x
        $$
and get the third bound etc.
\end{proof}

Let us note that prefix complexity does not increase when we
apply algorithmic transformation: $\KP(A(x))\le \KP(x)+O(1)$ for
every algorithm $A$ (the constant in $O(1)$ depends on $A$). Let us take optimal decompressor (for plain
complexity) as $A$. We conclude that $\KP(x)$ does not exceed
$\KP(p)$ if $p$ is a description of $x$. Combining this
with theorem above, we conclude that $\KP(x)\le 2\KS(x)+O(1)$,
that $\KP(x) \le \KS(x)+2\log \KS(x)+O(1)$, etc.

In particular, the difference between plain and prefix complexity for $n$-bit strings is $O(\log n)$.

\section{A priori probability and prefix complexity}

We have now two measures for a string (or natural number) $x$. The 
a priori probability $m(x)$ measures how probable is to see $x$ as
an output of a probabilistic machine. Prefix complexity measures
how difficult is to specify $x$ in a self-delimiting way.
It turns out that these two measures are closely related.

\begin{theorem}
        $$
\KP(x)=-\log m(x)+O(1)
        $$
\end{theorem}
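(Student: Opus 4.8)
The plan is to prove the two inequalities $\KP(x)\le -\log m(x)+O(1)$ and $-\log m(x)\le \KP(x)+O(1)$ separately, since together they give the claimed equality up to an additive constant.

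For the easier direction, I would first show that $-\log m(x)\le \KP(x)+O(1)$, equivalently $m(x)\ge c\cdot 2^{-\KP(x)}$ for some constant $c$. The idea is to exhibit a lower semicomputable semimeasure dominated by $m$ whose value at $x$ is essentially $2^{-\KP(x)}$. Consider the function $x\mapsto 2^{-\KP(x)}$. By the previous theorem we know $\sum_x 2^{-\KP(x)}\le 1$, so this is a genuine semimeasure. It is lower semicomputable because $\KP$ is upper semicomputable (as an instance of the general fact that Kolmogorov-type complexities are enumerable from above, proved earlier for plain complexity and carrying over to the prefix version), and hence $2^{-\KP(x)}$ is a limit of an increasing computable sequence of rationals. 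By maximality of $m$, there is a constant such that $2^{-\KP(x)}\le c\cdot m(x)$, which is exactly the inequality we want.

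The harder direction is $\KP(x)\le -\log m(x)+O(1)$, and this is where I expect the main obstacle to lie. The goal is to build a prefix decompressor $D$ that assigns to each $x$ a description of length about $-\log m(x)$. The natural approach is a Kraft--Chaitin-style construction: since $\sum_x m(x)\le 1$ and $m$ is lower semicomputable, we can enumerate ``requests'' of the form ``reserve a description of length $n_x\approx -\log m(x)$ for $x$'' while maintaining $\sum 2^{-n_x}\le 1$, and allocate disjoint (incompatible) binary strings of those lengths to serve as codewords. Incompatibility of the codewords is what guarantees that the resulting decompressor is a prefix function, so that $\KP$ rather than plain complexity is bounded. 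The technical subtlety is that $m(x)$ is only approximated from below and increases over time, so the reserved length for $x$ must be decreased in stages as better lower bounds on $m(x)$ arrive; each stage issues a fresh request, and one must verify that the total measure of all requests (over all $x$ and all stages) stays bounded by a constant times $\sum_x m(x)\le 1$. This is arranged by requesting, at each improvement, a codeword of length one less than before, so the geometric series of a string's successive requests sums to at most twice its final weight.

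The crux is therefore the online allocation lemma: given a lower semicomputable sequence of weights with bounded sum, one can effectively assign pairwise-incompatible binary strings whose lengths match the weights up to an additive constant, in such a way that the assignment is monotone and never runs out of room. I would state and prove this allocation procedure carefully (maintaining at each step a collection of free intervals in $[0,1]$ and granting each request by carving off an interval of dyadic length, always choosing lengths so the reserved string is incompatible with previously reserved ones), and then read off $\KP_D(x)\le -\log m(x)+O(1)$ from the length of the final codeword assigned to $x$. Combined with optimality of the fixed prefix decompressor, this yields $\KP(x)\le \KP_D(x)+O(1)\le -\log m(x)+O(1)$, completing the proof.
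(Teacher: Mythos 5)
Your proposal is correct and follows essentially the same route as the paper: the easy direction via maximality of $m$ applied to the lower semicomputable semimeasure $x\mapsto 2^{-\KP(x)}$, and the hard direction via an online Kraft--Chaitin-style allocation of incompatible codewords, issuing a new (doubled-weight) request each time the lower approximation of $m(x)$ crosses the next power of two, which is precisely the paper's ``essential pairs'' plus memory-allocation lemma. The only cosmetic difference is that you let the decompressor map codewords directly to $x$, while the paper first bounds $\KP(i)$ for the request index $i$ and then uses computability of $i\mapsto x_i$.
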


(Here $m(x)$ is a priori probability; $\log$ stands for binary
logarithm.)

\begin{proof}
The function $\KP$ is enumerable from above; therefore,
$x\mapsto 2^{-\KP(x)}$ is lower semicomputable. Also we know
that $\sum_x 2^{-\KP(x)}\le 1$, therefore $2^{-\KP(x)}$ is a
lower semicomputable semimeasure. Therefore, $2^{-\KP(x)}\le
cm(x)$ and
        $
\KP(x)\ge -\log m(x)+O(1).
        $
To prove that $\KP(x)\le -\log m(x)+O(1)$, we need the following
lemma about memory allocation.

Let the memory space be represented by $[0,1]$. Each memory
request asks for segment of length $1, 1/2, 1/4, 1/8$, etc. that
is properly aligned. Alignment means that for segment of length
$1/2^k$ only $2^k$ positions are allowed ($[0,2^{-k}],
[2^{-k},2\cdot 2^{-k}]$, etc.). Allocated segments should be
disjoint (common endpoints are allowed). Memory is never freed.
\smallskip

\textbf{Lemma}. \emph{For each computable sequence of requests
$2^{-n_i}$ such that $\sum 2^{-n_i}\le 1$ there is a computable
sequence of allocations that grant all requests.}

\begin{proof}
We keep a list of free space divided into segments of
size $2^{-k}$. Invariant relation: all segments are properly
aligned and have different size. Initially there is one free
segment of length $1$. When a new request of length $w$ comes,
we pick up the smallest segment of length at least $w$. This
strategy is sometimes called ``best fit'' strategy. (Note that
if the free list contains only segments of length
$w/2,w/4,\dots$, then the total free space is less than $w$, so
it cannot happen by our assumption.) If the smallest free segment of
length at least $w$ has length $w$, we simple allocate it (and
delete from the free list). If it has length $w'>w$, then we
split $w'$ into parts of size $w,w,2w,4w,\dots,w'/4,w'/2$ and
allocate the left $w$-segment putting all others in the free
list, so the invariant is maintained. 
\end{proof}

Reformulation of the lemma: \dots there is a computable sequence of
incompatible strings $x_i$ such that $|x_i|=n_i$. (Indeed, an
aligned segment of size $2^{-n}$ is $I_x$ for some string $x$
for length $n$.)
\smallskip

\textbf{Corollary}. \emph{For each computable sequence of requests
$2^{-n_i}$ such that $\sum 2^{-n_i}\le 1$ we have $\KP(i)\le n_i$.}

(Indeed, consider a decompressor that maps $x_i$ to $i$. Since
all $x_i$ are pairwise incompatible, it is a prefix function.)

Now we return to the proof. Since $m$ is lower semicomputable,
there exists a non-negative function $M:\langle x,k\rangle\mapsto M(x,k)$ of two arguments with rational values that is non-decreasing with
respect to the second argument such that $\lim_k M(x,k)=m(x)$.

Let $M'(x,k)$ be the smallest number in the sequence
$1,1/2,1/4,1/8,\dots,0$ that is greater than or equal to
$M(x,k)$. It is easy to see that $M'(x,k)\le 2M(x,k)$ and that
$M'$ is monotone.

We call pair $\langle x,k\rangle$ ``essential'' if $k=0$ or
$M'(x,k)>M'(x,k-1)$. The sum of $M'(x,k)$ for all essential
pairs with given $x$ is at most twice bigger than its biggest
term (because each term is at least twice bigger than the preceding
one), and its biggest term is at most twice bigger than $M(x,k)$
for some $k$. Since $M(x,k)\le m(x)$ and $\sum m(x)\le 1$, we
conclude that the sum of $M'(x,k)$ for all essential pairs $\langle
x,k\rangle$ does not exceed $4$.

Let $\langle x_i, k_i\rangle$ be a computable sequence of all
essential pairs. (We enumerate all pairs and select essential
ones.) Let $n_i$ be an integer such that
$2^{-n_i}=M'(x_i,k_i)/4$. Then $\sum 2^{-n_i}\le 1$.

Therefore, $\KP(i)\le n_i$. Since $x_i$ is obtained from $i$ by
an algorithm, we conclude that $\KP(x_i)\le n_i+O(1)$ for all
$i$. For a given $x$ one can find $i$ such that $x_i=x$ and
$2^{-n_i}\ge m_i/4$, so $n_i \le-\log m(x)+2$ and
$\KP(x)\le -\log m(x)+O(1)$.
\end{proof}

\section{Prefix complexity of a pair}

We can define $\KP(x,y)$ as prefix complexity of some code
$[x,y]$ of pair $\langle x,y\rangle$. As usual, different computable
encodings give complexities that differ at most by $O(1)$.

\begin{theorem}
        $$
\KP(x,y)\le \KP(x)+\KP(y)+O(1).
        $$
\end{theorem}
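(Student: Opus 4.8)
The plan is to exploit the identity $\KP(x) = -\log m(x) + O(1)$ established above, which reduces the desired inequality to a multiplicative statement about the a priori probability: it suffices to produce a constant $c$ with $m(x)\,m(y) \le c\cdot m([x,y])$ for all $x,y$. Taking $-\log$ and applying the identity three times then gives the result immediately.

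First I would define a function $\mu$ on strings by setting $\mu(z) = m(x)\,m(y)$ when $z = [x,y]$ for the (unique, since the pairing is injective) pair $\langle x,y\rangle$ coded by $z$, and $\mu(z)=0$ otherwise. Two things must be checked. Summability is immediate: since the pairing is injective,
$$
\sum_z \mu(z) = \sum_{x,y} m(x)\,m(y) = \Bigl(\sum_x m(x)\Bigr)\Bigl(\sum_y m(y)\Bigr) \le 1.
$$
Lower semicomputability is the point that needs a small argument: $m$ is nonnegative and lower semicomputable, hence for each argument a limit of an increasing computable sequence of rationals; the product of two such nonnegative increasing approximations is again an increasing computable rational approximation from below, and decoding $z$ into $\langle x,y\rangle$ is computable. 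Thus $\mu$ is a lower semicomputable semimeasure.

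By maximality of the a priori probability $m$, there is a constant $c$ with $\mu(z) \le c\cdot m(z)$ for all $z$; in particular $m(x)\,m(y) \le c\cdot m([x,y])$. Applying the identity $\KP = -\log m + O(1)$ to each factor yields
$$
\KP(x,y) = -\log m([x,y]) + O(1) \le -\log m(x) - \log m(y) + O(1) = \KP(x)+\KP(y)+O(1),
$$
which is the claim.

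The main obstacle is the verification that $\mu$ is lower semicomputable; everything else is bookkeeping around the pairing encoding. A purely combinatorial alternative avoids $m$ entirely: take the optimal prefix decompressor $D$ and build a new prefix decompressor that, on input $w$, searches for a splitting $w=pq$ with $D(p)$ and $D(q)$ both defined and outputs $[D(p),D(q)]$. The prefix property of $D$ guarantees that such a splitting, if it exists, is unique (two distinct splittings would force one of the two first parts to be a prefix of the other, on both of which $D$ is defined, contradicting that $D$ is prefix), so the new machine is itself prefix; feeding it the concatenation of the shortest descriptions of $x$ and $y$ gives the bound. I would nevertheless present the a priori probability argument as the primary proof, since it is shorter and the required machinery is already in place.
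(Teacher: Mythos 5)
Your proposal is correct and takes essentially the same route as the paper, which in fact presents both of your arguments: the semimeasure argument (defining $m'([x,y])=m(x)m(y)$, checking $\sum_z m'(z)\le 1$ and lower semicomputability, then invoking maximality of the a priori probability $m$) and the prefix-decompressor construction $pq\mapsto[D(p),D(q)]$ with the same uniqueness-of-splitting verification. The only difference is emphasis: the paper presents the prefix-function proof first and the a priori probability proof second, while you reverse that order.
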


Note that now we do not need $O(\log n)$ term that was necessary for
plain complexity.

\begin{proof}
Let us give two proofs of this theorem using prefix
functions and a priori probability.

(1)~Let $D$ be the optimal prefix decompressor used in the
definition of $\KP$. Consider a function $D'$ such that
        $$
D'(pq)=[D(p), D(q)]
        $$
for all strings $p$ and $q$ such that $D(p)$ and $D(q)$ are
defined. Let us prove that this definition makes sense, i.e.,
that it does not lead to conflicts. Conflict happens if $pq=p'q'$
and $D(p), D(q), D(p'),D(q')$ are defined. But then $p$ and $p'$
are prefixes of the same string and are compatible, so $D(p)$
and $D(p')$ cannot be defined at the same time unless $p=p'$
(which implies $q=q'$).

Let us check that $D'$ is a prefix function. Indeed, if it is
defined for $pq$ and $p'q'$, and at the same time $pq$ is a prefix of $p'q'$, then
(as we have seen) $p$ and $p'$ are compatible and (since $D(p)$
and $D(p')$ are defined) $p=p'$. Then $q$ is a prefix of $q'$,
so $D(q)$ and $D(q')$ cannot be defined at the same time.

The function $D'$ is computable (for given $x$ we try all decompositions
$x=pq$ in parallel). So we have a prefix algorithm $D'$ such
that
        $\KS_D([x,y])\le \KP(x)+\KP(y)$
and therefore
        $\KP(x,y)\le\KP(x)+\KP(y)+O(1)$.
(End of the first proof.)

(2)~In terms of a priori probability we have to prove that
        $$
   m([x,y]) \ge \varepsilon m(x)m(y)
        $$
for some positive $\varepsilon$ and all $x$ and $y$.
Consider the function $m'$ determined by the equation
        $$
   m'([x,y]) = m(x)m(y)
        $$
($m'$ is zero for inputs that do not encode pairs of
strings). We have
        $$
\sum_z m'(z)=\sum_{x,y} m'([x,y])
=\sum_{x,y} m(x)m(y)= \sum_x m(x) \sum_y m(y)\le 1\cdot 1=1.
        $$
Function $m'$ is lower semicomputable, so $m'$ is a
semimeasure. Therefore, it is bounded by maximal semimeasure (up
to a constant factor). 
\end{proof}

A similar (but a bit more complicated) argument shows the equality
      $$
\KP(x,y)=\KP(x)+\KP(y|x,\KP(x))+O(1).
       $$

\section{Prefix complexity and randomness}

\begin{theorem}
        \label{randomness-criterion}
A sequence $x_0x_1x_2\dots$ is Martin-L\"of random if and only if
there exists some constant $c$ such that
        $$
\KP(x_0x_1\ldots x_{n-1})\ge n-c
        $$
for all $n$.
\end{theorem}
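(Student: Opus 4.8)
The plan is to prove the two implications separately. The direction ``random $\Rightarrow$ complexity stays high'' is the easy one and follows from the inequality $\sum_x 2^{-\KP(x)}\le 1$ together with the upper semicomputability of $\KP$; the converse ``high complexity $\Rightarrow$ random'' is the substantial part, and I would obtain it by manufacturing a lower semicomputable semimeasure out of a Martin-L\"of test covering a non-random sequence and then translating it into a complexity bound via the theorem $\KP(z)=-\log m(z)+O(1)$.

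For the easy direction I would argue by contraposition and exhibit an effectively null set containing every sequence that violates the bound. Fix $c$ and set $U_c=\{\omega \mid \exists n,\ \KP(\omega_0\ldots\omega_{n-1})<n-c\}$. The strings $y$ with $\KP(y)<|y|-c$ can be enumerated (since $\KP$ is enumerable from above), and the cylinders $\Omega_y$ over such $y$ cover $U_c$. Their total measure satisfies
$$
\sum_{y:\,\KP(y)<|y|-c}2^{-|y|}<2^{-c}\sum_y 2^{-\KP(y)}\le 2^{-c},
$$
because $|y|>\KP(y)+c$ for each such $y$. Hence, given $\varepsilon$, choosing $c$ with $2^{-c}\le\varepsilon$ yields an effective $\varepsilon$-cover of $\bigcap_c U_c$, so $\bigcap_c U_c$ is an effectively null set. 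Every sequence that fails the bound for all $c$ lies in this set and is therefore non-random; equivalently, a random $\omega$ satisfies $\KP(\omega_0\ldots\omega_{n-1})\ge n-c$ for some $c$ and all $n$.

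For the hard direction I would suppose $x$ is not random, so that $x$ belongs to the maximal effectively null set. Its covering algorithm then provides, uniformly in $m$, an effectively open set $V_m$ with $x\in V_m$ and $P(V_m)\le 2^{-2m}$ (I take $\varepsilon=2^{-2m}$; the squared index is exactly what makes the series below converge). Presenting each $V_m$ as an effective union of disjoint cylinders $\Omega_z$ with $z$ ranging over a set $S_m$, so that $\sum_{z\in S_m}2^{-|z|}=P(V_m)$, I define
$$
\mathbf q(z)=\sum_{m\ge 1}[\,z\in S_m\,]\,2^{\,m-|z|}.
$$
This $\mathbf q$ is lower semicomputable, and
$$
\sum_z \mathbf q(z)=\sum_{m\ge1}2^{m}\sum_{z\in S_m}2^{-|z|}=\sum_{m\ge1}2^{m}P(V_m)\le\sum_{m\ge1}2^{-m}=1,
$$
so $\mathbf q$ is a semimeasure and hence $\mathbf q(z)\le C\,m(z)$ for the a priori probability $m$; by $\KP(z)=-\log m(z)+O(1)$ this gives $\KP(z)\le-\log\mathbf q(z)+O(1)$.

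Finally I would read off the conclusion. For each $m$, since $x\in V_m$ exactly one $z=x_0\ldots x_{n-1}\in S_m$ is a prefix of $x$, and for that $z$ we have $\mathbf q(z)\ge 2^{m-|z|}$, whence
$$
\KP(x_0\ldots x_{n-1})\le -\log\mathbf q(z)+O(1)\le |z|-m+O(1)=n-m+O(1).
$$
As $m$ is arbitrary, the deficiency $n-\KP(x_0\ldots x_{n-1})$ is unbounded, so for every $c$ there is an $n$ with $\KP(x_0\ldots x_{n-1})<n-c$, which is precisely the negation of the complexity condition and completes the contrapositive. The main obstacle is exactly the convergence issue in the penultimate paragraph: the naive test bound $P(V_m)\le 2^{-m}$ makes $\sum_m 2^m P(V_m)$ diverge, and the remedy is to feed the reindexed bound $2^{-2m}$ into the semimeasure so that the total weight remains finite while the entry points of $x$ into the $V_m$ still acquire complexity deficiency $\ge m-O(1)$.
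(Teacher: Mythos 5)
Your proposal is correct and follows essentially the same route as the paper's proof: the easy direction via enumerating the $c$-defective strings and bounding the measure of their cylinders by $2^{-c}\sum_y 2^{-\KP(y)}\le 2^{-c}$, and the hard direction via covers of measure $2^{-2m}$ assembled into a lower semicomputable semimeasure of total weight $\le 1$, converted to a complexity bound through $\KP(z)=-\log m(z)+O(1)$. The paper's semimeasure $M(x)=\sum\{2^{-n(c,i)}\mid u(c,i)=x\}$ is exactly your $\mathbf q$, including the same $2^{-2c}$ reindexing trick.
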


\begin{proof}
We have to prove that the sequence $x_0x_1x_2\dots$ is \emph{not}
random if and only if for every $c$ there exists $n$ such that
        $$
\KP(x_0x_1\ldots x_{n-1})< n-c.
        $$

(If-part)~A string $u$ is called (for this proof) $c$-defective if $\KP(u)<
|u|-c$. We have to prove that the set of all sequences that have
$c$-defective prefix for all $c$, is an effectively null set.
It is enough to prove that the set of all sequences that have
$c$-defective prefix for a given $c$ can be covered by intervals with total
measure $2^{-c}$.

Note that the set of all $c$-defective strings is enumerable
(since $\KP$ is enumerable from above). It remains to show that the sum
$\sum 2^{-|u|}$ over all $c$-defective $u$ does not exceed
$2^{-c}$. Indeed, if $u$ is $c$-defective, then by definition
$2^{-|u|}\le 2^{-c}2^{-KP(u)}$. On the other hand, the sum of
$2^{-\KP(u)}$ over all $u$ (and therefore over defective $u$)
does not exceed $1$.

(Only-if-part)~Let $N$ be the set of all non-random sequences. $N$ is
an effectively null set. For each integer $c$ consider a
sequence of intervals
        $$
\Omega_{u(c,0)},
\Omega_{u(c,1)},
\Omega_{u(c,2)},\dots
        $$
that cover $N$ and have total measure at most $2^{-2c}$.
Definition of effectively null sets guarantees that such a
sequence exists (and its elements can be effectively generated when $c$ is given).

For each $c,i$ consider the integer $n(c,i)=|u(c,i)|-c$. For a
given $c$ the sum $\sum_i 2^{-n(c,i)}$ does not exceed $2^{-c}$
(because the sum $\sum_i 2^{-|u(c,i)|}$ does not exceed
$2^{-2c}$). Therefore the sum $\sum_{c,i}2^{-n(c,i)}$ over all
$c$ and $i$ does not exceed $1$.

We would like to consider a semimeasure $M$ such that $M(u(c,i))=2^{-n(c,i)}$; however, it may happen that $u(c,i)$ coincide for different
pairs $c,i$. In this case we add the corresponding values, so the precise definition is
        $$
M(x)=\sum\{2^{-n(c,i)}\mid u(c,i)=x\}.
        $$
Note that $M$ is lower semicomputable, since $u$ and $n$ are computable
functions. Therefore, if $m$ is the universal semimeasure, we have
$m(x)\ge\varepsilon M(x)$, so $\KP(x)\le -\log M(x)+O(1)$, and
$\KP(u(c,i))\le n(c,i)+O(1)=|u(c,i)|-c+O(1)$.

If some sequence $x_0x_1x_2\dots$ belongs to the set $N$ of
non-random sequences, then it has prefixes of the form $u(c,i)$ for
all $c$, and for these prefixes the difference between length
and $\KP$ is not bounded.
\end{proof}

\section{Strong law of large numbers revisited}

Let $p,q$ be positive rational numbers such that $p+q=1$.
Consider the following semimeasure: a string $x$ of length $n$ with
$k$ ones and $l$ zeros has probability
        $$
\mu(x)=\frac{c}{n^2} p^{k} q^{l}
        $$
where constant $c$ is chosen in such a way that $\sum_n c/n^2
\le 1$. It is indeed a semimeasure (the sum over all strings $x$ is
at most $1$, because the sum of $\mu(x)$ over all strings $x$ of
given length $n$ is $1/n^2$; $p^kq^l$ is a probability to get
string $x$ for a biased coin whose sides have probabilities $p$ and $q$).

Therefore, we conclude that $\mu(x)$ is bounded by a priori
probability (up to a constant) and we get an upper bound
        $$
\KP(x) \le 2\log n + k (- \log p) + l (-\log q) +O(1)
        $$
for fixed $p$ and $q$ and for arbitrary string $x$ of length $n$ that
has $k$ ones and $l$ zeros. If $p=q=1/2$, we get the bound
$\KP(x)\le n+2\log n+O(1)$ that we already know. The new
bound is biased: If $p>1/2$ and $q<1/2$, then $-\log p < 1$ and
$-\log q >1$, so we count ones with less weight than zeros, and
new bound can be better for strings that have many ones and few
zeros.

Assume that $p>1/2$ and the fraction of ones in $x$ is greater that
$p$. Then our bound implies
        $$
\KP(x) \le 2\log n + np(- \log p) + nq(-\log q) +O(1)
        $$
(more ones make our bound only tighter). It can be rewritten as
        $$
\KP(x) \le n H(p,q) + 2\log n+ O(1)
        $$
where $H(p,q)$ is Shannon entropy for two-valued distribution
with probabilities $p$ and $q$:
        $$
H(p,q)=-p\log p -q \log q.
        $$
Since $p+q=1$, we have function of one variable:
        $$
H(p)=H(p,1-p)= -p\log p - (1-p)\log (1-p).
        $$
This function has a maximum at $1/2$; it is easy to check using
derivatives that $H(p)=1$ when $p=1/2$ and $H(p)<1$ when $p\ne
1/2$.
\smallskip

\textbf{Corollary}. \emph{For every $p>1/2$ there exist a constant $\alpha<1$ and
a constant $c$ such that
        $$
\KP(x) \le \alpha n + 2\log n +c
        $$
for each string $x$ where frequency of $1$s is at least $p$}.
\smallskip

Therefore, for every $p>1/2$, an infinite sequence of zeros and ones that has infinitely many prefixes with frequency of ones at least
$p$, is not Martin-L\"of random. This gives us a proof of
a constructive version of Strong Law of Large Numbers:

\begin{theorem}
Every Martin-L\"of random sequence $x_0x_1x_2\dots$ of zeros
and ones is balanced:
        $$
\lim_{n\to \infty} \frac{x_0+x_1+\ldots+x_{n-1}}{n} = \frac{1}{2}.
        $$
\end{theorem}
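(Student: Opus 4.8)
The plan is to derive a contradiction from non-balancedness by exhibiting prefixes whose complexity deficiency $n-\KP(x_0\dots x_{n-1})$ is unbounded, and then to invoke the randomness criterion (Theorem~\ref{randomness-criterion}), which says a sequence is Martin-L\"of random exactly when this deficiency stays bounded.

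First I would unpack what failure of balance means. The frequency limit equals $1/2$ precisely when both the $\limsup$ and the $\liminf$ of $(x_0+\dots+x_{n-1})/n$ equal $1/2$. Hence if the sequence is not balanced, then either $\limsup > 1/2$ or $\liminf < 1/2$. In the first case I can choose a rational $p$ with $1/2 < p < \limsup$, and then infinitely many prefixes $x_0\dots x_{n-1}$ have frequency of ones at least $p$. Next I feed these prefixes into the Corollary preceding the theorem: for this fixed $p>1/2$ there are constants $\alpha<1$ and $c$ with $\KP(x_0\dots x_{n-1}) \le \alpha n + 2\log n + c$ for each such prefix. Since $\alpha<1$, the deficiency $n - \KP(x_0\dots x_{n-1}) \ge (1-\alpha)n - 2\log n - c$ tends to infinity along these infinitely many $n$. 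Thus no constant $c'$ can satisfy $\KP(x_0\dots x_{n-1}) \ge n - c'$ for all $n$, and by Theorem~\ref{randomness-criterion} the sequence is not Martin-L\"of random, contradicting the hypothesis.

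The remaining case $\liminf < 1/2$ (too many zeros) I would handle by symmetry: bitwise complementation is a computable, measure-preserving bijection of $\Omega$ that maps random sequences to random sequences and exchanges the roles of zeros and ones. Applying the first case to the complemented sequence, whose ones-frequency now has $\limsup > 1/2$, closes the argument. Alternatively one observes that the derivation of the Corollary is symmetric in $p$ and $q$, so the analogous bound holds directly for strings whose zero-frequency is at least some $p>1/2$.

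The main obstacle here is essentially bookkeeping rather than a genuine difficulty, since all the analytic content, namely the entropy estimate $H(p)<1$ for $p\ne 1/2$ and the resulting semimeasure and complexity bounds, is already packaged in the Corollary and in the a priori probability machinery. The one point deserving care is verifying that $\alpha<1$ genuinely forces $n-\KP$ to diverge, so that a one-sided frequency bound on infinitely many prefixes is enough to defeat the randomness criterion, and confirming that the zero-excess case reduces cleanly to the one-excess case under complementation.
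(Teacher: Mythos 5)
Your proposal is correct and takes essentially the same route as the paper: the paper's own (very terse) argument is precisely to combine the Corollary's bound $\KP(x)\le \alpha n+2\log n+c$ for prefixes with ones-frequency at least $p>1/2$ with the randomness criterion of Theorem~\ref{randomness-criterion}, concluding that a sequence with infinitely many such prefixes cannot be Martin-L\"of random. The details you spell out --- the $\limsup$/$\liminf$ case analysis, the divergence of the deficiency $(1-\alpha)n-2\log n-c$, and the symmetry (swap $p$ and $q$) handling of the excess-of-zeros case --- are exactly what the paper leaves implicit.
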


\begin{problems}
\subsection*{Problems}

\leavevmode

1. Let $D$ be a prefix decompression algorithm. Give a
direct construction of a probabilistic machine that outputs
$i$ with probability at least $2^{-K_D(i)}$.

2.$^*$ Prove that $\KP(x)\le \KS(x)+\KP(\KS(x))$

3. Prove that there exists an infinite sequence $x_0x_1\dots$
and a constant $c$ such that
        $$
\KS(x_0x_1\dots x_{n-1}) \ge n-2\log n + c
        $$
for all $n$.
\end{problems}

\section{Hausdorff dimension}

Let $\alpha$ be a positive real number. A set $X\subset\Omega$ of infinite bit sequences is called \emph{$\alpha$-null} if for every $\varepsilon>0$ there exists a set of strings $u_0,u_1,u_2,\ldots$ such that 

(1)~$X\subset \Omega_{u_0}\cup\Omega_{u_1}\cup\Omega_{u_2}\cup\ldots$;

(2)~$\sum_i 2^{-\alpha|u_i|} < \varepsilon$.

In other terms, we modify the definition of a null set: instead of the uniform measure $P(\Omega_u)=2^{-|u|}$ of an interval $\Omega_u$ we consider its \emph{$\alpha$-size} $(P(\Omega_u))^{\alpha}=2^{-\alpha|u|}$. For $\alpha>1$ we get a trivial notion: all sets are $\alpha$-null (one can cover the entire $\Omega$ by $2^N$ intervals of size $2^{-N}$, and $2^N\cdot 2^{-\alpha N}=1/2^{(\alpha-1)N}$ is small for large $N$). For $\alpha=1$ we get the usual notion of null sets, and for $\alpha<1$ we get a smaller class of sets (the smaller $\alpha$ is, the stronger condition we get).

For a given set $X\subset\Omega$ consider the infimum of $\alpha$ such that $X$ is an $\alpha$-null set. This infimum is called the \emph{Hausdorff dimension} of $X$. As we have seen, for the subsets of $\Omega$ the Hausdorff dimension is at most~$1$.

This is a classical notion but it can be constructivized in the same way as for null sets. A set $X\subset\Omega$ of infinite bit sequences is called \emph{effectively $\alpha$-null} if there is an algorithm that, given a rational $\varepsilon>0$, enumerates a sequence of strings $u_0,u_1,u_2,\ldots$ satisfying (1) and (2). The following result extends Theorem~\ref{maximal-null}:

\begin{theorem}
Let $\alpha>0$ be a rational number. Then there exists an effectively $\alpha$-null set $N$ that contains every effectively $\alpha$-null set.
\end{theorem}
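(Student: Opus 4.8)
The plan is to follow the proof of Theorem~\ref{maximal-null} essentially verbatim, replacing the measure $2^{-|u|}$ of an interval $\Omega_u$ by its $\alpha$-size $2^{-\alpha|u|}$ everywhere. First I would fix an enumeration $A_0, A_1, A_2, \dots$ of all algorithms that take a positive rational $\varepsilon$ as input and enumerate a sequence of binary strings. As before, we cannot effectively recognize which of these are genuine covering algorithms for effectively $\alpha$-null sets, so instead of selecting them we tame each $A$ into a modified algorithm $A'$ that is forced to respect the $\alpha$-size budget.

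The taming is the same trick as in Theorem~\ref{maximal-null}: running $A(\varepsilon)$ and watching the strings $x_0, x_1, \dots$ it produces, I would output a string only after confirming that the total $\alpha$-size of the already-accepted strings together with the new one stays strictly below $\varepsilon$, discarding it otherwise. This guarantees the two properties that drive the argument: (i) if $A$ was already a covering algorithm for some effectively $\alpha$-null set $X$, then the total $\alpha$-size of its output is strictly below $\varepsilon$, so no string is ever discarded and $A'$ covers exactly the same set; (ii) for an arbitrary $A$, the output of $A'$ has all partial $\alpha$-sizes below $\varepsilon$, hence infinite $\alpha$-size at most $\varepsilon$, so $A'$ is a covering algorithm (with ``$\le$'' in place of ``$<$'', which is harmless) for some effectively $\alpha$-null set.

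With the $A'_j$ in hand, I would assemble the maximal set by the same diagonal union used in the countable-union lemma: given $\varepsilon$, run each $A'_j$ with budget $\varepsilon/2^{\,j+1}$ and take the union of all strings produced; the total $\alpha$-size is at most $\sum_j \varepsilon/2^{\,j+1} = \varepsilon$, so this is a covering algorithm for $N = \bigcup_j X_j$, witnessing that $N$ is effectively $\alpha$-null. Maximality is immediate: any effectively $\alpha$-null set $X$ has some covering algorithm, which is one of the $A_j$; by property~(i) its tamed version $A'_j$ produces the same strings and so still covers $X$ at every budget. In particular the level-$\varepsilon$ combined cover of $N$ includes the output of $A'_j$ at budget $\varepsilon/2^{\,j+1}$, which covers $X$; hence $X \subseteq N$.

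The only point that is genuinely new compared with the $\alpha = 1$ case---and the one I expect to require care---is the comparison hidden in the taming step. For $\alpha = 1$ the summands $2^{-|u_i|}$ are dyadic rationals and ``running sum $< \varepsilon$'' is a decidable comparison of rationals. For general rational $\alpha$ the summands $2^{-\alpha|u_i|}$ are irrational, so I cannot test the inequality exactly. The fix is that rationality of $\alpha$ makes each $2^{-\alpha|u|}$ a computable (indeed algebraic) real uniformly in $u$: I would approximate the finite running sum by rationals to arbitrary precision and thereby semi-decide the strict inequality ``sum $< \varepsilon$'', outputting a string precisely when such a confirmation is obtained and dovetailing these confirmations over all still-pending strings so that the enumeration never blocks. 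A string from a genuine covering algorithm is never lost this way, because there the total $\alpha$-size is bounded away from $\varepsilon$, so every partial sum is confirmably below $\varepsilon$.
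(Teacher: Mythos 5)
Your proposal is correct and takes essentially the same approach as the paper: the paper's proof is exactly the remark that the argument for Theorem~\ref{maximal-null} (taming each candidate algorithm, then the countable-union construction) carries over because for rational $\alpha$ the $\alpha$-sizes $2^{-\alpha|u|}$ can be computed to arbitrary precision, which suffices to check finite sums against $\varepsilon$. Your dovetailed semi-decision of the strict inequality ``partial sum $<\varepsilon$'' is precisely the detail that this remark leaves implicit, and you handle it correctly.
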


\begin{proof}
We can use the same argument as for Theorem~\ref{maximal-null}: since $\alpha$ is rational, we can compute the $\alpha$-sizes of intervals with arbitrary precision, and this is enough to ensure that the sum of $\alpha$-sizes of a finite set of intervals is less than $\varepsilon$. (The same argument works for every computable $\alpha$.)
\end{proof}

Now we define \emph{effective Hausdorff dimension} of a set $X\subset \Omega$ as the infimum of $\alpha$ such that $X$ is an effectively $\alpha$-null set. It is easy to see that we may consider only rational $\alpha$ in this definition. The effective Hausdorff dimension cannot be smaller than the (classical) Hausdorff dimension, but may be bigger (see below).

We define the effective Hausdorff dimension of a point $\chi\in\Omega$ as the effective Hausdorff dimension of the singleton $\{\chi\}$. Note that there is no classical counterpart of this notion, since every singleton has Hausdorff dimension $0$.

For effectively null sets we have seen that this property of the set was essentially the property of its elements (all elements should be non-random); a similar result is true for effective Hausdorff dimension.

\begin{theorem}
      \label{dimension-supremum}
For every set $X$ its effective Hausdorff dimension equals the supremum of effective Hausdorff dimensions of its elements.
\end{theorem}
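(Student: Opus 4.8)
The statement is an equality, so the plan is to establish two inequalities; write $\dim X$ for the effective Hausdorff dimension of a set $X$ and put $d=\sup_{\chi\in X}\dim\{\chi\}$.

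First I would dispose of the easy inequality $d\le\dim X$. For each $\chi\in X$ the singleton $\{\chi\}$ is a subset of $X$, and---just as for ordinary effectively null sets---a subset of an effectively $\alpha$-null set is again effectively $\alpha$-null, since the very same enumerable cover serves. Hence if $X$ is effectively $\alpha$-null then so is each $\{\chi\}$, giving $\dim\{\chi\}\le\dim X$ for all $\chi\in X$ and therefore $d\le\dim X$.

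The interesting direction is $\dim X\le d$, and here the maximal effectively $\alpha$-null set furnished by the preceding theorem carries the whole argument. Fix a rational $\alpha>d$. Every $\chi\in X$ has $\dim\{\chi\}\le d<\alpha$, so $\{\chi\}$ is effectively $\beta$-null for some $\beta<\alpha$; since $2^{-\alpha|u|}\le 2^{-\beta|u|}$ for all strings $u$, the same cover shows $\{\chi\}$ is effectively $\alpha$-null as well. By maximality, $\chi$ lies in the maximal effectively $\alpha$-null set $N_\alpha$. As this holds for every element, $X\subseteq N_\alpha$, and once more using that subsets of effectively $\alpha$-null sets are effectively $\alpha$-null, $X$ itself is effectively $\alpha$-null. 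Thus $\dim X\le\alpha$ for every rational $\alpha>d$, and letting $\alpha$ decrease to $d$ yields $\dim X\le d$.

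The one genuine obstacle is the temptation to prove the upper bound by amalgamating covers of the individual singletons: there may be uncountably many of them and no way to enumerate their covers uniformly, so that route fails outright. The resolution---exactly the paradoxical-sounding phenomenon already seen for effectively null sets, where a set all of whose points are non-random is itself effectively null---is to pass through the single object $N_\alpha$: each point belongs to $N_\alpha$ for purely pointwise reasons, hence the entire set is trapped inside $N_\alpha$ and inherits its cover. I would only add the remark that restricting to rational $\alpha$ costs nothing, since the infimum defining the dimension may be taken over the rationals, so that the maximality theorem (stated for rational $\alpha$) is applicable.
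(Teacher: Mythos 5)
Your proof is correct and follows essentially the same route as the paper's: the easy inequality via the subset property of effectively $\alpha$-null sets, and the hard inequality by trapping $X$ inside the maximal effectively $\alpha$-null set for each rational $\alpha$ exceeding the supremum. The extra details you supply (monotonicity in $\alpha$ and the reduction to rational $\alpha$) are points the paper leaves implicit, and they are handled correctly.
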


\begin{proof}
Evidently, the dimension of an element of $X$ cannot exceed the dimension of the set $X$ itself. On the other hand, if for some rational $\alpha>0$ all elements of $X$ have effective dimension less than $\alpha$, they all belong to the maximal effectively $\alpha$-null set, so $X$ is a subset of this maximal set, so $X$ is effectively $\alpha$-null set, and the effective dimension of $X$ does not exceed~$\alpha$.
\end{proof}

The criterion of Martin-L\"of randomness in terms of complexity (Theorem~\ref{randomness-criterion}) also has its counterpart for effective dimension. The previous result (Theorem~\ref{dimension-supremum}) shows that it is enough to characterize the effective dimension of singletons, and this can be done:

\begin{theorem}
The effective Hausdorff dimension of a sequence $\chi=x_0x_1x_2\ldots$ is equal to 
$$
    \liminf_{n\to\infty}\frac{\KP(x_0x_1\ldots x_{n-1})}{n}
$$
\end{theorem}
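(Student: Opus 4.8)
The plan is to establish both inequalities between the effective Hausdorff dimension $\dim(\chi)$ of the singleton $\{\chi\}$ and the quantity $d=\liminf_{n\to\infty}\KP(x_0\ldots x_{n-1})/n$. Throughout I would use that $\KP$ is enumerable from above, that $\sum_x 2^{-\KP(x)}\le 1$, and the identity $\KP=-\log m+O(1)$ relating prefix complexity to the maximal lower semicomputable semimeasure $m$.

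First I would show $\dim(\chi)\le d$. Fix a rational $\alpha$ and a rational $\beta$ with $d<\beta<\alpha$. By definition of $d$ there are infinitely many $n$ with $\KP(x_0\ldots x_{n-1})<\beta n$, so infinitely many prefixes of $\chi$ lie in the enumerable set $S=\{u\mid \KP(u)<\beta|u|\}$. For $u\in S$ we have $2^{-\beta|u|}<2^{-\KP(u)}$, hence $\sum_{u\in S}2^{-\beta|u|}\le\sum_u 2^{-\KP(u)}\le 1$; consequently, restricting to long strings, $\sum_{u\in S,\,|u|\ge N}2^{-\alpha|u|}\le 2^{-(\alpha-\beta)N}$. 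Given $\varepsilon$ I choose $N$ with $2^{-(\alpha-\beta)N}<\varepsilon$ and enumerate $\{u\in S\mid |u|\ge N\}$: this set still contains infinitely many prefixes of $\chi$ (so it covers $\chi$) and its total $\alpha$-size is below $\varepsilon$. Thus $\{\chi\}$ is effectively $\alpha$-null for every rational $\alpha>d$, giving $\dim(\chi)\le d$.

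The reverse inequality $d\le\dim(\chi)$ is the interesting direction. Suppose $\{\chi\}$ is effectively $\alpha$-null for some rational $\alpha$; I must show $d\le\alpha$. For each $k$ let $C_k$ be an enumerable cover of $\chi$ with $\sum_{u\in C_k}2^{-\alpha|u|}<2^{-k}$, generated uniformly in $k$. Any $u\in C_k$ that is a prefix of $\chi$ satisfies $2^{-\alpha|u|}<2^{-k}$, i.e.\ $|u|>k/\alpha$, so these prefixes grow without bound as $k\to\infty$. I then assemble a single lower semicomputable semimeasure
$$
M(u)=\sum_{k\ge 1}\frac{1}{k^2}\sum_{u'\in C_k}[u'=u]\,2^{-\alpha|u|},
$$
whose total mass is $\sum_{k}k^{-2}\sum_{u\in C_k}2^{-\alpha|u|}\le\sum_k k^{-2}<2$, so $M/2$ is a semimeasure. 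Comparing with the maximal semimeasure $m$ gives $\KP(u)\le-\log M(u)+O(1)$. For a prefix $u$ of $\chi$ appearing in $C_k$ we have $M(u)\ge k^{-2}2^{-\alpha|u|}$, and since $k<\alpha|u|$ the penalty $2\log k$ is $2\log|u|+O(1)$; hence $\KP(u)\le\alpha|u|+2\log|u|+O(1)$. As these prefixes have lengths $n\to\infty$, we obtain $\KP(x_0\ldots x_{n-1})/n\le\alpha+O(\log n/n)$ along an infinite subsequence, so $d\le\alpha$. Taking the infimum over admissible $\alpha$ yields $d\le\dim(\chi)$, and combined with the first part this proves the theorem.

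The main obstacle is the second direction, and within it the bookkeeping around the semimeasure $M$: one must weight the successive covers so that the double sum over $k$ and over $u\in C_k$ converges (the $k^{-2}$ factor), while simultaneously guaranteeing that the covering intervals of $\chi$ become arbitrarily long (which forces the penalty $2\log k$ to be absorbed into $O(\log|u|)=o(|u|)$). The rationality of $\alpha$ is what makes $2^{-\alpha|u|}$ computable and $M$ genuinely lower semicomputable.
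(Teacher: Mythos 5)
Your proposal is correct and takes essentially the same route as the paper: the easy direction covers $\chi$ by the enumerable set of long prefixes $u$ with $\KP(u)<\beta|u|$, using $\sum_u 2^{-\KP(u)}\le 1$ and the gap $\alpha-\beta$ to make the total $\alpha$-size small, while the hard direction merges the covers $C_k$ into a single weighted family of bounded total $\alpha$-size and converts that bound into $\KP(u)\le\alpha|u|+o(|u|)$ for infinitely many prefixes of $\chi$ (you also spell out, via $|u|>k/\alpha$, why those prefix lengths tend to infinity, a point the paper leaves implicit). The only difference is cosmetic: in the hard direction you build the semimeasure $M$ with weights $k^{-2}$ and invoke $\KP=-\log m+O(1)$, paying a harmless $2\log|u|$ penalty, whereas the paper sums the $2^{-k}$-covers directly (total $\le 1$) and gets $\KP(u_i)\le\KP(i)+O(1)\le\alpha|u_i|+O(1)$ from its Kraft-type request lemma; both penalties vanish after dividing by $n$.
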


In this statement we use prefix complexity, but one may use the plain complexity instead (since the difference is at most $O(\log n)$ for $n$-bit strings).

\begin{proof}
If the $\liminf$ is smaller than $\alpha$, then $\KP(u)\le \alpha|u|$ for infinitely many prefixes of $\chi$. For the strings $u$ with this property we have 
$$2^{-\alpha|u|}\le m(u)$$
where $m$ is a priori probability, and the sum of $m(u)$ over all $u$ is bounded by~$1$. So we get a family of intervals that cover $\chi$ infinitely many times and have the sum of $\alpha$-sizes bounded by $1$. If we (1)~increase $\alpha$ a bit and consider some $\alpha'>\alpha$, and (2)~consider only strings $u$ of length greater than some large $N$, we get a family of intervals that cover $\chi$ and have small sum of $\alpha'$-sizes (bounded by $2^{(\alpha-\alpha')N}$, to be exact). This argument shows that the Hausdorff dimension of $\chi$ does not exceed the $\liminf$.

It remains to prove the reverse inequality. Assume that $\chi$ has effective Hausdorff dimension less than some (rational) $\alpha$. Then we can effectively cover $\chi$ by a family of intervals with arbitrarily small sum of $\alpha$-sizes. Combining the covers with sum bounded by $1/2, 1/4, 1/8,\ldots$, we get a computable sequence $u_0,u_1,u_2,\ldots$ such that 

(1)~intervals $\Omega_{u_0},\Omega_{u_1},\Omega_{u_2},\ldots$ cover $\chi$ infinitely many times;

(2)~$\sum 2^{-\alpha|u_i|} \le 1$.

The second inequality implies that $\KP(i)\le \alpha|u_i|+O(1)$, and therefore $\KP(u_i)\le \KP(i)+O(1)\le \alpha|u_i|+O(1)$. Since $\chi$ has infinitely many prefixes among $u_i$, we conclude that our $\liminf$ is bounded by $\alpha$. 
\end{proof}

This theorem implies that Martin-L\"of random sequences have dimension $1$ (it is also a direct consequence of the definition); it also allows us to construct easily a sequence of dimension $\alpha$ for arbitrary $\alpha\in(0,1)$ (by adding incompressible strings to increase the complexity of the prefix and strings of zeros to decrease it when needed). 

\begin{problems}
\section{Problems}

1. Let $k_n$ be average complexity of binary strings of length $n$:
        $$
k_n = \left[ \sum_{|x|=n} K(x)\right]/2^n.
        $$
Prove that $k_n=n+O(1)$ (i.e., $|k_n-n|<c$ for some $c$ and all $n$).

2. Prove that for a Martin-L\"of random sequence
$a_0a_1a_2a_3\dots$ the set of all $i$ such that $a_i=1$ is not
enumerable (there is no program that generates elements of this
set).

3. (Continued) Prove the same result for Mises--Church random
sequences.

4. String $x=yz$ of length $2n$ is incompressible: $\KS(x)\ge 2n$;
   strings $y$ and $z$ have length $n$. Prove that $\KS(y),\KS(z)\ge
   n-O(\log n)$. Can you improve this bound and show that $\KS(y),\KS(z)\ge n -O(1)$?

5. (Continued) Is the reverse statement (if $y$ and $z$ are
   incompressible, then $\KS(yz)=2n+O(\log n)$) true?

6. Prove that if $\KS(y|z)\ge n$ and $\KS(z|y)\ge n$ for strings $y$
   and $z$ of length $n$, then $\KS(yz)\ge 2n-O(\log n)$.

7. Prove that if $x$ and $y$ are strings of length $n$ and
   $\KS(xy)\ge 2n$, then the length of every common subsequence
   $u$ of $x$ and $y$ does not exceed $0.99n$. (A string $u$ is
   a subsequence of a string $v$ if $u$ can be obtained from $v$
   by deleting some terms. For example, $111$ is a subsequence
   of $010101$, but $1110$ and $1111$ are not.)

8. Let $a_0a_1a_2\dots$ and $b_0b_1b_2\dots$ be Martin-L\"of
   random sequences and $c_0c_1c_2\dots$ be a computable
   sequence. Can the sequence $(a_0\oplus b_0)(a_1\oplus
   b_1)(a_2\oplus b_2)\dots$ be non-random? (Here $a\oplus
   b$ denotes $a+b\bmod 2$.)
   The same question for $(a_0\oplus c_0)(a_1\oplus
   c_1)(a_2\oplus c_2)\dots$

9. True or false: $\KS(x,y)\le \KP(x)+\KS(y)+O(1)$?

10. Prove that for every $c$ there exists $x$ such that
    $\KP(x)-\KS(x)>c$.

11. Let $m(x)$ be a priori probability of string $x$. Prove that the
    binary representation of real number $\sum_x m(x)$ is a
    Martin-L\"of random sequence.

12. Prove that $\KS(x)+\KS(x,y,z)\le \KS(x,y)+\KS(x,z)+O(\log n)$ for
    strings $x,y,z$ of length at most $n$.

13. (Continued) Prove a similar result for prefix complexity
    with $O(1)$ instead of $O(\log n)$.
\end{problems}

\textbf{Acknowledgements}. This survey is based on the lecture notes of a course given in Uppsala University. The author's visit there was supported by STINT foundation. The author is grateful to all participants of Kolmogorov seminar (Moscow) and members of the ESCAPE group (Marseille, Montpellier).

The preparation of this survey was supported in part by the 
EMC ANR-09-BLAN-0164 and RFBR 12-01-00864 grant.

\small
\tableofcontents

\end{document}